\definecolor{BC}{HTML}{0000aa}
\definecolor{GV}{HTML}{aa0000}
\newcommand{\customlabel}[2]{%
   \protected@write \@auxout {}{\string \newlabel {#1}{{#2}{\thepage}{#2}{#1}{}} }%
   \hypertarget{#1}{#2}%
}
\newcommand{\customlabelquiet}[2]{%
   \protected@write \@auxout {}{\string \newlabel {#1}{{#2}{\thepage}{#2}{#1}{}} }%
   \hypertarget{#1}{}%
}
\renewcommand{\tilde}{\widetilde}
\renewcommand\epsilon{\varepsilon}
\newcommand\extlag{n}
\newcommand\F{\mathbb{F}_q}
\newcommand\LL{\mathbb{F}_{q^r}}
\newcommand\Fq{\mathbb{F}_q}
\newcommand\Fn{\mathbb{F}_{q^\extlag}}
\newcommand\LLn{\mathbb{F}_{q^{\extlag r}}}
\newcommand\R{\mathbb{R}}
\newcommand\Q{\mathbb{Q}}
\newcommand\Z{\mathbb{Z}}
\newcommand\N{\mathbb{N}}
\newcommand\m{\mathfrak{m}}
\newcommand\BB{\mathcal{B}}
\newcommand\CC{\mathcal{C}}
\newcommand\HH{\mathcal{H}}
\newcommand\Gal{\mathrm{Gal}}
\newcommand\End{\mathrm{End}}
\newcommand\Hom{\mathrm{Hom}}
\newcommand\inv{\mathrm{inv}}
\newcommand\ua{\boldsymbol{a}}
\newcommand\ub{\boldsymbol{b}}
\newcommand\uu{\boldsymbol{u}}
\newcommand\uv{\boldsymbol{v}}
\newcommand\uw{\boldsymbol{w}}
\newcommand\ue{\boldsymbol{e}}
\newcommand\uX{\boldsymbol{X}}
\newcommand\uZ{\boldsymbol{Z}}
\newcommand\utheta{\boldsymbol{\theta}}
 \newcommand\rk{\mathrm{rank}}
\newcommand\srk{\mathrm{srk}}
\newcommand\ie{\textit{i.e.~}}
\newcommand\eg{\textit{e.g.~}}
 \newcommand\mindist{d}
  \newcommand\degpol{c}
 \newcommand\numvar{m}
\newcommand\rgcd{\text{\rm rgcd}}
\newcommand\Nrd{N_{\text{\rm rd}}}
\newcommand\ord{\text{\rm ord}}
\newcommand\Homgrp{\text{\rm Hom}_{\text{\rm grp}}}
\newcommand\LAG{\text{\rm LAG}}
\newcommand\LRM{\text{\rm LRM}}
\newcommand\Card{\text{\rm Card}}
\newcommand\Supp{\text{\rm Supp}}
\title{Reed--Muller codes in the sum-rank metric}
\author{Elena Berardini}
\address{CNRS; IMB, Université de Bordeaux, 351 cours de la Libération\\33405 Talence, France\\
\email{elena.berardini@math.u-bordeaux.fr}}
\author{Xavier Caruso}
\address{CNRS; IMB, Université de Bordeaux, 351 cours de la Libération\\33405 Talence, France\\
\email{xavier.caruso@normalesup.org}}
\begin{document}

\maketitle
\begin{abstract}
We introduce the sum-rank metric analogue of Reed--Muller codes, which we called linearized Reed--Muller codes, using multivariate Ore polynomials. We study the parameters of these codes, compute their dimension and give a lower bound  for their minimum distance. Our codes exhibit quite good parameters, respecting a similar bound to Reed--Muller codes in the Hamming metric. Finally, we also show that many of the newly introduced linearized Reed--Muller codes can be embedded in some linearized Algebraic Geometry codes, recently defined in \cite{BC24}, a property which could turn out to be useful in light of decoding.

\end{abstract}

\bigskip

\begin{center} \emph{Dedicated to Sudhir Ghorpade for his 60$^{\text{th}}$ birthday.}
\end{center}
\keywords{sum-rank metric codes, evaluation codes, multivariate Ore polynomials, finite fields}
\ccode{2020 Mathematics Subject Classification: 11T71, 94B05, 16U20}

\section{Introduction}

Error-correcting codes are powerful tools for securing data transmission 
over unreliable and noisy channels. Traditionally, one wants to protect data
against bit erasure or bit flipping, which leads to consider the so-called 
Hamming distance. Nonetheless, for some specific applications (\eg 
transmission of a message through a network in which some 
servers can be down or malicious) another metric is more relevant 
than the classical Hamming one: it is the rank metric.
Interpolating between those, one finds the sum-rank metric; it
was recently introduced and now finds applications 
in many areas of information theory, such as 
multi-shot linear network coding, space-time coding, and distributed 
storage systems (one can consult \eg \cite{MSK22} for an overview of all 
these applications), and consequently have attracted significant 
attention of researchers from different fields. However, in contrast 
with the situation of codes in the other aforementioned metrics, 
particularly the Hamming one, only a few constructions of codes in 
the sum-rank metric are known and have been thoroughly studied.

Let $\F$ be a finite field with $q$ elements, and let $\LL$ be an extension of degree~$r$ of it. Classically, codes in the sum-rank metric are defined as subspaces of the product of some spaces of matrices with coefficients in $\Fq$. In particular, one can consider both $\F$-linear and $\LL$-linear subspaces. In the present paper, we will only deal with $\LL$-linear codes, and take the point of view of spaces of endomorphisms rather than spaces of matrices. In this context, sum-rank metric codes are defined as follows. For an integer $s$, set 
\[\HH :=  \prod_{s} \End_{\F}(\LL).\]
This is a vector space over $\LL$, of dimension $s r$. Let $\boldsymbol{\varphi}=(\varphi_1,\dots,\varphi_s)\in\HH$. The sum-rank weight of $\boldsymbol{\varphi}$ is defined as \[w_{\srk}(\boldsymbol{\varphi})\coloneqq\sum_{i=1}^s \rk (\varphi_i)=\sum_{i=1}^s \dim_{\F} \varphi_i(\LL).\]
The sum-rank distance between $\varphi$ and $\psi\in \HH$ is 

\[d_{\srk}(\boldsymbol{\varphi},\boldsymbol{\psi})\coloneqq w_{\srk}(\boldsymbol{\varphi}-\boldsymbol{\psi}). \]

 \begin{definition}
A ($\LL$--linear) code $\mathcal{C}$ in the sum-rank metric is a $\LL$--linear subspace of 
$\HH$ endowed with the sum-rank distance.
By definition, its \emph{length} $n$ is $\dim_{\LL} \HH = s r$. Its 
\emph{dimension} $k$ is $\dim_{\LL} \mathcal{C}$. Its \emph{minimum 
distance} is 
\[\mindist\coloneqq\min \left\{w_{\srk}(\boldsymbol{\varphi}) \mid 
\boldsymbol{\varphi} \in \mathcal{C}, \boldsymbol{\varphi}\neq 
\boldsymbol{0}\right\}.\]
\end{definition}
The three main parameters of a code in the sum-rank metric are related by the equivalent of the 
Singleton bound in the Hamming metric, that in the aforementioned 
setting reads $\mindist + k \leq n + 1$ \cite[Proposition 34]{MP18}. Codes
with parameters attaining this bound are called \emph{Maximum Sum-Rank 
Distance (MSRD)}. 
Let us mention that if $r =1$, the previous definition 
reduces to codes of length $s$ with the Hamming metric and, if $s = 1$, to rank-metric codes. This highlights why the sum-rank metric is considered as a generalization of both metrics. For a comprehensive overview on sum-rank metric codes we refer the reader to \cite{GMPS23}.
\smallskip

As for rank-metric codes, a central question in the study of  sum-rank metric codes is to find constructions analogue to the existing ones in the Hamming metric. The counterpart of Reed--Solomon codes in the sum-rank metric are the so-called linearized Reed--Solomon codes \cite{MP18}, whose construction relies on the use of Ore polynomials. Algebraic Geometry codes in the sum-rank metric were recently introduced by the authors \cite{BC24}, using again Ore polynomials but with coefficients in the function field of an algebraic curve. Among the most used families of linear codes in the Hamming metric, Reed--Muller codes \cite{Muller54,Reed54} constitute a widely studied class which however does not have its analogue in the sum-rank metric yet.  The main goal of the present paper is to fill this gap. 

\subsection*{Our contribution} In this paper we present the first analogue of Reed--Muller codes in the sum-rank metric, that we call linearized Reed--Muller codes. Classical Reed--Muller codes are constructed by evaluating multivariate polynomials at elements of an extension of $\Fq$. They are sometimes called affine Reed--Muller codes, in contraposition with projective Reed--Muller codes \cite{Lachaud88,Sor91}, where one evaluates homogenous polynomials over the elements of a projective space. Both families of Reed--Muller codes are fairly well-studied (see \cite{KLP68,DGMW70} for affine Reed--Muller codes and \cite{Lachaud90,GL23} for projective ones). In particular, computing the dimension of such codes boils down to computing the dimension of some space of polynomials of bounded degree, while for studying the minimum distance one needs to control the number of zeroes of multivariate polynomials. In the affine case this is a classical result \cite[Theorem 6.13]{LNbook}, while in the projective space the answer was given by Serre who proved a conjecture of Tsfasman \cite{Serre89}.

Coming back to the sum-rank metric, it is natural for constructing the analogue of Reed--Muller codes to look into multivariate Ore polynomials of bounded total degree. Therefore, firstly, we develop the theory of multivariate Ore polynomials and their evaluation. 
Secondly, we exploit this theory to propose the analogue of Reed--Muller codes in the sum-rank metric and study their parameters. Similarly to the Hamming case, the dimension is easily given by counting the number of monomials of a fixed degree, while to study the minimum distance we need to control the sum of the dimensions of the kernels of evaluations of multivariate Ore polynomials. This bound is proved in Theorem \ref{th:boundzeroesOre}; besides its application to linearized Reed--Muller codes, we believe it is interesting in itself. After giving the parameters of linearized Reed--Muller codes (Theorem \ref{th:codeparameters}), we prove in addition that by allowing some flexibility in the construction, we can obtain a larger panel of codes with better parameters estimations (Theorem \ref{th:codeparameters2}). Among all those codes, we get the best parameters when considering the ``almost commutative" case which corresponds to the Ore polynomial algebra $ \LL[X_1, \ldots, X_{\numvar-1}][X_\numvar; \Phi]$, where the only noncommutative variable is the last one. Our last contribution is to show in Theorem \ref{th:embedLAG} that, in many cases, linearized Reed--Muller codes embed in some linearized Algebraic Geometry (LAG) codes as introduced in \cite{BC24}. This could turn to be crucial  to decode the newly introduced linearized Reed--Muller codes as soon as a decoding algorithm for LAG codes will be available.

\smallskip
Finally, let us point out that several attempts have been made to generalise the construction of Reed--Muller codes using multivariate Ore polynomials. In \cite{GU19}, the authors introduced a notion of evaluation of multivariate Ore polynomials making use of noncommutative Gröbner basis, and defined skew Reed--Muller codes in the Hamming metric. However, the full study of the parameters of these codes is not provided. Another analogue of Reed--Muller codes, this time in the \emph{rank} metric, was introduced in \cite{ACLN21}. The first evident difference with the present paper is that we consider the sum-rank metric. More interestingly, we work over a finite field (for practical applications to coding theory) whereas the main motivation in  \cite{ACLN21} is to provide constructions in general abelian extensions. The latter falls in the finite fields setting only when the considered extension is cyclic, which was not the main case of interest in \cite{ACLN21}.  
\subsection*{Organisation of the paper}
Section~\ref{sec:ore} is devoted to the theory of rings of multivariate Ore polynomials and their evaluation. In Section~\ref{sec:thecode}, we introduce linearized Reed--Muller codes by evaluating multivariate Ore polynomials, and we study their parameters. Here we also show how one can improve on the parameters, and provide an example of our construction. Finally, in Section~\ref{sec:LAGcodes}, we outline the construction of linearized Algebraic Geometry codes introduced in \cite{BC24}, and prove that, in many cases, our new linearized Reed--Muller codes can be embedded in some LAG codes.

\section{Multivariate Ore polynomial rings}\label{sec:ore}
The algebra of univariate Ore polynomials was introduced by Ore in 1933 \cite{Ore33}. Its theory has been extensively studied, and exploited in algebraic and geometric rank and sum-rank metric codes. In this section, we partially develop the theory of multivariate Ore polynomials. 
Although similar questions were addressed before \cite{GU19,MP22}, they were not suitable for our purposes. On the one hand, the notion of evaluation developed in \cite{GU19} naturally leads to scalar values, hence to codes in the Hamming metric. On the other hand, the approach in \cite{MP22} is very general (as it does not require the variables to commute) and has not resulted in a construction of codes with controlled parameters so far.

We  refer the reader to \cite{Reiner75} for the theory of central simple algebras, and to \cite[Chapter III]{Lang3} and \cite{Eisenbud13} for classical results on modules and more in general in commutative algebra, which are used without reference in what follows.
\smallskip

Throughout the article, we let $\F$ be a finite field with $q$ elements, and $\LL$ be an extension of degree $r>0$. We consider $\Phi: \LL\to\LL$ to be the $q$-Frobenius endomorphism $x\mapsto x^q$.
For $\ue=(e_1,\dots,e_\numvar)\in\Z^\numvar$ such that $\mathrm{gcd}(e_1,\dots,e_\numvar,r)=1$, we consider the ring $\LL[X_1,\dots,X_\numvar; \Phi^{e_1},\dots,\Phi^{e_\numvar}]$ of multivariate Ore polynomials with usual sum, and multiplication given by 
\begin{align*}
&X_i\cdot X_j = X_j\cdot X_i,\\
&X_i \cdot a=\Phi^{e_i}(a) \cdot X_i, \quad\forall a\in\LL.
\end{align*}
We remark that this actually defines a ring since the $\Phi^{e_i}$ pairwise commute. For simplicity, we set $\theta_i=\Phi^{e_i}$, and write 
$\LL[\uX;\utheta]$ for $\LL[X_1,\dots,X_\numvar; 
\Phi^{e_1},\dots,\Phi^{e_\numvar}]$.

In what follows, we will often need to invert the variables 
$X_i$; this is possible because $\Phi$ is invertible, so that we can
extend the commutative relations to $X_i^{-1}$ by setting
$$X_i^{-1} \cdot a=\Phi^{-e_i}(a) \cdot X_i^{-1}, \quad\forall a\in\LL.$$
The resulting ring is denoted by 
$\LL[X_1^{\pm 1},\dots,X_\numvar^{\pm 1}; \Phi^{e_1},\dots,\Phi^{e_\numvar}]$, 
which we abbreviate as $\LL[\uX^{\pm 1};\utheta]$. 
For $\uu = (u_1, \ldots, u_\numvar) \in \Z^\numvar$, we also use the short
notation $\uX^{\uu}$ for the monomial $X_1^{u_1} \cdots X_\numvar^{u_\numvar}
\in \LL[\uX^{\pm 1};\utheta]$.
It is an easy computation to check that the general commutation
relation between a monomial and a scalar reads
$$\uX^{\uu} \cdot a = \Phi^{\ue\cdot\uu}(a) \cdot \uX^{\uu}, \quad
\forall \uu \in \Z^\numvar,\, a \in \LL$$
where, by definition, $\ue{\cdot}\uu=e_1u_1+\dots+e_\numvar u_\numvar$ is the
scalar product of $\ue$ and $\uu$.

\subsection{Evaluation of multivariate Ore polynomials}
\label{subsec:evalOre}

In the classical case, evaluation of polynomials are defined by giving
some value to the indeterminate.
In the Ore setting, we will not substitute scalar values but matrices
(or equivalently, linear maps); this is a crucial difference which
allows somehow to ``keep track'' of the non-commutativity.

It turns out nevertheless that Ore evaluation (or, more generally,
noncommutative evaluation) meets classical evaluation when one 
restricts to the centre.
Recall that the centre of a noncommutative ring $A$ is, by definition,
the subset $Z$ of $A$ consisting of elements $z \in A$ such that
$az = za$ for all $a \in A$. It is a commutative subring of $A$.
Besides, any ``matrix'' evaluation morphism $\varepsilon : A 
\to M_n(F)$ (for a certain field $F$) induces a ring homomorphism 
from $Z$ to the centre of $M_n(F)$, which is $F$. We then get an
evaluation morphism in the classical sense, taking values in a field.

Before defining evaluation of multivariate Ore polynomials, it is
therefore important to determine the centre of $\LL[\uX^{\pm 1};
\utheta]$.
For this, we introduce the following lattice:
\[L=\big\{\,\uu=(u_1,\dots,u_\numvar)\in\Z^\numvar \,|\, \ue{\cdot}\uu\in r\Z\,\big\}.\]
Note that $L$ is the kernel of the morphism $\Z^\numvar\to\Z/r\Z$, $\uu\mapsto \ue{\cdot}\uu$ which is surjective since $\mathrm{gcd}(e_1, \ldots, e_\numvar,r)=1$. So we get canonical isomorphisms $\Z^\numvar/L\simeq \Z/r\Z\simeq \Gal(\LL/\Fq)$. We define 
$$\F[\uX^{L}]=\left\{\,\sum_{\uu\in L} a_{\uu} X^{\uu} \text{ (finite sum)}\,|\, a_{\uu}\in\F\,\right\}.$$

\begin{proposition}
The centre of $\LL[\uX^{\pm 1};\utheta]$ is $\F[\uX^{L}]$.
\end{proposition}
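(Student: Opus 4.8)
The plan is to use the fact that $\LL[\uX^{\pm 1};\utheta]$ is free as a left $\LL$-module with basis the monomials $\uX^{\uu}$, $\uu\in\Z^\numvar$, so that every element has a unique expression as a finite sum $P=\sum_{\uu}a_{\uu}\uX^{\uu}$ with $a_{\uu}\in\LL$. Since $\LL$ and the variables $X_i$ generate the ring as an $\F$-algebra, and since an element commutes with an invertible element exactly when it commutes with its inverse, an element $P$ lies in the centre $Z$ if and only if it commutes with every scalar $a\in\LL$ and with each $X_i$. The proof then amounts to translating these two conditions into conditions on the coefficients $a_{\uu}$.

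First I would dispatch the inclusion $\F[\uX^L]\subseteq Z$. If $\uu\in L$ then $r\mid\ue{\cdot}\uu$, hence $\uX^{\uu}\cdot a=\Phi^{\ue\cdot\uu}(a)\cdot\uX^{\uu}=a\cdot\uX^{\uu}$ for every $a\in\LL$; moreover $\uX^{\uu}$ commutes with every $X_i$ because the variables commute, and any scalar in $\F$ is fixed by $\Phi$, hence commutes with every $X_i$ too. Thus each monomial $a_{\uu}\uX^{\uu}$ with $a_{\uu}\in\F$ and $\uu\in L$ is central, and so is any $\F$-linear combination of such monomials.

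For the reverse inclusion I would start from an arbitrary central $P=\sum_{\uu}a_{\uu}\uX^{\uu}$. Using $X_i\cdot\uX^{\uu}=\uX^{\uu}\cdot X_i$ one computes $X_i\cdot P=\sum_{\uu}\Phi^{e_i}(a_{\uu})\,\uX^{\uu}\cdot X_i$ and $P\cdot X_i=\sum_{\uu}a_{\uu}\,\uX^{\uu}\cdot X_i$; since the monomials $\uX^{\uu}\cdot X_i$ are pairwise distinct, $X_iP=PX_i$ forces $\Phi^{e_i}(a_{\uu})=a_{\uu}$ for all $\uu$ and all $i$. Hence each $a_{\uu}$ lies in the subfield of $\LL$ fixed by all the $\Phi^{e_i}$, namely $\bigcap_{i=1}^{\numvar}\mathbb{F}_{q^{\mathrm{gcd}(e_i,r)}}=\mathbb{F}_{q^{\mathrm{gcd}(e_1,\dots,e_\numvar,r)}}=\F$. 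Similarly, for $a\in\LL$ one has $a\cdot P=\sum_{\uu}a\,a_{\uu}\,\uX^{\uu}$ and $P\cdot a=\sum_{\uu}a_{\uu}\Phi^{\ue\cdot\uu}(a)\,\uX^{\uu}$, so $aP=Pa$ forces $a_{\uu}\bigl(a-\Phi^{\ue\cdot\uu}(a)\bigr)=0$ for every $\uu$; as $\LL$ is a field, whenever $a_{\uu}\neq0$ this gives $\Phi^{\ue\cdot\uu}(a)=a$ for all $a\in\LL$, that is $\Phi^{\ue\cdot\uu}=\mathrm{id}_{\LL}$, equivalently $r\mid\ue{\cdot}\uu$, equivalently $\uu\in L$. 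Putting the two constraints together yields $P=\sum_{\uu\in L}a_{\uu}\uX^{\uu}$ with all $a_{\uu}\in\F$, i.e. $P\in\F[\uX^L]$.

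All these computations are routine once the commutation relations are unwound on the monomial basis; the only step deserving genuine care is the field-theoretic identity $\bigcap_{i}\mathbb{F}_{q^{\mathrm{gcd}(e_i,r)}}=\mathbb{F}_{q^{\mathrm{gcd}(e_1,\dots,e_\numvar,r)}}$ inside $\LL$, which is precisely where the standing hypothesis $\mathrm{gcd}(e_1,\dots,e_\numvar,r)=1$ enters, forcing the common fixed field to collapse to $\F$.
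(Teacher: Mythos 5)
Your proof is correct and follows essentially the same route as the paper's: expand a central element on the monomial basis, use commutation with each $X_i$ to force the coefficients into $\F$ (via the fixed-field identity relying on $\gcd(e_1,\dots,e_\numvar,r)=1$), and use commutation with scalars $a\in\LL$ to force the support into $L$. The only additions are the explicit verification of the easy inclusion $\F[\uX^L]\subseteq Z$ and the reduction to checking commutation on generators, both of which the paper leaves implicit.
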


\begin{proof}
The proof is analogue to the one for univariate Ore polynomials. Let $P=\sum_{\uu\in\Z^\numvar} a_{\uu} \uX^{\uu}$ be an element in the centre. For $i\in\{1,\dots,\numvar\}$, let $\ub_i$ be the $i$-th vector of the standard basis. Then
\[0=P\cdot X_i- X_i \cdot P = \sum_{\uu\in\Z^\numvar} (a_{\uu}-\theta_i(a_{\uu})) \uX^{\uu+\ub_i}.\]
Therefore, we must have $a_{\uu}=\theta_i(a_{\uu})$ for any $i$. Since $\theta_i=\Phi^{e_i}$ and $\mathrm{gcd}(e_1, \ldots, e_\numvar,r)=1$, we entail $a_{\uu}\in\F$. Now, take $a\in\LL$. Then
\[0=P\cdot a- a\cdot P = \sum_{\uu\in\Z^\numvar} a_{\uu}(\Phi^{\ue\cdot\uu}(a)-a) \uX^{\uu}.\]
Therefore, we must have $ \Phi^{\ue\cdot\uu}(a)=a$ for all $a\in \LL$. Hence $\ue{\cdot}\uu\in r\Z$, that is $\uu\in L$.
\end{proof}

Let us now consider a ring homomorphism
$\varepsilon : \F[\uX^{L}] \to \F$. It is of the form
\[\begin{array}{c r c l}
\epsilon_\gamma: & \F[\uX^L] & \rightarrow & \F \smallskip \\
& \sum_{\uu\in L} a_{\uu}\uX^{\uu} & \mapsto & \sum_{\uu\in L} a_{\uu} \gamma(\uu),
\end{array}\]
where $\gamma :L\to \F^\times$ is a group morphism. Our goal is to extend
$\epsilon_\gamma$ to a second ring homomorphism $\LL[\uX^{\pm 1};\utheta]
\to \End_{\F}(\LL) \simeq M_r(\F)$. We will search the latter among
morphisms of the form
\[\begin{array}{c r c l}
\epsilon_{\tilde \gamma}: 
& \LL[\uX^{\pm 1}; \utheta] & \rightarrow & \End_{\F}(\LL) \smallskip \\
& \sum_{\uu\in \Z^\numvar} a_{\uu}\uX^{\uu} & \mapsto  &
  \sum_{\uu\in \Z^\numvar} a_{\uu} \tilde \gamma(\uu) \Phi^{\ue\cdot\uu}
\end{array}\]
where $\tilde \gamma : \Z^\numvar \to \LL^\times$ is a function extending~$\gamma$.
One checks that $\epsilon_{\tilde \gamma}$ is a ring homomorphism if and
only if $\tilde \gamma$ satisfies the following property
\begin{equation}\label{cocycle}
\forall \uu, \uv \in \Z^\numvar, \quad
\tilde \gamma (\uu + \uv)= 
  \tilde \gamma (\uu) \cdot \Phi^{\ue \cdot\uu}(\tilde \gamma(\uv)).
\end{equation}

\begin{lemma}
\label{lem:cocycle}
The function $\gamma$ extends to a function $\tilde \gamma : \Z^\numvar \to \LL^\times$
satisfying the axiom~\eqref{cocycle}.
\end{lemma}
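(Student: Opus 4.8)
The condition \eqref{cocycle} says precisely that $\tilde\gamma$ must be a (twisted) $1$-cocycle for the action of $\Z^\numvar$ on $\LL^\times$ given by $\uu\mapsto\Phi^{\ue\cdot\uu}$. Since $\Phi^{\ue\cdot\uu}=\mathrm{id}_\LL$ whenever $\uu\in L$, the given datum $\gamma$ is nothing but a cocycle on the subgroup $L$ (for the trivial action), and what must be shown is that it extends to a cocycle on all of $\Z^\numvar$. The plan is to write down such an extension explicitly, after fixing a convenient set of coset representatives of $L$ in $\Z^\numvar$; the only substantive point will turn out to be the surjectivity of the norm of an extension of finite fields.

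Concretely, recall that $\Z^\numvar/L\simeq\Z/r\Z$ is cyclic; I would fix $\uw\in\Z^\numvar$ whose class generates this quotient and set $d=\ue\cdot\uw$, so that $\gcd(d,r)=1$ and $r\uw\in L$. Every $\uu\in\Z^\numvar$ then has a unique expression $\uu=\ell+j\uw$ with $\ell\in L$ and $0\le j\le r-1$, and for a scalar $a\in\LL^\times$ yet to be chosen I would define
\[\tilde\gamma(\ell+j\uw)\;=\;\Big(\prod_{i=0}^{j-1}\Phi^{id}(a)\Big)\cdot\gamma(\ell).\]
For any $a$ this is a well-defined map $\Z^\numvar\to\LL^\times$ that restricts to $\gamma$ on $L$ (case $j=0$), so all that remains is to pick $a$ making \eqref{cocycle} hold.

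Verifying \eqref{cocycle} for $\uu=\ell+j\uw$ and $\uv=\ell'+j'\uw$ is a routine manipulation, using only that $\Phi$ fixes $\F$ pointwise (so $\gamma(\ell')$ is $\Phi$-invariant), that $\Phi^{\ue\cdot\ell}=\mathrm{id}_\LL$ for $\ell\in L$, and that $\Phi$ has order $r$. When $j+j'<r$ the identity holds automatically; when $j+j'\ge r$ one must trade $r\uw$ for an element of $L$, and the identity then holds if and only if $\prod_{i=0}^{r-1}\Phi^{id}(a)=\gamma(r\uw)$. Because $\gcd(d,r)=1$, the exponents $id\bmod r$ run over all residues, so the left-hand side is exactly the field norm $N_{\LL/\F}(a)$; and since $r\uw\in L$, the prescribed value $\gamma(r\uw)$ lies in $\F^\times$. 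Hence the whole lemma reduces to finding $a\in\LL^\times$ with $N_{\LL/\F}(a)=\gamma(r\uw)$, which is possible since $N_{\LL/\F}\colon\LL^\times\to\F^\times$ is surjective. I expect the only delicate step to be precisely this bookkeeping — arranging the cocycle computation so that it collapses to a single norm equation; the surjectivity of the norm, which is the genuine input making the construction work, is classical.
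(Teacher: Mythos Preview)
Your proof is correct and follows essentially the same approach as the paper: both pick a generator of the cyclic quotient $\Z^\numvar/L$, write an explicit candidate cocycle depending on a parameter $a\in\LL^\times$, and reduce the verification of~\eqref{cocycle} to the single norm equation $N_{\LL/\F}(a)=\gamma(r\uw)$, solved by surjectivity of the norm. The only cosmetic difference is that the paper invokes the structure theorem to choose a full adapted basis of $\Z^\numvar$ and normalizes so that $\ue\cdot\uw\equiv 1\pmod r$, whereas you work directly with an arbitrary generator $\uw$ of the quotient and keep $d=\ue\cdot\uw$ general.
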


\begin{proof}
Since $\Z^\numvar/L$ is isomorphic to $\Z/r\Z$, the theorem of structure of
$\Z$-modules ensures that there exists
a basis $(\uv_1,\dots,\uv_\numvar)$ of $\Z^\numvar$ such that  $(\uv_1,\dots,\uv_{\numvar-1},r\uv_\numvar)$ is a basis of $L$.
We can moreover assume that $\ue{\cdot}\uv_\numvar \equiv 1 \pmod r$.
Let $\alpha \in \LL^\times$ be a preimage of $\gamma(r \uv_\numvar) \in \F^\times$ 
by the norm map $N_{\LL/\F}$, \ie
$$\gamma(r \uv_\numvar) = N_{\LL/\F}(\alpha) = 
\alpha \cdot \Phi(\alpha) \cdots \Phi^{r-1}(\alpha).$$
For $a_1, \ldots, a_\numvar \in \Z$, we write $a_\numvar = q_\numvar r + r_\numvar$ with $1 \leq
r_\numvar \leq r$ and set
$$\tilde \gamma\big(a_1 \uv_1 + \cdots + a_\numvar \uv_\numvar)
= \gamma(\uv_1)^{a_1} \cdots \gamma(\uv_{\numvar-1})^{a_{\numvar-1}} \cdot \gamma(r \uv_\numvar)^{q_\numvar}
\cdot \alpha \cdot \Phi(\alpha) \cdots \Phi^{r_\numvar-1}(\alpha).$$
One finally checks that $\tilde \gamma$ satisfies the requirements of the lemma.
\end{proof}

\begin{remark}
The preimage $\alpha$ is not unique, implying that there are in 
general
many $\tilde \gamma$ extending $\gamma$. However, two appropriate $\alpha$ 
always differ by multiplication by an element of norm $1$, which 
eventually ensures that the morphisms $\epsilon_{\tilde \gamma}$ we get 
at the end of the process are conjugated.
\end{remark}

\begin{remark}
A function $\tilde \gamma$ respecting property \eqref{cocycle} 
is called a \emph{$1$-cocycle}. In fact, Lemma~\ref{lem:cocycle} can 
also be obtained as a consequence of the inflation-restriction exact 
sequence in group cohomology, that in our context reads
\[0\to H^1(\Z^\numvar/L,\,\LL^\times)\to H^1(\Z^\numvar,\,\LL^\times) \to 
\Homgrp(L,\,\F^\times) \to H^2(\Z^\numvar/L,\,\LL^\times).\]
Noting that $\Z^\numvar/L \simeq \Gal(\LL/\F)$, we find that
$H^1(\Z^\numvar/L,\,\LL^\times)$ is trivial by Hilbert 90 Theorem and 
that $H^2(\Z^\numvar/L,\,\LL)$ is the Brauer group of  $\LL/\F$ which is 
trivial too, given that $\F$ is a finite field.
Therefore, we get an isomorphism $H^1(\Z^\numvar,\,\LL^\times) \simeq
\text{Hom}_{\text{grp}}(L,\,\F^\times)$, which means that
any $\gamma\in \text{Hom}_{\text{grp}}(L,\,\F^\times)$ extends to a 
$1$-cocycle  $\tilde \gamma\in H^1(\Z^\numvar,\,\LL^\times)$ which is unique 
up to a $1$-coboundary. We refer the reader to \cite{Serre79} for more details on group cohomology. \end{remark}

\begin{theorem}
\label{th:evalisom}
We keep the above notation.
Let $\mathfrak{m}_\gamma$ be the ideal of $\F[\uX^L]$ generated by 
the $\uX^{\uu} - \gamma(\uu)$ for $\uu\in L$, 
\ie $\m_\gamma = \ker \epsilon_\gamma$.
Then $\epsilon_{\tilde \gamma}$ induces an isomorphism
\[\begin{array}{r c l}
\LL[\uX^{\pm 1};\utheta] /\mathfrak{m}_\gamma \LL[\uX^{\pm 1};\utheta] 
& \stackrel\sim\longrightarrow & \End_{\F}(\LL) \smallskip \\
\sum_{\uu\in \Z^\numvar} a_{\uu}\uX^{\uu} & \mapsto  &
  \sum_{\uu\in \Z^\numvar} a_{\uu} \tilde \gamma(\uu) \Phi^{\ue\cdot\uu}.
\end{array}\]
\end{theorem}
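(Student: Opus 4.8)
The plan is to show that $\epsilon_{\tilde\gamma}$ descends to the quotient, and then that the induced map is an isomorphism by a dimension count combined with injectivity (or surjectivity) over a field. First I would check that $\m_\gamma \LL[\uX^{\pm 1};\utheta]$ lies in the kernel of $\epsilon_{\tilde\gamma}$: since $\epsilon_{\tilde\gamma}$ restricted to the centre $\F[\uX^L]$ agrees with $\epsilon_\gamma$ by construction (the cocycle condition \eqref{cocycle} forces $\tilde\gamma(\uu)\Phi^{\ue\cdot\uu} = \gamma(\uu)\cdot\mathrm{id}$ for $\uu\in L$, because $\ue\cdot\uu\in r\Z$ makes $\Phi^{\ue\cdot\uu}=\mathrm{id}$ and $\tilde\gamma(\uu)=\gamma(\uu)\in\F$), every generator $\uX^{\uu}-\gamma(\uu)$ of $\m_\gamma$ is sent to $0$; as $\m_\gamma$ is a two-sided ideal of the centre, the ideal it generates in $\LL[\uX^{\pm 1};\utheta]$ is killed as well. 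Hence $\epsilon_{\tilde\gamma}$ factors through a ring homomorphism $\bar\epsilon : \LL[\uX^{\pm 1};\utheta]/\m_\gamma\LL[\uX^{\pm 1};\utheta] \to \End_\F(\LL)$.

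Next I would compute the dimension of the source as an $\F$-vector space. Using the basis $(\uv_1,\dots,\uv_\numvar)$ of $\Z^\numvar$ from Lemma~\ref{lem:cocycle}, every element of $\LL[\uX^{\pm 1};\utheta]$ is an $\LL$-linear combination of the monomials $\uX^{a_1\uv_1}\cdots\uX^{a_\numvar\uv_\numvar}$; modulo $\m_\gamma$ one may reduce $X^{\uv_1},\dots,X^{\uv_{\numvar-1}}$ (which lie in $L$, hence in the centre, and are invertible in the quotient after identifying $\uX^{\uv_j}$ with the scalar $\gamma(\uv_j)$) and reduce the exponent of $\uX^{\uv_\numvar}$ modulo $r$ using $\uX^{r\uv_\numvar} = \gamma(r\uv_\numvar)$. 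This shows the quotient is spanned over $\LL$ by $1, \uX^{\uv_\numvar}, \dots, \uX^{(r-1)\uv_\numvar}$, so $\dim_\LL \bigl(\LL[\uX^{\pm 1};\utheta]/\m_\gamma\LL[\uX^{\pm 1};\utheta]\bigr) \le r$, i.e. $\dim_\F \le r^2 = \dim_\F \End_\F(\LL)$. So it suffices to prove $\bar\epsilon$ is surjective, or equivalently injective.

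For surjectivity I would argue that the image of $\bar\epsilon$ is a sub-$\F$-algebra of $\End_\F(\LL) \simeq M_r(\F)$ containing the scalars $\LL$ (image of the constant polynomials) and the automorphism $\tilde\gamma(\uv_\numvar)\Phi^{\ue\cdot\uv_\numvar} = \tilde\gamma(\uv_\numvar)\Phi$ (image of $\uX^{\uv_\numvar}$, using $\ue\cdot\uv_\numvar\equiv 1\bmod r$). Multiplying on the left by the invertible scalar $\tilde\gamma(\uv_\numvar)^{-1}$ we see the image contains $\Phi$; but the sub-$\F$-algebra of $\End_\F(\LL)$ generated by the scalars $\LL$ and $\Phi$ is all of $\End_\F(\LL)$ — this is the classical fact that $\LL[\Phi]$ is the full twisted group algebra $\LL\rtimes\Gal(\LL/\F)$, which is isomorphic to $M_r(\F)$ (normal basis theorem / Artin's linear independence of characters shows $1,\Phi,\dots,\Phi^{r-1}$ are $\LL$-linearly independent, so $\dim_\F\LL[\Phi] = r^2$). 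Hence $\bar\epsilon$ is surjective, and by the dimension count it is an isomorphism. The main obstacle, and the step deserving the most care, is verifying cleanly that $\epsilon_{\tilde\gamma}$ kills the \emph{left} ideal generated by $\m_\gamma$ — one needs that central elements can be pulled across arbitrary monomials, which is exactly why we worked with the centre $\F[\uX^L]$ in the first place — together with making the reduction argument for the upper bound $\dim_\LL \le r$ rigorous (i.e. that the relations $\uX^{\uv_j} = \gamma(\uv_j)$ and $\uX^{r\uv_\numvar} = \gamma(r\uv_\numvar)$ really do generate enough of $\m_\gamma\LL[\uX^{\pm1};\utheta]$ to collapse the module to rank $r$).
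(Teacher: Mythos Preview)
Your proposal is correct and follows essentially the same route as the paper: Artin's linear independence of characters gives surjectivity of $\epsilon_{\tilde\gamma}$ onto $\End_{\F}(\LL)$, and a dimension count over $\F$ forces injectivity on the quotient. You have simply spelled out in detail two points the paper leaves implicit, namely why $\epsilon_{\tilde\gamma}$ kills $\m_\gamma\LL[\uX^{\pm1};\utheta]$ and why the quotient has $\F$-dimension at most $r^2$.
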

\begin{proof}
By Artin's theorem on independence of characters the family $\{\text{Id},\Phi,\dots\Phi^{r-1}\}$ generates $\End_{\F}(\LL)$ over $\LL$, whence the surjectivity. Injectivity follows by comparing the dimensions over $\F$.
\end{proof}

\subsection{Reduced norm}

A fundamental tool that makes the connection between Ore polynomials and 
classical polynomials is the reduced norm. Concretely, it takes the form 
of a multiplicative map from $\LL[\uX^{\pm 1};\utheta]$ to its centre 
$\F[\uX^L]$.

In order to define it, it is convenient to introduce intermediate
rings between $\F[\uX^L]$ and $\LL[\uX^{\pm 1};\utheta]$.
In what follows, we shall consider two of them, namely $\CC_1 = 
\F[\uX^{\pm 1}]$ and $\CC_2 = \LL[\uX^L]$. We observe that both of
them are commutative and endow $\LL[\uX^{\pm 1};\utheta]$ with a
structure of $\CC_i$-module ($i = 1, 2$): for $c \in \CC_i$ and 
$f \in \LL[\uX^{\pm 1};\utheta]$, the outcome of the action of $c$ on
$f$ is simply the product $cf$ computed in $\LL[\uX^{\pm 1};\utheta]$.
We note that $\LL[\uX^{\pm 1};\utheta]$ is free of rank $r$ over 
$\CC_1$ and $\CC_2$. In the former case, a basis is given by a
basis of $\LL$ over $\F$ while, in the latter, it is formed 
by the $\uX^{\uu}$ where $\uu$ runs over a set of representatives 
of $\Z^\numvar/L$.

For $i = 1, 2$, we define the norm map 
$N_i : \LL[\uX^{\pm 1};\utheta] \to \CC_i$ as follows. Given a Ore
polynomial $f \in \LL[\uX^{\pm 1};\utheta]$, we consider the map
\[\begin{array}{c r c l}
\mu_f : & \LL[\uX^{\pm 1};\utheta] & \rightarrow & \LL[\uX^{\pm 1};\utheta] \\
& g & \mapsto & gf
\end{array}\]
and view it as a $\CC_i$-linear endomorphism. We then set
$N_i(f) = \det_{\CC_i} (\mu_f)$. We note that, working in the bases we 
have mentioned earlier, it is possible to write down explicitly the 
matrix of $\mu_f$. This provides an efficient method for computing the 
maps $N_1$ and $N_2$.

\begin{theorem}
\label{th:Nrd}
For all $f \in \LL[\uX^{\pm 1};\utheta]$, we have $N_1(f) = N_2(f)
\in \F[\uX^L]$.
\end{theorem}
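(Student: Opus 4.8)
The plan is to prove the two assertions separately: first that $N_1(f) = N_2(f)$, and then that this common value lies in the centre $\F[\uX^L]$.

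\smallskip

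\textbf{Step 1: Reduction to a field extension.} First I would pass to fraction fields. The ring $\LL[\uX^{\pm 1};\utheta]$ is an Ore domain, so it admits a division ring of fractions $D$; likewise $\CC_1 = \F[\uX^{\pm 1}]$ and $\CC_2 = \LL[\uX^L]$ are domains with fraction fields $K_1 = \F(\uX)$ and $K_2 = \LL(\uX^L)$ respectively, and $\F[\uX^L]$ has fraction field $K_0 = \F(\uX^L)$. Since $\LL[\uX^{\pm 1};\utheta]$ is free of rank $r$ over each $\CC_i$ and the formation of the determinant of $\mu_f$ commutes with base change, $N_i(f)$ computed in $\CC_i$ agrees with the field-norm $N_{D/K_i}(f)$ in the sense of the reduced norm of the central simple algebra $D$ over $K_0$. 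Concretely, $D$ is a central division algebra over $K_0 = \F(\uX^L)$ of degree $r$, and $K_1, K_2$ are both maximal subfields (each of degree $r$ over $K_0$, and commutative). The key general fact from the theory of central simple algebras (see \cite{Reiner75}) is that the determinant of left multiplication $\mu_f$ viewed as a $K_i$-linear map equals $\Nrd_{D/K_0}(f)$, independently of the choice of maximal subfield $K_i$. This gives $N_1(f) = N_2(f) = \Nrd_{D/K_0}(f)$ in one stroke, and simultaneously shows the common value lies in $K_0 = \F(\uX^L)$.

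\smallskip

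\textbf{Step 2: Integrality, to land in $\F[\uX^L]$ rather than just its fraction field.} It remains to upgrade ``lies in $K_0$'' to ``lies in $\F[\uX^L]$''. For $f \in \LL[\uX^{\pm 1};\utheta]$, the element $N_1(f) = \det_{\CC_1}(\mu_f)$ is visibly in $\CC_1 = \F[\uX^{\pm 1}]$ by its very definition (the matrix of $\mu_f$ in an $\F$-basis of $\LL$ has entries in $\F[\uX^{\pm 1}]$), and similarly $N_2(f) \in \CC_2 = \LL[\uX^L]$. Therefore the common value lies in $\CC_1 \cap \CC_2 \subseteq \F(\uX) \cap \LL(\uX^L)$. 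Inside $\LL[\uX^{\pm 1}]$ one checks directly that $\F[\uX^{\pm 1}] \cap \LL[\uX^L] = \F[\uX^L]$: writing a Laurent polynomial with $\LL$-coefficients, membership in $\F[\uX^{\pm 1}]$ forces the coefficients into $\F$, while membership in $\LL[\uX^L]$ forces the exponent vectors into $L$. Combining, $N_1(f) = N_2(f) \in \F[\uX^L]$, as claimed.

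\smallskip

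\textbf{Main obstacle.} The delicate point is Step 1: one must justify that the naive determinant $\det_{\CC_i}(\mu_f)$ really computes the reduced norm of the central simple algebra, and in particular that the two choices of maximal subfield give the same answer. This requires knowing that $D \otimes_{K_0} K_i \simeq \End_{K_i}(D)$ — i.e.\ that $K_i$ is a splitting field — which follows because $K_i/K_0$ is a separable (indeed, here one should check it is even Galois, or at least separable) field extension of degree equal to the degree of $D$, hence a maximal subfield, hence a splitting field. For $K_2 = \LL(\uX^L)/\F(\uX^L)$ this is clear since $\LL/\F$ is Galois; for $K_1 = \F(\uX)/\F(\uX^L)$ one uses that $\uX^{\pm1}$ over $\uX^L$ mirrors the Galois action, via the isomorphism $\Z^\numvar/L \simeq \Gal(\LL/\F)$ already recorded in the excerpt, so this extension is also Galois with group $\Z^\numvar/L$. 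Once both are identified as splitting fields, the standard reduced-norm formalism (the reduced norm is the unique extension of $N_{K_i/K_0}$ and equals $\det$ of multiplication after any splitting base change) closes the argument. An alternative, more elementary route that avoids invoking the full machinery would be to exhibit an explicit common refinement: embed both $\CC_1$ and $\CC_2$ into the larger commutative ring $\LL[\uX^{\pm 1}]$ (on which $\LL[\uX^{\pm1};\utheta]$ is free of rank $r$ as well, with $\Phi$-twisted module structure trivialised after scalar extension), compute $\det$ there, and check it matches both $N_1(f)$ and $N_2(f)$ by a direct matrix computation using the respective bases; I expect this to be the cleanest path if one wants to stay self-contained.
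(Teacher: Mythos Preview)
Your proposal is correct and follows essentially the same strategy as the paper; indeed, the paper's proof is precisely the ``alternative, more elementary route'' you sketch at the end. Concretely, the paper first proves the integral splitting isomorphisms $\iota_i : \CC_i \otimes_{\F[\uX^L]} \LL[\uX^{\pm 1};\utheta] \simeq M_r(\CC_i)$ by exhibiting explicit generators in the image (this is Proposition~\ref{prop:azumaya}), then tensors both up to $\CC = \CC_1 \otimes_{\F[\uX^L]} \CC_2 \simeq \LL[\uX^{\pm 1}]$, applies Skolem--Noether over $\Frac\,\CC$ to conclude that $\text{Mat}_1(\mu_f)$ and $\text{Mat}_2(\mu_f)$ are conjugate, and takes determinants. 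Your Step~2 (the intersection $\CC_1 \cap \CC_2 = \F[\uX^L]$) is verbatim the paper's final step. The only real difference is packaging: you invoke the well-definedness of the reduced norm of a central simple algebra as a black box, whereas the paper unpacks it by hand via the explicit Azumaya isomorphisms plus Skolem--Noether.

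One correction to your ``Main obstacle'' paragraph: the extension $K_1 = \F(\uX)$ over $K_0 = \F(\uX^L)$ need not be Galois, nor even separable. Writing $K_1 = K_0(\uX^{\uv})$ with $(\uX^{\uv})^r \in K_0$, this is a radical extension of degree~$r$; it is Galois only when $r \mid q{-}1$ and separable only when $p \nmid r$, neither of which is assumed here. This does not break your argument---any commutative subfield of a central simple algebra of maximal degree is a splitting field regardless of separability, and the determinant computed through it still agrees with the reduced norm (pass to a common overfield and apply Skolem--Noether there)---but it does mean the justification you wrote is not quite the right one. The paper sidesteps this delicacy entirely by establishing the splitting isomorphisms directly at the level of the rings~$\CC_i$.
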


A key step in the proof of Theorem~\ref{th:Nrd} is the following
proposition.

\begin{proposition}
\label{prop:azumaya}
For $i = 1, 2$, the map
\[\begin{array}{c r c l}
\iota_i :
& \CC_i \otimes_{\F[\uX^L]} \LL[\uX^{\pm 1};\utheta] & \longrightarrow 
& \End_{\CC_i}\big(\LL[\uX^{\pm 1};\utheta]\big) \simeq M_r(\CC_i) \\
& c \otimes f & \mapsto & c \cdot \mu_f
\end{array}\]
is an isomorphism of $\CC_i$-algebras.
\end{proposition}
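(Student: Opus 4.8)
The plan is to prove that $\iota_i$ is an isomorphism by a combination of dimension counting and a localization-type argument showing surjectivity, or alternatively by exhibiting the map $\iota_i$ as the scalar extension of an Azumaya-type structure. Concretely, I would proceed as follows. First, note that both sides are free $\CC_i$-modules of rank $r^2$: the left-hand side because $\LL[\uX^{\pm 1};\utheta]$ is free of rank $r$ over $\F[\uX^L]$ (a basis being given, for $i=1$, by a basis of $\LL$ over $\F$, and for $i=2$ by the monomials $\uX^{\uu}$ with $\uu$ running over representatives of $\Z^\numvar/L$ --- the \emph{other} basis than the one used to see freeness over $\CC_i$) hence $\CC_i \otimes_{\F[\uX^L]} \LL[\uX^{\pm 1};\utheta]$ is free of rank $r$ over $\CC_i$ times $r$, and the right-hand side because $\End_{\CC_i}(\LL[\uX^{\pm 1};\utheta]) \simeq M_r(\CC_i)$ once we fix a $\CC_i$-basis of $\LL[\uX^{\pm 1};\utheta]$. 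So it suffices to prove that $\iota_i$ is surjective, or equivalently injective.

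For surjectivity, I would argue fibrewise. The ring $\CC_i$ is a (localization of a) polynomial-type ring over $\F$; for any maximal ideal $\m$ of $\CC_i$, with residue field $F = \CC_i/\m$ a finite extension of $\F$, reducing $\iota_i$ modulo $\m$ gives a map $F \otimes_{\F[\uX^L]} \LL[\uX^{\pm 1};\utheta] \to M_r(F)$. The key point is that the composite $\F[\uX^L] \to \CC_i \to F$ is a ring homomorphism $\F[\uX^L] \to F$, and after identifying $F$-points suitably this is exactly (a Frobenius-twisted form of) an evaluation morphism $\epsilon_\gamma$ of the kind studied in Section~\ref{subsec:evalOre}; by Theorem~\ref{th:evalisom} the corresponding map $\LL[\uX^{\pm 1};\utheta]/\m_\gamma\LL[\uX^{\pm 1};\utheta] \to \End_{\F}(\LL) \simeq M_r(\F)$ is an isomorphism, and base-changing from $\F$ to $F$ shows the reduction of $\iota_i$ mod $\m$ is surjective. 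Since $\CC_i$ is a finitely generated algebra over a field and the target and source are finite $\CC_i$-modules, surjectivity on all closed fibres together with the freeness (hence flatness) of both modules forces $\iota_i$ itself to be surjective --- here one uses Nakayama's lemma together with the fact that the maximal ideals of $\CC_i$ are dense, so the cokernel, being a finite $\CC_i$-module killed by going modulo every $\m$, must vanish. A cleaner variant: the cokernel $Q$ of $\iota_i$ is a finite $\CC_i$-module with $Q \otimes_{\CC_i} F = 0$ for every residue field $F$, hence $\Supp(Q) = \emptyset$, hence $Q = 0$.

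I would then deduce injectivity for free: $\iota_i$ is a surjective $\CC_i$-linear map between free $\CC_i$-modules of the same finite rank $r^2$, and $\CC_i$ is a (Noetherian, in fact nice) commutative ring, so such a map is automatically an isomorphism --- its determinant is a unit, or one invokes that a surjective endomorphism of a finitely generated module over a commutative ring is injective. Finally, one checks that $\iota_i$ is indeed a $\CC_i$-algebra homomorphism: it is $\CC_i$-linear by construction, it sends $1 \otimes 1$ to $\mathrm{id}$, and multiplicativity follows from $\mu_{fg} = \mu_g \circ \mu_f$ together with the fact that $c \mapsto (c \cdot {-})$ lands in the centre of $\End_{\CC_i}(\LL[\uX^{\pm 1};\utheta])$, so that $(c_1 \mu_{f_1})(c_2 \mu_{f_2}) = c_1 c_2 \mu_{f_2 f_1}$ matches $\iota_i$ applied to the product $(c_1 \otimes f_1)(c_2 \otimes f_2) = c_1 c_2 \otimes f_2 f_1$ in the tensor product of $\CC_i$ with the opposite-convention multiplication implicit in the definition of $\mu$. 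The main obstacle, and the step deserving the most care, is the fibrewise identification of the reduction of $\iota_i$ with a genuine evaluation morphism $\epsilon_{\tilde\gamma}$: one must verify that an arbitrary $\F$-algebra map $\F[\uX^L] \to F$ corresponds to a group morphism $\gamma: L \to F^\times$ (using that the $\uX^{\uu}$ are invertible, so their images are units), possibly after enlarging $\F$ to $F$, and that Theorem~\ref{th:evalisom} then applies verbatim over $F$ in place of $\F$.
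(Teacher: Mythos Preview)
Your overall strategy is sound, but it differs substantially from the paper's. The paper proves surjectivity of each $\iota_i$ by exhibiting explicit generators of $M_r(\CC_i)$ in the image, with no fibrewise argument at all. For $\iota_1$ it observes that the image contains the multiplications by $\LL$ and the elements $\iota_1(\uX^{\uu}\otimes\uX^{-\uu})=\Phi^{\ue\cdot\uu}$ acting coefficientwise; by Artin's theorem these already generate $\End_{\F}(\LL)\simeq M_r(\F)$, hence $M_r(\CC_1)$ after extending scalars. For $\iota_2$ it works in the basis $(1,\uX^{\uv},\ldots,\uX^{(r-1)\uv})$, computes that the $\iota_2(\lambda\otimes\alpha)$ give all diagonal matrices (via the isomorphism $\LL\otimes_{\F}\LL\simeq\LL^r$) and that $\iota_2(1\otimes\uX^{\uv})$ is a cyclic shift matrix, and checks these generate $M_r(\CC_2)$. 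This is entirely elementary and avoids any appeal to Theorem~\ref{th:evalisom} or to commutative algebra.

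Your fibrewise-plus-Nakayama route can be made to work, but the step you correctly flag as ``the main obstacle'' is more delicate than your sketch suggests. The reduction of $\iota_i$ modulo a maximal ideal $\m$ of $\CC_i$ is \emph{not} an evaluation morphism $\epsilon_{\tilde\gamma}$: it is the map $c\otimes f\mapsto c\cdot\bar\mu_f$ into $\End_F(\LL[\uX^{\pm 1};\utheta]/\m\,\LL[\uX^{\pm 1};\utheta])$, i.e.\ a right-multiplication action on a one-sided module quotient (recall $\CC_i$ is not central, so $\m\,\LL[\uX^{\pm 1};\utheta]$ is only a left $\CC_i$-submodule, not a two-sided ideal). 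What Theorem~\ref{th:evalisom}---once extended to arbitrary residue fields of $\F[\uX^L]$, which is indeed harmless over finite fields---tells you is that the \emph{central} quotient $\LL[\uX^{\pm 1};\utheta]/\m'$ (with $\m'=\m\cap\F[\uX^L]$) is a matrix algebra; passing from that to surjectivity of the reduced $\iota_i$ still requires a short structural argument (freeness of the module over $\CC_i/\m'\CC_i$, or a double-centralizer statement). None of this is hard, but it is extra work that the paper's explicit generators bypass completely. What your approach buys is conceptual clarity---it is exactly the verification that $\LL[\uX^{\pm 1};\utheta]$ is Azumaya over its centre---at the cost of more machinery.
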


\begin{proof}
It is routine to check that $\iota_i$ is a morphism of 
$\CC_i$-algebras. Since the domain and the codomain are both
free of rank $r^2$ over $\CC_i$, it is enough to show that 
$\iota_i$ is surjective.

We start with $\iota_1$. Its image contains obviously the 
multiplication by the elements of $\LL$. Besides, for $\uu \in 
\Z^\numvar$, we notice that $\iota_1$ takes the element $\uX^{\uu} 
\otimes \uX^{-\uu}$ to $\Phi^{\ue\cdot\uu}$ (acting coefficient-wise 
on the Ore polynomial). By Artin's theorem, we conclude that the 
image of $\iota_1$ contains at least $\End_{\F}(\LL) \simeq M_r(\F)$
(see also the proof of Theorem~\ref{th:evalisom}).
Since it is in addition a $\CC_1$-module, we conclude that
$\text{im } \iota_1 = M_r(\CC_1)$ and we are done.

We now move to $\iota_2$. Let $\uv \in \Z^\numvar$ be an element such
that $\ue {\cdot}\uv \equiv 1 \pmod r$.
Then $\big(1, \uX^{\uv}, \uX^{2\uv}, \ldots, \uX^{(r-1)\uv}\big)$ is
a basis of $\LL[\uX^{\pm 1};\utheta]$ over $\CC_2$ and we use it to
identify $\End_{\CC_2}\big(\LL[\uX^{\pm 1};\utheta]\big)$ with
$M_r(\CC_2)$. One checks
that, for any $\lambda, \alpha \in \LL$, 
\[\iota_2(\lambda \otimes \alpha) =
\left(\begin{matrix}
\lambda \alpha \\
& \lambda \Phi(\alpha) \\
& & \ddots \\
& & & \lambda \Phi^{r-1}(\alpha)
\end{matrix}\right).\]
Noticing that the map $\LL \otimes_{\F} \LL \rightarrow \LL^r$,
$\lambda \otimes \alpha \mapsto 
  \big(\lambda \Phi^i(\alpha)\big)_{0 \leq i < r}$
is an isomorphism by Galois theory, we find that the image of
$\iota_2$ contains all diagonal matrices with coefficients 
in $\LL$. Therefore it contains more generally all diagonal matrices 
with coefficients in $\CC_2$ given that it is a module over $\CC_2$.
Finally, we observe that
\[ \iota_2(1 \otimes \uX^{\uv}) =
\left(\begin{matrix}
& 1 \\
& & \ddots \\
& & & 1 \\
\uX^{r\uv} \\
\end{matrix}\right).\]
Since the latter together with diagonal matrices generate 
$M_r(C_2)$, we conclude that $\iota_2$ is surjective.
\end{proof}

\begin{proof}[Proof of Theorem~\ref{th:Nrd}]
We define $\CC = \CC_1 \otimes_{\F[\uX^L]} \CC_2 \simeq
\LL[\uX^{\pm 1}]$, so that we have the following isomorphisms
of $\CC$-algebras:
\begin{align*}
\CC \otimes_{\F[\uX^L]} \LL[\uX^{\pm 1};\utheta] 
& \simeq \CC_1 \otimes_{\F[\uX^L]} M_r(\CC_2) \simeq M_r(\CC)
\qquad \text{(\emph{via} $\CC_1 \otimes \iota_2$)} \\
& \simeq \CC_2 \otimes_{\F[\uX^L]} M_r(\CC_1) \simeq M_r(\CC)
\qquad \text{(\emph{via} $\CC_2 \otimes \iota_1$)}.
\end{align*}
It follows from the Skolem--Noether theorem \cite[Theorem 7.21]{Reiner75}
that the two above 
isomorphisms are conjugated over $\text{Frac }\CC$. In other words, 
there exists a matrix $P \in \text{GL}_r(\text{Frac }\CC)$ with the 
property that, for all $f \in \LL[\uX^{\pm 1};\utheta]$, one has
$$\text{Mat}_1(\mu_f) = P^{-1} \cdot \text{Mat}_2(\mu_f) \cdot P,$$
where $\text{Mat}_i(\mu_f)$ denotes the matrix of $\mu_f$ when it is
viewed as a $\CC_i$-linear map. Taking determinants, we end up with
$N_1(f) = N_2(f)$.

Finally, given that $N_1$ and $N_2$ take values respectively in
$\CC_1 = \F[\uX^{\pm 1}]$ and $\CC_2 = \LL[\uX^L]$, the equality 
of those maps implies that they must assume values in the intersection
$\CC_1 \cap \CC_2$, which is $\F[\uX^L]$.
\end{proof}

The map $N_1 = N_2$ is called the \emph{reduced norm map} and it
will be denoted by $\Nrd$ in what follows. We will always consider
it as a map from $\LL[\uX^{\pm 1};\utheta]$ to $\F[\uX^L]$.
We record the following facts which immediately follow from the 
corresponding properties of the determinant:
\begin{enumerate}[(i)]
\item the map $\Nrd$ is multiplicative, \ie 
for all $f, g \in \LL[\uX^{\pm 1};\utheta]$,
$$\Nrd(fg) = \Nrd(gf) = \Nrd(f) \cdot \Nrd(g)$$
\item for $f \in \F[\uX^L]$, we have $\Nrd(f) = f^r$.
\end{enumerate}

\medskip

Coming back to the definition, one can also easily obtain bounds on
the size of the reduced norm of a given Ore polynomial. For example,
we have the following.

\begin{lemma}
\label{lem:nrddegree}
Let $f \in \LL[\uX;\utheta]$ be of total degree $\degpol$.
Then $\Nrd(f) \in \F[\uX^L] \cap \F[\uX]$ and its total degree
with respect to the variables $\uX$ is at most $r\degpol$.
\end{lemma}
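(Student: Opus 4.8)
The plan is to compute $\Nrd(f)=N_1(f)$ completely explicitly as the determinant of a square matrix with entries in $\F[\uX]$, and to read off both assertions from the shape of that matrix. Observe that the membership $\Nrd(f)\in\F[\uX^L]$ is already provided by Theorem~\ref{th:Nrd}; so the only two things left to prove are that $\Nrd(f)$ lies in the subring $\F[\uX]$ (\ie has no negative exponents) and that its total degree is at most $r\degpol$. The statement will then follow by intersecting $\F[\uX^L]$ with $\F[\uX]$ inside $\F[\uX^{\pm 1}]$.

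First I would fix a basis $\omega_1,\dots,\omega_r$ of $\LL$ over $\F$, which, as recalled just before the lemma, is also a basis of $\LL[\uX^{\pm 1};\utheta]$ as a free left $\CC_1$-module, where $\CC_1=\F[\uX^{\pm 1}]$. Write $f=\sum_{\uu\in S}a_{\uu}\uX^{\uu}$ with $a_{\uu}\in\LL$ and support $S\subseteq\Z_{\geq 0}^{\numvar}$ satisfying $u_1+\dots+u_{\numvar}\leq\degpol$ for all $\uu\in S$. Using the commutation rule $b\,\uX^{\uu}=\uX^{\uu}\,\Phi^{-\ue\cdot\uu}(b)$ for $b\in\LL$ to move each monomial to the left, and expanding $\Phi^{-\ue\cdot\uu}(\omega_j a_{\uu})=\sum_k c^{(\uu)}_{jk}\omega_k$ with $c^{(\uu)}_{jk}\in\F$, one obtains
\[
\mu_f(\omega_j)=\omega_j f=\sum_{\uu\in S}(\omega_j a_{\uu})\uX^{\uu}
=\sum_{k=1}^r M_{jk}\,\omega_k,
\qquad
M_{jk}:=\sum_{\uu\in S}c^{(\uu)}_{jk}\,\uX^{\uu}.
\]
The key point is that conjugating $\uX^{\uu}$ past a scalar of $\LL$ modifies only the coefficient and never the monomial itself, so the monomials occurring in the $M_{jk}$ form a subset of those occurring in $f$; in particular each $M_{jk}$ belongs to $\F[\uX]$ and has total degree at most $\degpol$.

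It then remains to take determinants. Since $\CC_1$ is commutative, $\Nrd(f)=N_1(f)=\det_{\CC_1}(\mu_f)=\det\big(M_{jk}\big)_{1\leq j,k\leq r}$, and the Leibniz expansion writes this as a signed sum of products of $r$ of the $M_{jk}$; hence $\Nrd(f)\in\F[\uX]$ with total degree at most $r\degpol$. Combining with Theorem~\ref{th:Nrd} yields $\Nrd(f)\in\F[\uX^L]\cap\F[\uX]$ of total degree $\leq r\degpol$, as claimed. I do not anticipate a genuine obstacle: the argument is pure bookkeeping with the twisted commutation relations, the one point deserving care being the verification that pushing the monomials $\uX^{\uu}$ to the left keeps us inside $\F[\uX]$ and not merely inside $\F[\uX^{\pm 1}]$ — this is exactly why one works with the basis coming from $\LL/\F$ over $\CC_1$ rather than with the basis $(\uX^{\uu})$ over $\CC_2=\LL[\uX^L]$, for which the analogous computation produces shifted exponents that need not be nonnegative.
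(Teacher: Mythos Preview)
Your proof is correct and follows essentially the same approach as the paper: both compute $\Nrd(f)=N_1(f)$ as the determinant of the matrix of $\mu_f$ in a basis of $\LL$ over $\F$, observe that each entry lies in $\F[\uX]$ with total degree at most $\degpol$, and conclude by expanding the determinant. Your write-up is more detailed (making the commutation explicit and noting why the $\CC_1$-basis is the right choice over the $\CC_2$-basis), but the underlying argument is the paper's.
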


\begin{proof}
We use the description coming from the subring $\CC_1$.
Let $\BB = (\alpha_1, \ldots, \alpha_r)$ be a basis of $\LL$ over $\F$.
The entries of the matrix $\text{Mat}_{\BB}(\mu_f)$ representing
$\mu_f$ in the basis $\BB$ gathers the coordinates of the 
$f \alpha_i$ ($1 \leq i \leq r$) in the basis $\BB$. Therefore, 
they are all
polynomials in $\F[\uX]$ of degree at most $\degpol$. The determinant
of $\text{Mat}_{\BB}(\mu_f)$, which is also $\Nrd(f)$, is then
a polynomial in $\F[\uX]$ of degree at most $r\degpol$.
\end{proof}

Another decisive property of the reduced norm map is that its 
vanishing controls the size of the kernels of the evaluation morphisms
$\epsilon_{\tilde \gamma}$ introduced earlier.
In order to state a precise result in this direction, we need to
introduce the \emph{order of vanishing} of a central function: given
$f \in \F[\uX^L]$ and a group morphism $\gamma : L \to \F^\times$, we
define
$$\ord_\gamma(f) = 
  \inf \big\{\, v \in \N \,|\, f \in \m_\gamma^v \,\big\}$$
where we recall that $\m_\gamma = \ker \epsilon_\gamma$ is the
ideal of $\F[\uX^L]$ generated by the $\uX^{\uu} - \gamma(\uu)$, 
$\uu \in L$. By convention $\ord_\gamma(0) = +\infty$ for all
$\gamma$.

\begin{theorem}
\label{th:dimker}
Let $f \in \LL[\uX^{\pm 1};\utheta]$. Let also $\gamma : L \to
\F^\times$ be a group morphism and $\tilde \gamma: \Z^\numvar
\to \LL^\times$ be a prolongation of $\gamma$ satisfying the 
cocycle condition~\eqref{cocycle}. Then
$$\dim_{\F} \ker \epsilon_{\tilde \gamma}(f)
  \leq \ord_\gamma\big(\Nrd(f)\big).$$
\end{theorem}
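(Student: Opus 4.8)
The plan is to reduce the statement to a purely linear-algebra fact about the endomorphism $\mu_f$ (multiplication by $f$ on the right) viewed over the centre, and then to localize at the maximal ideal $\m_\gamma$. First I would set $A = \LL[\uX^{\pm 1};\utheta]$, view it as a free module of rank $r$ over $\CC_1 = \F[\uX^{\pm 1}]$, and recall from Proposition~\ref{prop:azumaya} that $\CC_1 \otimes_{\F[\uX^L]} A \simeq M_r(\CC_1)$, with $c \otimes f \mapsto c\cdot\mu_f$. Under this identification, $\mu_f$ becomes a matrix in $M_r(\CC_1)$ whose determinant is $\Nrd(f)$. The key point is that after reducing modulo $\m_\gamma$, the evaluation morphism $\epsilon_{\tilde\gamma}$ and the action $\mu_f$ are compatible: by Theorem~\ref{th:evalisom}, $A/\m_\gamma A \simeq \End_\F(\LL)$, and on this quotient multiplication by $f$ on the right corresponds to post-composition (or pre-composition — I would check the variance carefully) with $\epsilon_{\tilde\gamma}(f)$. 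Thus $\ker\epsilon_{\tilde\gamma}(f)$, as an $\F$-vector space, has dimension equal to the dimension of the kernel of $\bar\mu_f$, the reduction of $\mu_f$ modulo $\m_\gamma$, acting on the $r$-dimensional $\F$-space $A/\m_\gamma A$.

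Next I would pass to the localization $R = (\F[\uX^L])_{\m_\gamma}$, a regular local ring (localization of a Laurent polynomial ring, hence regular — note $\F[\uX^L]\simeq\F[\uX^{\pm1}]$ after a change of basis of the lattice $L$), with maximal ideal $\m$ and residue field $\F$. After base change, $A_{\m_\gamma}$ is free of rank $r$ over $R$, $\mu_f$ gives a matrix $M \in M_r(R)$ with $\det M = \Nrd(f)$, and $v := \ord_\gamma(\Nrd(f))$ is exactly the $\m$-adic valuation of $\det M$ in $R$. What must be proved is:
\begin{equation}\label{eq:la}
\dim_\F \ker\big(M \bmod \m : (R/\m)^r \to (R/\m)^r\big) \;\leq\; v_{\m}(\det M).
\end{equation}
This follows from Smith normal form over the regular local ring $R$: since $R$ is a PID after localizing further at height-one primes — more cleanly, since $R$ is regular local and $\det M \neq 0$, one can bring $M$ into diagonal form over $R$ if $\dim R = 1$, and in general argue by the theory of elementary divisors for matrices over a Noetherian local domain, or simply invoke that $\dim_\F(R^r/MR^r) \otimes_R (R/\m)$ relates to $v_\m(\det M)$. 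Concretely: write $d = \dim_\F\ker(M \bmod \m)$; then $M \bmod \m$ has rank $r - d$, so we can choose bases of $R^r$ (source and target) in which the top-left $(r-d)\times(r-d)$ block of $M$ is invertible modulo $\m$, hence invertible over $R$; a Schur-complement computation then shows $\det M$ lies in $\m^d$, i.e. $v_\m(\det M) \geq d$, which is exactly~\eqref{eq:la}.

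The main obstacle I anticipate is the bookkeeping in the first paragraph: making precise the claim that right multiplication by $f$ on $A$, after the isomorphisms of Proposition~\ref{prop:azumaya} and Theorem~\ref{th:evalisom}, really does specialize to the endomorphism $\epsilon_{\tilde\gamma}(f) \in \End_\F(\LL)$ (up to the harmless left/right and conjugation ambiguities already flagged in the remarks), so that the two kernels genuinely have the same $\F$-dimension. Once that compatibility is nailed down, the rest is the standard Schur-complement argument over a local ring, which is short. A secondary point to be careful about is that $\m_\gamma A$ might not be prime, but since we only compare $\F$-dimensions of kernels of the \emph{same} matrix reduced two ways, it suffices to work after the single localization at $\m_\gamma$ and reduce mod $\m$, so no extra hypothesis on $f$ is needed.
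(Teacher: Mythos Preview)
Your overall strategy---relate $\mu_f$ to $\epsilon_{\tilde\gamma}(f)$ via the evaluation isomorphism of Theorem~\ref{th:evalisom}, then use a determinant argument over a local ring---is exactly the paper's strategy, and your Schur-complement step is essentially a repackaging of the paper's ``$\delta$ rows vanish modulo $\m_\gamma$, hence $\det \in \m_\gamma^\delta$'' observation. However, your bookkeeping contains several errors that must be fixed before the argument goes through.

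First, $A/\m_\gamma A \simeq \End_\F(\LL)$ has $\F$-dimension $r^2$, not $r$; correspondingly, $A$ is free of rank $r^2$ (not $r$) over $\F[\uX^L]$ and hence over your $R$. Second, the kernel of $\bar\mu_f$ (right multiplication by $\epsilon_{\tilde\gamma}(f)$ on $\End_\F(\LL)$) consists of those $\psi$ vanishing on the image of $\epsilon_{\tilde\gamma}(f)$, so $\dim_\F\ker\bar\mu_f = r\cdot\dim_\F\ker\epsilon_{\tilde\gamma}(f)$, not $\dim_\F\ker\epsilon_{\tilde\gamma}(f)$. Third, as an $R$-linear map $\mu_f$ is represented by an $r^2\times r^2$ matrix whose determinant is $\Nrd(f)^r$ (by transitivity of the determinant through the free extension $\F[\uX^L]\subset\CC_1$), not $\Nrd(f)$. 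These three factors of $r$ cancel, so your argument is salvageable, but as written your displayed inequality is being applied to the wrong objects.

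The paper sidesteps all of this by working over $\CC_2 = \LL[\uX^L]$ rather than $\CC_1 = \F[\uX^{\pm 1}]$. The point is that $\CC_2/\m_\gamma\CC_2 \simeq \LL$ is a \emph{field}, so reducing the $r\times r$ matrix $\text{Mat}_{\CC_2}(\mu_f)$ modulo $\m_\gamma\CC_2$ lands in $M_r(\LL)$, which by Skolem--Noether is conjugate to $\text{Mat}(\epsilon_{\tilde\gamma}(f))$. One then chooses an $\F$-basis of $\LL$ adapted to $\ker\epsilon_{\tilde\gamma}(f)$, so that $\delta$ rows of the conjugated matrix lie in $\m_\gamma\CC_2$; the determinant then lies in $\m_\gamma^\delta\CC_2$, and finally in $\m_\gamma^\delta$ since $\CC_2$ is free over $\F[\uX^L]$. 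This keeps the matrices $r\times r$ throughout and makes the comparison with $\epsilon_{\tilde\gamma}(f)$ direct. Your route via $\CC_1$ can also be made to work once the factors of $r$ are tracked, but note that $\CC_1/\m_\gamma\CC_1 \cong \F[Y]/(Y^r - c)$ is in general not a field, which is precisely what forced you down to $R = (\F[\uX^L])_{\m_\gamma}$ and into the rank-$r^2$ picture without your noticing.
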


\begin{proof}
Throughout the proof, we write $\CC = \CC_2 = \LL[\uX^L]$.
As in the proof of Lemma~\ref{lem:nrddegree}, let
$\BB = (\alpha_1, \ldots, \alpha_r)$ be a basis of $\LL$ over $\F$.
We also fix a basis of $\LL[\uX^{\pm 1};\utheta]$ over $\CC$
and write $\text{Mat}(\mu_f)$ for the matrix of $\mu_f$ in this
basis. By definition $\Nrd(f) = \det \text{Mat}_{\BB}(\mu_f)$.
Besides, we notice that $\epsilon_\gamma$ induces an 
isomorphism between $\F[\uX^L]/\m_\gamma$ and $\F$. Therefore it 
also induces an isomorphism $\CC/\m_\gamma\CC \simeq \LL$. It
then follows from Proposition~\ref{prop:azumaya} that we have an
isomorphism
\[\begin{array}{r c l}
\CC/\m_\gamma\CC \otimes_{\F[\uX^L]} \LL[\uX^{\pm 1};\utheta]
& \stackrel\sim\longrightarrow & M_r(\LL) \\
\lambda \otimes f & \mapsto & 
 \lambda \cdot \text{Mat}(\mu_f) \text{ mod }\m_\gamma\CC.
\end{array}\]
On the other hand, it follows from Theorem~\ref{th:evalisom} that
the evaluation morphism $\epsilon_{\tilde \gamma}$
induces another isomorphism
\[\begin{array}{r c l}
\CC/\m_\gamma\CC \otimes_{\F[\uX^L]} \LL[\uX^{\pm 1};\utheta]
& \stackrel\sim\longrightarrow & M_r(\LL)
\end{array}\]
after scalar extension to $\LL$. By Skolem--Noether theorem, those two 
isomorphisms are conjugated: there exists a matrix $P \in \text{GL}_r
(\LL)$ such that, for all $f \in \LL[\uX^{\pm 1};\utheta]$,
\begin{equation}
\label{eq:conjugated}
\text{Mat}_{\BB}\big(\epsilon_{\tilde \gamma}(f)\big) 
  \equiv P^{-1} \cdot \text{Mat}(\mu_f) \cdot P
  \pmod{\m_\gamma\CC}.
\end{equation}
Write $\delta= \dim_{\F} \ker \epsilon_{\tilde \gamma}(f)$ and 
pick a basis $\BB' = (\alpha'_1, \ldots, \alpha'_r)$ of $\LL$
over $\F$ such that $\alpha'_1, \ldots, \alpha'_\delta$ generate
$\ker \epsilon_{\tilde \gamma}(f)$.
Let $Q \in \text{GL}_r(\F)$ be the 
change-of-basis matrix between $\BB$ and~$\BB'$.
Equation~\eqref{eq:conjugated} then gives
$$
\text{Mat}_{\BB'}\big(\epsilon_{\tilde \gamma}(f)\big) 
  \equiv (PQ)^{-1} \cdot \text{Mat}(\mu_f) \cdot PQ
  \pmod{\m_\gamma\CC}.$$
Therefore the matrix of $\mu_f$ is conjugated to a matrix whose 
$\delta$ first rows vanish modulo $\m_\gamma\CC$. As a consequence, 
its determinant $\Nrd(f)$ falls inside $\m_\gamma^\delta \CC$.
Given that $\Nrd(f)$ also lies in $\F[\uX^L]$, we find
$\Nrd(f) \in \m_\gamma^\delta$ which is equivalent to say that 
$\ord_\gamma\big(\Nrd(f)\big) \geq \delta$.
\end{proof}

\section{Linearized Reed--Muller codes}\label{sec:thecode}

In this section we introduce codes in the sum-rank metric constructed 
by evaluating multivariate Ore polynomials, that we call linearized 
Reed--Muller codes.

\subsection{The code construction}\label{ssec:construction}

We keep the notation of Section \ref{sec:ore}. Briefly, we recall
that $\Phi$ denotes the Frobenius endomorphism $x \mapsto x^q$ acting
on $\LL$. We pick a tuple $\ue = (e_1, \ldots, e_\numvar) \in \Z^\numvar$ such
that $\gcd(e_1, \ldots, e_\numvar, r) = 1$. 
We set $\utheta = (\Phi^{e_1}, \cdots, \Phi^{e_\numvar})$ and consider
the ring of Ore polynomials $\LL[\uX^{\pm 1}; \utheta]$. We recall
that its centre is $\F[\uX^L]$ where $L$ is the kernel of the map
$\Z^\numvar \to \Z/r\Z$, $\uu \mapsto \ue{\cdot}\uu$. It is a lattice in
$\Z^\numvar$ satisfying $\Z^\numvar/L \simeq \Z/r\Z \simeq \Gal(\LL/\F)$.

Let $H := \Homgrp(L, \F^\times)$ be the set of group
homomorphisms from $L$ to $\F^\times$. We remark that giving an element of $H$ is the same as giving its values on a fixed basis of $L$ (over $\Z$), \ie $m$ elements of $\F^\times$.  Therefore, $ \Card(H)=(q-1)^m$.

For each $\gamma \in H$, we fix
a prolongation $\tilde \gamma : \Z^\numvar \to \LL^\times$ of $\gamma$ 
satisfying the cocycle condition~\eqref{cocycle}. It follows from 
Lemma~\ref{lem:cocycle} that such a prolongation always exists.
Besides, we recall that it gives rise to an evaluation morphism 
$\epsilon_{\tilde \gamma}: 
\LL[\uX^{\pm 1}; \utheta] \to \End_{\F}(\LL)$ (see Subsection
\ref{subsec:evalOre}).
We put together all of those into a unique multievaluation 
morphism
\[\begin{array}{c r c l}
\epsilon: & \LL[\uX^{\pm 1}; \utheta] & \rightarrow & 
\prod_{\gamma \in H} \End_{\F}(\LL)
 \smallskip \\
& f & \mapsto & \big(\epsilon_{\tilde \gamma}(f)\big)_{\gamma \in H}
\end{array}.\]
The codomain of $\epsilon$, namely
\[\HH := 
\prod_{\gamma \in H} \End_{\F}(\LL),\]
will play an important role in what follows: it is
the space in which all our codes will eventually sit.
It is a vector space over $\LL$; indeed given a scalar $a \in \LL$
and a $\F$-linear endomorphism $f$ of $\LL$, the product $af$ 
makes sense: it is simply the map $\LL \to \LL$ that takes $x \in
\LL$ to $a{\cdot}f(x)$. We underline that, for this structure, the
map $\epsilon$ is $\LL$-linear. Besides we notice that $\End_{\F}(\LL)$ 
has dimension $r^2$ over $\F$, and hence it has dimension $r$ over
$\LL$. Therefore
$$\dim_{\LL} \HH = r \cdot \Card(H) = r \cdot (q{-}1)^\numvar.$$
Moreover, $\HH$ is endowed with the sum-rank metric. Precisely, 
we define the \emph{sum-rank weight} of a tuple 
$\boldsymbol{\varphi} = (\varphi_\gamma)_{\gamma \in H} \in \HH$ 
by
\[
  w_\srk(\boldsymbol{\varphi}) = \sum_{\gamma \in H} \rk(\varphi_\gamma).
\]

\begin{definition}
\label{def:codesLRM}
Let $\ue$ be as above and $\degpol$ be a positive integer.
The \emph{linearized Reed-Muller code} associated to $\ue$ and 
$\degpol$ is
$$\LRM(\ue;\,\degpol) = \epsilon\Big(\LL[\uX^{\pm 1}; \utheta]_{\leq \degpol}\Big)$$
where $\LL[\uX^{\pm 1}; \utheta]_{\leq \degpol}$ is the subspace of
$\LL[\uX^{\pm 1}; \utheta]$ consisting of multivariate Ore polynomials of
total degree at most $\degpol$.
\end{definition}

Since $\epsilon$ is $\LL$-linear, the code $\LRM(\ue;\,\degpol)$ is
$\LL$-linear as well, \ie it is a $\LL$-vector subspace
of $\HH$.

\begin{remark}
When $e_1=\ldots=e_m=r=1$, the ring $\LL[\uX^{\pm 1}; \utheta]$ is the classical ring of Laurent polynomials in $m$ variables, and the code $\LRM(\ue;\,\degpol)$ is the classical Reed--Muller code, except that we do not allow evaluation at tuples with a zero component.
\end{remark}

\subsection{Code's parameters}\label{ssec:parameters}

We recall from the introduction that, for a $\LL$-linear code $\CC$ 
sitting inside $\HH$, we define:
\begin{itemize}
\item its \emph{length} $n$ as the $\LL$-dimension of the ambient 
space $\HH$, \ie $n \coloneqq r \cdot (q{-}1)^\numvar$,
\item its \emph{dimension} $k$ as its $\LL$-dimension, \ie 
$k \coloneqq \dim_{\LL} \CC$,
\item its \emph{minimum distance} $\mindist$ as the minimum sum-rank
weight of a nonzero codeword in~$\CC$.
\end{itemize}
The next theorem provides the dimension and an explicit lower bound for the minimum distance of our codes.

\begin{theorem}\label{th:codeparameters}
Let $\ue = (e_1, \ldots, e_\numvar) \in \Z^\numvar$ with $\gcd(e_1, \ldots, 
e_\numvar, r) = 1$ as before.
Let also $\degpol$ be an integer between $1$ and $q{-}2$.
Then, the dimension $k$ and the minimum distance $\mindist$ of
$\LRM(\ue;\,\degpol)$ satisfy
\[k = \binom{\degpol+\numvar}{\degpol} 
\qquad \text{and} \qquad
\mindist \geq r \cdot (q{-}1)^{\numvar-1} \cdot (q - 1 - \degpol).\]
\end{theorem}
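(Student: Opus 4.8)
The plan is to establish the two assertions separately: the dimension by showing the multievaluation $\epsilon$ is injective on Ore polynomials of total degree $\leq \degpol$ (so that $k$ equals the dimension of that space, which is the number of monomials $\uX^{\ui}$ with $i_1+\dots+i_\numvar\leq\degpol$, namely $\binom{\degpol+\numvar}{\degpol}$), and the distance bound by combining Theorem~\ref{th:dimker} with the degree bound of Lemma~\ref{lem:nrddegree} and a global count of the orders of vanishing of the reduced norm over all $\gamma\in H$.

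\textbf{Injectivity and dimension.} First I would argue that a nonzero $f\in\LL[\uX^{\pm1};\utheta]_{\leq\degpol}$ cannot lie in $\ker\epsilon$. If $\epsilon(f)=\boldsymbol 0$ then $\epsilon_{\tilde\gamma}(f)=0$ for every $\gamma\in H$; by Theorem~\ref{th:dimker} this forces $\ord_\gamma(\Nrd(f))\geq r$ for all $\gamma$, hence $\Nrd(f)\in\m_\gamma^r$ for all $\gamma$. Now $\Nrd(f)$ is a nonzero element of $\F[\uX^L]\cap\F[\uX]=\F[\uX^L\cap\N^\numvar]$ of total degree $\leq r\degpol$ by Lemma~\ref{lem:nrddegree} (nonzero because $\Nrd$ is multiplicative and $\LL[\uX^{\pm1};\utheta]$ has no zero divisors, $f$ being invertible after inverting the $X_i$; alternatively $\det\mu_f\neq0$ as $\mu_f$ is injective). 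Viewing $\F[\uX^L]$ as a commutative polynomial ring and each $\epsilon_\gamma$ as evaluation at the character $\gamma$, the condition ``$\Nrd(f)$ vanishes to order $\geq r$ at every $\gamma\in H$'' is a statement about a multivariate polynomial over $\F$; since there are $(q-1)^\numvar$ such characters and the polynomial has total degree $\leq r\degpol$ with $\degpol\leq q-2$, a Schwartz--Zippel / (O. Ore)-type counting argument — precisely the multivariate zero-count used for affine Reed--Muller codes in the Hamming metric, \cite[Theorem 6.13]{LNbook} — gives a contradiction. So $\epsilon$ is injective on degree $\leq\degpol$ and $k=\binom{\degpol+\numvar}{\degpol}$. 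The main subtlety here is making the passage to the ``polynomial in the usual sense'' fully rigorous: one must choose coordinates on $L$ identifying $\F[\uX^L]$ with an honest polynomial ring in $\numvar$ variables over $\F$, check that $\epsilon_\gamma$ becomes ordinary evaluation at a point of $(\F^\times)^\numvar$ (using $\gamma$ a group morphism to $\F^\times$), and that $\ord_\gamma$ becomes the usual multiplicity at that point.

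\textbf{Minimum distance.} For the distance, take any nonzero codeword $\boldsymbol\varphi=\epsilon(f)$ with $f\in\LL[\uX^{\pm1};\utheta]_{\leq\degpol}$ nonzero. Then
\[
w_\srk(\boldsymbol\varphi)=\sum_{\gamma\in H}\rk\big(\epsilon_{\tilde\gamma}(f)\big)
=\sum_{\gamma\in H}\big(r-\dim_{\F}\ker\epsilon_{\tilde\gamma}(f)\big)
= r(q-1)^\numvar - \sum_{\gamma\in H}\dim_{\F}\ker\epsilon_{\tilde\gamma}(f).
\]
By Theorem~\ref{th:dimker}, $\dim_{\F}\ker\epsilon_{\tilde\gamma}(f)\leq\ord_\gamma(\Nrd(f))$, so it suffices to bound $\sum_{\gamma\in H}\ord_\gamma(\Nrd(f))$ from above. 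Again identifying $\F[\uX^L]$ with a genuine polynomial ring in $\numvar$ variables and the $\gamma\in H$ with the points of $(\F^\times)^\numvar$, we must bound the total multiplicity of a nonzero polynomial $g=\Nrd(f)$ of total degree $\leq r\degpol$ over the grid $(\F^\times)^\numvar$. The standard induction on $\numvar$ (peel off one variable at a time, as in the proof of the affine Reed--Muller bound) yields $\sum_{\gamma}\ord_\gamma(g)\leq r\degpol\cdot(q-1)^{\numvar-1}$: in each of the $(q-1)^{\numvar-1}$ ``lines'' in a fixed coordinate direction not entirely contained in the zero set, the restriction is a nonzero univariate polynomial of degree $\leq r\degpol< q-1$, whose total multiplicity over $\F^\times$ is at most $r\degpol$; and one checks $g$ cannot vanish identically on any such line using $\degpol\leq q-2$. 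Plugging in,
\[
w_\srk(\boldsymbol\varphi)\;\geq\; r(q-1)^\numvar - r\degpol(q-1)^{\numvar-1}
\;=\; r(q-1)^{\numvar-1}\big(q-1-\degpol\big),
\]
which is the claimed bound. The hard part is the same in both halves: cleanly reducing the ``order of vanishing of a central Ore function'' to an ordinary multivariate multiplicity count on a combinatorial grid, and then invoking (or reproving by induction) the classical Reed--Muller zero estimate — after that, the arithmetic is routine.
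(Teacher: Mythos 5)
Your overall strategy matches the paper's: combine Theorem~\ref{th:dimker} with the degree bound of Lemma~\ref{lem:nrddegree} and a multivariate zero-count over $(\F^\times)^\numvar$ (Proposition~\ref{prop:boundzeroes}), then deduce both the distance bound and the injectivity of $\epsilon$ (hence the dimension) from the resulting inequality $\sum_{\gamma\in H}\dim_{\F}\ker\epsilon_{\tilde\gamma}(f)\le r\degpol\,(q-1)^{\numvar-1}$. But the step you defer as ``making the passage to a polynomial in the usual sense fully rigorous'' is precisely where the content of the proof lies, and neither of the two ways of reading your identification actually works.

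If you identify $\F[\uX^L]$ with a polynomial ring by choosing a basis $(\uw_1,\dots,\uw_\numvar)$ of $L$ and setting $Z_i=\uX^{\uw_i}$, then indeed $H$ is in bijection with $(\F^\times)^\numvar$ and $\ord_\gamma$ becomes an ordinary multiplicity --- but the bound $\deg\Nrd(f)\le r\degpol$ of Lemma~\ref{lem:nrddegree} is a bound on the degree \emph{in the original variables} $\uX$, and it is destroyed by this change of coordinates: since $L$ has index $r$ in $\Z^\numvar$, a monomial $\uX^{\uu}$ of small $\uX$-degree can acquire large (or negative) exponents when rewritten in the $Z_i$. This is exactly why the paper introduces the codes $\LRM(\ue;\,\degpol S_{\underline \uw})$ of Theorem~\ref{th:codeparameters2}, whose defining condition is adapted to the basis $\underline\uw$ so that Proposition~\ref{prop:boundzeroes} applies directly; your argument essentially proves that theorem, not this one. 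If instead you keep the original coordinates and try to match each $\gamma\in H$ with a point $\ua\in(\F^\times)^\numvar$, i.e.\ with a character of $\Z^\numvar$, you hit the opposite obstruction: the restriction map $\Homgrp(\Z^\numvar,\F^\times)\to H$ has kernel and cokernel of order $\gcd(r,q-1)$, so when $\gcd(r,q-1)>1$ some $\gamma\in H$ are not restrictions of any character of $\Z^\numvar$ and have no associated point $\ua$ at which to measure a multiplicity. The paper's Theorem~\ref{th:boundzeroesOre} resolves this by (i) proving the one-sided inclusion $\m_\gamma\subset\m_{\ua}$ inside $\F[\uX^{\pm1}]$, whence $\ord_\gamma(g)\le\ord_{\ua}(g)$, and (ii) twisting by the automorphisms $\sigma_t$, with $t$ ranging over representatives of $\F^\times/(\F^\times)^r$, to sweep out all of $H$ with constant multiplicity $\gcd(r,q-1)$; each twist preserves the $\uX$-degree, so Proposition~\ref{prop:boundzeroes} applies to every $\sigma_t(\Nrd(f))$ and the counts average to the stated bound. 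Without some version of this argument your proof has a genuine gap. (Two minor points: your parenthetical $r\degpol<q-1$ is false in general and is not needed for Proposition~\ref{prop:boundzeroes}; and the nonvanishing of $\Nrd(f)$ for $f\ne 0$ follows from the injectivity of $\mu_f$ over the domain $\CC_1$, as you note, not from any invertibility of $f$.)
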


The rest of this subsection is devoted to the proof of
Theorem~\ref{th:codeparameters}. A crucial ingredient is an
upper bound on the ``number of zeroes'' of a multivariate Ore
polynomial in the spirit of the Schwartz--Zippel lemma. Before addressing this question, we recall the
corresponding result for classical multivariate polynomials.

\begin{proposition}
\label{prop:boundzeroes}
Let $f \in \F[\uX]$ be a nonzero polynomial of total degree
at most $\degpol$. Then
\[
  \sum_{\ua \in (\F^\times)^\numvar} \ord_{\ua}(f) \leq \degpol \cdot (q{-}1)^{\numvar-1}
\]
where $\ord_{\ua}$ denotes the order of vanishing at $\ua = (a_1,
\ldots, a_\numvar)$, that is,
by definition, the smallest integer $v$ such that $f \in \m_{\ua}^v$
where $\m_{\ua}$ is the ideal generated by $X_1 - a_1, \ldots, X_\numvar
- a_\numvar$.
\end{proposition}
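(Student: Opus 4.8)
The plan is to reduce the multivariate statement to the univariate case by an inductive slicing argument. First I would set up the induction on the number of variables $\numvar$. The base case $\numvar = 1$ is the classical fact that a nonzero univariate polynomial of degree at most $\degpol$ over $\F$ has at most $\degpol$ roots in $\F^\times$ counted with multiplicity, which is immediate from the Euclidean division (a root of order $v$ contributes a factor $(X-a)^v$).

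For the inductive step, write $f \in \F[X_1, \ldots, X_\numvar]$ as a polynomial in $X_\numvar$ with coefficients in $\F[X_1,\ldots,X_{\numvar-1}]$, say $f = \sum_{j=0}^{\degpol} f_j(X_1,\ldots,X_{\numvar-1}) X_\numvar^j$, and let $d \leq \degpol$ be the largest index with $f_d \neq 0$; then $\deg f_d \leq \degpol - d$. The key observation is a ``two-regime'' splitting of the sum over $\ua = (\ub, a_\numvar)$ with $\ub \in (\F^\times)^{\numvar-1}$, $a_\numvar \in \F^\times$. For a fixed $\ub$, if $f_d(\ub) \neq 0$, then the specialized polynomial $f(\ub, X_\numvar) \in \F[X_\numvar]$ is nonzero of degree exactly $d$, so $\sum_{a_\numvar \in \F^\times} \ord_{a_\numvar}\big(f(\ub,X_\numvar)\big) \leq d$; and one has $\ord_{(\ub,a_\numvar)}(f) \leq \ord_{a_\numvar}\big(f(\ub,X_\numvar)\big)$ since substituting $X_i = b_i$ is a ring map sending $\m_{(\ub,a_\numvar)}$ into $\m_{a_\numvar}$. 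If instead $f_d(\ub) = 0$, I would bound things crudely: for any fixed $\ub$, $\sum_{a_\numvar \in \F^\times} \ord_{(\ub,a_\numvar)}(f) \leq \degpol$ as well — this follows because $\prod_{a_\numvar}(X_\numvar - a_\numvar)^{\ord_{(\ub,a_\numvar)}(f)}$, viewed appropriately, must divide a nonzero specialization, or more safely one observes that if $g(X_\numvar) = f(\ub, X_\numvar)$ is nonzero it still has degree $\le \degpol$ so the bound $\degpol$ holds, while if $g \equiv 0$ one must argue separately.

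The subtlety in the $f_d(\ub)=0$ case is that $f(\ub, X_\numvar)$ could be identically zero, in which case the naive per-slice bound fails and one needs the multiplicities genuinely. The clean fix is to peel off the content: write $f = X_\numvar^{m}\cdot(\text{unit-free part})$ is not quite what is needed since we evaluate at $a_\numvar \neq 0$; instead I would argue that the set of $\ub$ with $f(\ub, X_\numvar) \equiv 0$ is exactly the common zero locus $V$ of $f_0, \ldots, f_d$ inside $(\F^\times)^{\numvar-1}$, hence contained in the zero locus of $f_d$, and handle the contribution over $V$ by a separate induction or by writing $f = h \cdot \tilde f$ where $h \in \F[X_1,\ldots,X_{\numvar-1}]$ is the gcd of the $f_j$; then $\ord_{(\ub,a_\numvar)}(f) = \ord_{\ub}(h) + \ord_{(\ub,a_\numvar)}(\tilde f)$, the tilded polynomial now having nonvanishing leading-in-$X_\numvar$ coefficient on a dense set. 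This is the main obstacle, and the cleanest route is the following: partition $(\F^\times)^{\numvar}$ according to whether $f_d(\ub) = 0$ or not. For $\ub$ with $f_d(\ub) \neq 0$ we get the bound $d$ per value of $\ub$, contributing at most $d\cdot(q{-}1)^{\numvar-1}$; for the remaining $\ub$, which lie in the zero set of the nonzero polynomial $f_d$ of degree $\leq \degpol - d$, we apply the inductive hypothesis to $f_d$ to see there are few such $\ub$ counted with multiplicity, and for each we bound the inner sum over $a_\numvar$ by $\degpol - (\text{order of }f_d\text{ at }\ub)\cdot(\text{something})$ — carefully telescoping so the two contributions add up to $\degpol\cdot(q{-}1)^{\numvar-1}$.

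Altogether, the accounting I expect to work is: $\sum_{\ua}\ord_{\ua}(f) \le d\cdot(q{-}1)^{\numvar-1} + (q{-}1)\cdot\sum_{\ub}\ord_{\ub}(f_d) \le d\cdot(q{-}1)^{\numvar-1} + (q{-}1)\cdot(\degpol - d)\cdot(q{-}1)^{\numvar-2} = \degpol\cdot(q{-}1)^{\numvar-1}$, where the middle inequality decomposes $\ord_{(\ub,a_\numvar)}(f)$ into the ``generic in $X_\numvar$'' part bounded by $d$ and an ``excess'' part controlled by how much $f_d$ itself vanishes at $\ub$, and the last step is the inductive hypothesis applied to $f_d$. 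The delicate point, as noted, is justifying that inequality $\ord_{(\ub,a_\numvar)}(f) \le d + \ord_\ub(f_d)$ — or a summed version of it — which amounts to a careful local analysis of the ideal $\m_{(\ub,a_\numvar)}$ and the fact that $f$ modulo high powers of this maximal ideal is governed by the Taylor expansion in $X_\numvar$ whose top coefficient is (a unit times) $f_d$; I would make this rigorous by passing to the local ring at $\ua$ and using that $X_\numvar - a_\numvar$ together with $X_1-b_1,\dots,X_{\numvar-1}-b_{\numvar-1}$ form a regular system of parameters.
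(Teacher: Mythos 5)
Your overall strategy --- peeling off the leading coefficient $f_d$ of $f$ in $X_\numvar$ and closing an induction on $\numvar$ via the identity $d\,(q{-}1)^{\numvar-1}+(q{-}1)(\degpol{-}d)(q{-}1)^{\numvar-2}=\degpol\,(q{-}1)^{\numvar-1}$ --- is sound and genuinely different from the paper's proof, which runs a lexicographic induction on $(\numvar,\degpol)$, factoring out $X_\numvar-b$ whenever some slice $f(\cdot,b)$ vanishes identically and otherwise using only the substitution inequality. However, there is a genuine gap at the step you yourself flag. Your accounting requires, for each fixed $\ub\in(\F^\times)^{\numvar-1}$, the \emph{summed} inequality
\[
\sum_{a_\numvar\in\F^\times}\ord_{(\ub,a_\numvar)}(f)\;\le\; d+(q{-}1)\cdot\ord_{\ub}(f_d),
\]
whereas the justification you propose (Taylor expansion of $f$ in $X_\numvar-a_\numvar$ in the local ring at the single point $(\ub,a_\numvar)$, whose top coefficient is $f_d$) yields only the \emph{pointwise} bound $\ord_{(\ub,a_\numvar)}(f)\le d+\ord_\ub(f_d)$. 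Summing the pointwise bound over $a_\numvar$ gives $(q{-}1)d+(q{-}1)\ord_\ub(f_d)$ per $\ub$, hence a first term $d\,(q{-}1)^{\numvar}$ instead of $d\,(q{-}1)^{\numvar-1}$, and the induction no longer closes. The content of the summed inequality is that the ``degree-$d$ worth of vanishing in the $X_\numvar$-direction'' is shared among all $q{-}1$ fibre points rather than available at each of them; this cannot be extracted by working at one point of the fibre at a time, so the regime $f_d(\ub)=0$ (the only one you have not actually proved --- when $f_d(\ub)\neq 0$ your specialization argument is fine) remains open.

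The summed inequality is true, so your plan is completable, but it needs a real argument. One route: let $v=\ord_\ub$, which extends to a discrete valuation on $K=\F(X_1,\dots,X_{\numvar-1})$ because the associated graded ring at $\ub$ is a domain; check that $\ord_{(\ub,a)}$ coincides with the Gauss valuation $w_a\big(\sum_k g_k(X_\numvar-a)^k\big)=\min_k\big(v(g_k)+k\big)$, which is multiplicative; factor $f=f_d\prod_{i=1}^{d}(X_\numvar-\alpha_i)$ over an algebraic closure of $K$ and note that $w_a(X_\numvar-\alpha_i)=\min\big(1,\bar v(a-\alpha_i)\big)$ can be positive for at most one $a\in\F^\times$, since two such $a$ would force $\bar v(a-a')>0$ while $v$ is trivial on $\F^\times$. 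This gives $\sum_{a}w_a(f)\le(q{-}1)v(f_d)+d$ as required. That argument is essentially the heart of the proposition and is absent from your proposal; the paper's case split (factor out $X_\numvar-b$ when a slice vanishes, substitute otherwise) is designed precisely to avoid having to prove it.
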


\begin{proof} See \cite[Lemma 2.7]{DKSS13}.
\end{proof}

We now move to the case of multivariate Ore polynomials, for which we have a
direct analogue of Proposition~\ref{prop:boundzeroes}.

\begin{theorem}
\label{th:boundzeroesOre}
Let $f \in \LL[\uX; \utheta]$ be a nonzero Ore polynomial of total
degree at most $\degpol$. Then
\[
  \sum_{\gamma \in H} \dim_{\F} \ker \epsilon_{\tilde \gamma}(f)
\leq r\degpol \cdot (q{-}1)^{\numvar-1}.
\]
\end{theorem}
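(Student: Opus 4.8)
The plan is to reduce Theorem~\ref{th:boundzeroesOre} to the commutative statement of Proposition~\ref{prop:boundzeroes} via the reduced norm. The two key tools are already in place: Theorem~\ref{th:dimker}, which says that for each $\gamma \in H$ one has $\dim_{\F} \ker \epsilon_{\tilde \gamma}(f) \leq \ord_\gamma(\Nrd(f))$; and Lemma~\ref{lem:nrddegree}, which says that $\Nrd(f)$ lies in $\F[\uX^L] \cap \F[\uX]$ and has total degree at most $r\degpol$ in the variables $\uX$. So the inequality to prove becomes
\[
\sum_{\gamma \in H} \ord_\gamma\big(\Nrd(f)\big) \leq r\degpol \cdot (q{-}1)^{\numvar-1},
\]
which, since $g := \Nrd(f)$ is now an honest commutative polynomial of degree $\leq r\degpol$, looks like a direct application of Proposition~\ref{prop:boundzeroes} — provided one correctly matches up the central vanishing orders $\ord_\gamma$ with the affine vanishing orders $\ord_{\ua}$ over $(\F^\times)^\numvar$.

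**The dictionary between $\gamma$'s and points.** The main obstacle is precisely this bookkeeping step. An element $\ua = (a_1, \ldots, a_\numvar) \in (\F^\times)^\numvar$ gives a maximal ideal $\m_{\ua}$ of $\F[\uX]$, hence by restriction a maximal ideal of the subring $\F[\uX^L]$, equivalently a character sending $\uX^{\uu} \mapsto \ua^{\uu} = a_1^{u_1}\cdots a_\numvar^{u_\numvar}$ for $\uu \in L$; call this character $\gamma_{\ua}$. The map $\ua \mapsto \gamma_{\ua}$ is a group homomorphism $(\F^\times)^\numvar \to H$. First I would check it is surjective: since $L$ has finite index in $\Z^\numvar$ and $\F^\times$ is cyclic, any $\gamma \in H$ extends (non-uniquely) to a homomorphism $\Z^\numvar \to \F^\times$, and such a homomorphism is given by choosing the images of the standard basis vectors, i.e.\ by a point of $(\F^\times)^\numvar$. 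Next I would count the fibres: the kernel of $\ua \mapsto \gamma_{\ua}$ consists of those $\ua$ with $\ua^{\uu} = 1$ for all $\uu \in L$, and a short computation with the elementary-divisor basis of $L$ (as in the proof of Lemma~\ref{lem:cocycle}) shows this kernel has size exactly $r$; hence every $\gamma \in H$ is hit by exactly $r$ points $\ua \in (\F^\times)^\numvar$. Finally, for a point $\ua$ with $\gamma = \gamma_{\ua}$, the ideal $\m_\gamma \subseteq \F[\uX^L]$ generates, inside $\F[\uX^{\pm 1}]$, the maximal ideal $\m_{\ua}$ (this uses that $\F[\uX^{\pm 1}]$ is free of rank $r$ over $\F[\uX^L]$ and that the two residue fields agree, both being $\F$), so that $\ord_\gamma(g) = \ord_{\ua}(g)$ for our central — in fact affine — polynomial $g$.

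**Conclusion.** Summing the inequality $\dim_{\F} \ker \epsilon_{\tilde \gamma}(f) \leq \ord_\gamma(\Nrd(f))$ over $\gamma \in H$, and using that each $\gamma$ equals $\gamma_{\ua}$ for exactly $r$ points $\ua \in (\F^\times)^\numvar$, we get
\[
\sum_{\gamma \in H} \dim_{\F} \ker \epsilon_{\tilde \gamma}(f)
\leq \sum_{\gamma \in H} \ord_\gamma(g)
= \frac{1}{r} \sum_{\ua \in (\F^\times)^\numvar} \ord_{\ua}(g).
\]
Since $g = \Nrd(f)$ is nonzero (as $f \neq 0$ and $\Nrd$ is multiplicative with values in an integral domain — here one should remark that $\Nrd(f) = 0$ would force $\mu_f$ to be non-injective, hence $f$ a zero divisor, impossible in $\LL[\uX^{\pm1};\utheta]$) and has total degree at most $r\degpol$ by Lemma~\ref{lem:nrddegree}, Proposition~\ref{prop:boundzeroes} bounds the last sum by $r\degpol \cdot (q{-}1)^{\numvar-1}$. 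Dividing by $r$ — wait, one multiplies: the displayed chain already has the factor $\tfrac1r$, and $\tfrac1r \cdot r\degpol \cdot (q{-}1)^{\numvar-1} = \degpol \cdot (q{-}1)^{\numvar-1}$, which is \emph{too strong}; so in fact the correct matching must be that each $\gamma$ corresponds to $r$ points and one bounds $\ord_\gamma(g)$ by $\ord_{\ua}(g)$ for a \emph{single} chosen representative $\ua$, giving $\sum_\gamma \ord_\gamma(g) \leq \sum_{\ua} \ord_{\ua}(g) \leq r\degpol \cdot (q{-}1)^{\numvar-1}$ directly. This indexing subtlety — whether to sum over all $r$ preimages or pick one — is exactly the point to get right, and I would resolve it by picking, for each $\gamma$, one representative point $\ua(\gamma)$, noting the map $\gamma \mapsto \ua(\gamma)$ is injective into $(\F^\times)^\numvar$, and applying Proposition~\ref{prop:boundzeroes} to the resulting subsum.
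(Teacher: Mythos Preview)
Your overall strategy is exactly the paper's: reduce to Proposition~\ref{prop:boundzeroes} via the reduced norm, using Theorem~\ref{th:dimker} and Lemma~\ref{lem:nrddegree}. The gap is in the ``dictionary'' step, and you in fact trip over it yourself when the arithmetic comes out too strong.

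The claim that every $\gamma \in H$ extends to a group homomorphism $\Z^\numvar \to \F^\times$ is false in general. Take a basis $(\uv_1,\ldots,\uv_\numvar)$ of $\Z^\numvar$ with $(\uv_1,\ldots,\uv_{\numvar-1}, r\uv_\numvar)$ a basis of $L$; then $\gamma$ extends if and only if $\gamma(r\uv_\numvar)$ is an $r$-th power in $\F^\times$, and since $[\F^\times : (\F^\times)^r] = \gcd(r,\,q{-}1)$ this fails for a positive proportion of $\gamma$ whenever $\gcd(r,\,q{-}1)>1$. Equivalently, the restriction map $(\F^\times)^\numvar \simeq \Homgrp(\Z^\numvar,\F^\times) \to H$ has kernel of size $\gcd(r,\,q{-}1)$ (not $r$), and since domain and codomain both have cardinality $(q{-}1)^\numvar$, the map is surjective only when $\gcd(r,\,q{-}1)=1$. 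Your fallback at the end---pick one representative $\ua(\gamma)$ per $\gamma$---still presupposes this surjectivity and so does not repair the argument. The too-strong bound $\degpol\,(q{-}1)^{\numvar-1}$ you obtained was exactly the symptom of the miscounted kernel.

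The paper's remedy is to twist. For each $t$ in a set $R$ of representatives of $\F^\times/(\F^\times)^r$ it introduces an automorphism $\sigma_t$ of $\F[\uX^L]$ (rescaling $\uX^{r\uv_\numvar}$ by $t$) together with a permutation $\rho_t$ of $H$ such that $\sigma_t(\m_\gamma)=\m_{\rho_t(\gamma)}$. The map $R\times \Homgrp(\Z^\numvar,\F^\times)\to H$, $(t,\gamma')\mapsto \rho_t(\gamma'_{|L})$, is then surjective with fibres of constant size $\gcd(r,\,q{-}1)$, which lets one rewrite $\sum_{\gamma\in H}\ord_\gamma(\Nrd(f))$ as an average over $t\in R$ of sums $\sum_{\ua\in(\F^\times)^\numvar}\ord_{\ua}\big(\sigma_t(\Nrd(f))\big)$; each $\sigma_t(\Nrd(f))$ still has total degree at most $r\degpol$, so Proposition~\ref{prop:boundzeroes} applies termwise. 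Note also that only the inequality $\ord_\gamma(g)\le \ord_{\ua}(g)$ (immediate from $\m_\gamma\subset\m_{\ua}$ in $\F[\uX^{\pm1}]$) is used; the equality you assert would require an additional unramifiedness argument that is neither needed nor provided.
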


\begin{proof}
The basic idea of the proof is to use Theorem~\ref{th:dimker}
and to apply Proposition~\ref{prop:boundzeroes} to the reduced
norm of~$f$; however, this requires some precaution.
On the one hand, we know from Lemma~\ref{lem:nrddegree} that the
reduced norm of $f$ is a polynomial in $\F[\uX^L]$ of total degree 
at most $r\degpol$. On the other hand,
we need to be careful before applying Proposition~\ref{prop:boundzeroes} 
because the evaluation points we are interested in correspond to group 
homomorphisms $L \to \F^\times$, which are not exactly the classical 
ones (which rather correspond to morphisms $\Z^\numvar \to \F^\times$).

In order to relate them, two ingredients are needed.
First of all, we need to compare the orders of vanishing for
the two types of evaluation points we are dealing with. 
Let then $g \in \Fq[\uX^L]$ be a central function. Of course,
thanks to the inclusion $\Fq[\uX^L] \subset \Fq[\uX^{\pm 1}]$,
$g$ can also be considered as a classical Laurent polynomial.
Let $\ua \in (\F^\times)^\numvar$ be an evaluation point. To it, we
attach the group morphism $\gamma' : \Z^\numvar \to \F^\times$,
$\uu \mapsto \ua^{\uu}$ where by definition 
$\ua^{\uu} = a_1^{u_1} \cdots a_\numvar^{u_\numvar}$ (with obvious notation).
We define $\gamma = \gamma'_{|L}$ as the restriction of $\gamma'$
to the lattice $L \subset \Z^\numvar$.
As in Proposition~\ref{prop:boundzeroes}, we consider
the ideal generated by $X_i - a_i$, $1 \leq i \leq \numvar$. However, 
for this proof, it will be more convenient to work over the ring
$\Fq[\uX^{\pm 1}]$ (instead of $\Fq[\uX]$). For this reason, we
define $\m_{\ua}$ by
\[
  \m_{\ua}
  = \left< X_1 - a_1, \cdots, X_\numvar - a_\numvar \right>_{\Fq[\uX^{\pm 1}]}
\]
where the notation means that we consider the generated ideal.
We underline that this modification does not change the order of
vanishing at $\ua$ since the coordinates of $\ua$ do not vanish by
assumption. However, it allows us to perform changes-of-basis.
Precisely, we consider a basis $(\uv_1, \ldots, \uv_\numvar)$ of $\Z^\numvar$
such that $(\uv_1, \ldots, \uv_{\numvar-1}, r\uv_\numvar)$ is a basis of $L$. 
We have
\[
  \m_{\ua}
  = \left< \uX^{\uv_1} - \gamma'(\uv_1),\, \ldots,\, 
           \uX^{\uv_\numvar} - \gamma'(\uv_\numvar) \right>_{\Fq[\uX^{\pm 1}]}.
\]
Similarly, we know that
\begin{align*}
  \m_{\gamma}
& = \left< \uX^{\uv_1} - \gamma(\uv_1),\, \ldots,\, \uX^{\uv_{\numvar-1}} - \gamma(\uv_{\numvar-1}),\,
           \uX^{r\uv_\numvar} - \gamma(r\uv_\numvar) \right>_{\Fq[\uX^L]} \\
& = \left< \uX^{\uv_1} - \gamma'(\uv_1),\, \ldots,\, \uX^{\uv_{\numvar-1}} - \gamma'(\uv_{\numvar-1}),\,
           \uX^{r\uv_\numvar} - \gamma'(\uv_\numvar)^r \right>_{\Fq[\uX^L]}.
\end{align*}
Observing that $\uX^{r\uv_\numvar} - \gamma'(\uv_\numvar)^r$ is a multiple of
$\uX^{\uv_\numvar} - \gamma'(\uv_\numvar)$ in $\Fq[\uX^{\pm 1}]$, we conclude
that $\m_{\gamma} \subset \m_{\ua}$. For all nonnegative integer
$v$, we thus have $\m_\gamma^v \subset \m_{\ua}^v$ as well, which
finally shows that
\begin{equation}
\label{eq:inegord}
\ord_\gamma(g) \leq \ord_{\ua}(g).
\end{equation}

The second important ingredient we shall need is a study of the 
prolongations to $\Z^\numvar$ of group morphisms $L \to \F^\times$. 
Continuing to work in our distinguished basis $(\uv_1, \ldots,
\uv_\numvar)$, we see that
a morphism $\gamma \in H$ extends to $\Z^\numvar$ if and only if 
$\gamma(r \uv_\numvar)$ is a $r$-th power in $\F^\times$; in particular,
it is not always the case. In order to handle
this difficulty, we introduce, for each $t \in \F^\times$, the 
endomorphism of $\F$-algebras $\sigma_t : \F[\uX^L] \to \F[\uX^L]$ 
defined by
$$\sigma_t: \quad 
  \uX^{\uv_i} \mapsto \uX^{\uv_i} \,\, (1 \leq i < \numvar), \quad
  \uX^{r\uv_\numvar} \mapsto t \uX^{r\uv_\numvar}$$
and similarly, we introduce the map $\rho_t : H \to H$ that takes
$\gamma$ to the group homomorphism $\rho_t(\gamma)$ defined by
$$\rho_t(\gamma): \quad 
  \uv_i \mapsto \gamma(\uv_i) \,\, (1 \leq i < \numvar), \quad
  r\uv_\numvar \mapsto t \cdot \gamma(r\uv_\numvar).$$
One easily checks that $\sigma_t$ is an isomorphism (with 
inverse $\sigma_{t^{-1}}$) and that it takes the maximal ideal 
$\m_{\gamma}$ to $\m_{\rho_t(\gamma)}$.

Let $R \subset \F^\times$ be a set of 
representatives of the quotient $\F^\times / (\F^\times)^r$
where $(\F^\times)^r$ denotes the subgroup of $\F^\times$ of $r$-th
powers. Since $\F^\times$ is cyclic of order $q{-}1$, $R$ has
cardinality $\gcd(r, q{-}1)$. We also consider the set $H'$ of
group homomorphisms $\Z^\numvar \to \F^\times$, together with the map
$\iota : R \times H' \to H$,
  $(t, \gamma') \mapsto \rho_t\big(\gamma'_{|L}\big)$.
We claim that the preimage of any $\gamma \in H$ under
$\iota$ has cardinality $\gcd(r, q{-}1)$.
Indeed, a pair $(t, \gamma')$ has image $\gamma$ if and only if
$\gamma'(\uv_i) = \gamma(\uv_i)$ for $i \in \{1, \ldots, \numvar{-}1\}$
and 
\begin{equation}
\label{eq:rthroot}
t \cdot \gamma'(\uv_\numvar)^r = \gamma(r \uv_\numvar).
\end{equation}
The latter condition is realized for exactly one element $t \in
R$, namely the representative of $\gamma(r \uv_\numvar)$. Besides, once
$t$ is known, the solutions of Equation~\eqref{eq:rthroot} are 
in (noncanonical) one-to-one correspondence with the group 
$\mu_r(\F)$ of $r$-th roots of unity in $\F$. Using again that
$\F^\times$ is cyclic of order $q{-}1$, we find that the
cardinality of $\mu_r(\F)$ is $\gcd(r, q{-}1)$, which proves our 
claim.

Summing over all pairs $(t, \gamma') \in R \times H'$ and
using the inequality~\eqref{eq:inegord}, we end up with
\begin{equation}
\label{eq:sumH}
\sum_{\gamma \in H} \ord_{\gamma}\big(\Nrd(f)\big)
= \frac 1{\gcd(r, q{-}1)} \sum_{t \in R} 
\sum_{\ua \in (\F^\times)^\numvar} \ord_{\ua}\big(\sigma_t\big(\Nrd(f)\big)\big).
\end{equation}
We recall from Lemma~\ref{lem:nrddegree} that
$\Nrd(f)$ has total degree at most $r\degpol$. It is then also the
case for all the $\sigma_t\big(\Nrd(f)\big)$ since applying $\sigma_t$
only affects the coefficients, leaving the exponents unchanged.
Therefore we can apply Proposition~\ref{prop:boundzeroes} to
those polynomials and obtain
\[
  \sum_{\ua \in (\F^\times)^\numvar} \ord_{\ua}\big(\sigma_t\big(\Nrd(f)\big)\big)
  \leq r\degpol \cdot (q{-}1)^{\numvar-1}
\]
for each individual $t \in R$. Since $R$ has cardinality 
$\gcd(r, q{-}1)$, 
combining with Equation~\eqref{eq:sumH} and Theorem~\ref{th:dimker},
we finally get the theorem.
\end{proof}

\begin{remark} 
\label{rem:boundzeroes}
For some choices of $\numvar$ and $\ue$, the bound of 
Theorem~\ref{th:boundzeroesOre} is sharp.
For example, it is the case for $\LL[X, Y; \text{id}, \Phi]$:
the bound is attained for instance with the polynomials 
$(X - a_1) \cdots (X - a_\degpol)$ where $a_1, \ldots, a_\degpol$ 
are pairwise distinct elements of $\F^\times$.
However, there are other parameters $(\numvar, \ue)$ for which the bound 
is not tight. In particular, when $\numvar = 2$ and $\ue = (r_1, r_2)$ with 
$r_1 < r_2$, $r_1 r_2 = r$ and $\gcd(r_1, r_2) = 1$ then, using the 
same techniques, one can show that
\[
  \sum_{\gamma \in H} \dim_{\F} \ker \epsilon_{\tilde \gamma}(f)
\leq r_2 \degpol \cdot (q{-}1)^{\numvar-1}
\]
improving then the upper bound of Theorem~\ref{th:boundzeroesOre} 
by a factor $r_1$.
It could be interesting to study these phenomena in more details.
\end{remark}

After this preparation, we are now ready to prove
Theorem~\ref{th:codeparameters}.

\begin{proof}[Proof of Theorem~\ref{th:codeparameters}]
Let $f \in \LL[\uX^{\pm 1}; \utheta]_{\leq \degpol}$.
It follows from Theorem~\ref{th:boundzeroesOre} that
\begin{align*}
  w_\srk\big(\epsilon(f)\big)
& = \sum_{\gamma \in H} \rk\: \epsilon_{\tilde \gamma}(f) \\
& = r \cdot (q{-}1)^\numvar 
  - \sum_{\gamma \in H} \dim_{\F} \ker \epsilon_{\tilde \gamma}(f) 
  \geq r \cdot (q{-}1)^{\numvar-1} \cdot (q-1-\degpol),
\end{align*}
hence the bound on the minimum distance. The same computation shows
in addition that $\epsilon$ is injective when restricted to the
subspace $\LL[\uX^{\pm 1}; \utheta]_{\leq \degpol}$. Therefore, the 
dimension of the code $\LRM(\ue;\,\degpol)$ is the same as the dimension
of $\LL[\uX^{\pm 1}; \utheta]_{\leq \degpol}$, \ie it is the number
of monomials in $\numvar$ variables of degree at most $\degpol$. A standard
computation indicates that it is the binomial coefficient
$\binom{\degpol+\numvar}\degpol$ as claimed.
\end{proof}

\subsection{Improving on the parameters}\label{ssec:improvedparams}

In what precedes, we have built our codes by restricting to Ore
polynomials of bounded total degree. This is certainly the most
natural thing to do; however, as we shall see, allowing
for more flexibility could sometimes lead to codes with better
parameters.

\begin{definition}
\label{def:support}
Let $f = \sum_{\uu \in \Z^\numvar} a_{\uu} \uX^{\uu} \in \LL[\uX^{\pm 1};
\utheta\big]$. The \emph{support} of $f$, denoted by $\Supp(f)$, is
the subset of $\Z^\numvar$ consisting of tuples $\uu$ for which $a_{\uu}$
does not vanish.

For a convex subset $C \subset \R^\numvar$, we let 
$\LL[\uX^{\pm 1}; \utheta]_C$ denote the subspace of 
$\LL[\uX^{\pm 1}; \utheta]$ consisting of Ore polynomials $f$
with $\Supp(f) \subset C$.
\end{definition}

Clearly $\LL[\uX^{\pm 1}; \utheta]_C$ is a $\LL$-vector subspace of 
$\LL[\uX^{\pm 1}; \utheta]$. A basis of it is given by the monomials
$\uX^{\uu}$ for $\uu$ running over the intersection $C \cap \Z^\numvar$.
In particular, it is finite dimensional when the latter intersection
is finite; this occurs for instance as soon as $C$ is compact.

\begin{definition}
\label{def:codesLRMrefined}
Let $C$ be a compact convex subset of $\R^\numvar$.
The \emph{linearized Reed-Muller code} associated to $\ue$ and 
$C$ is
$\LRM(\ue;\,C) = \epsilon\big(\LL[\uX^{\pm 1}; \utheta]_C\big)$.
\end{definition}

Beyond noticing that all the codes $\LRM(\ue;\,C)$ have length
$r{\cdot}(q{-}1)^\numvar$ (since they all sit in $\HH$), studying them
in full generality looks difficult.
There is however a special case for which a lot can be said. 
Let $\underline \uw = (\uw_1, \ldots, \uw_\numvar)$ be a basis of $L$ and 
let $S_{\underline \uw}$ be the simplex associated to it:
\[
  S_{\underline \uw}
  = \big\{\, \lambda_1 \uw_1 + \cdots + \lambda_\numvar \uw_\numvar \,\, : \,\,
      \lambda_i \in \R^+, \,\lambda_1 + \cdots + \lambda_\numvar \leq 1 \, \big\}.
\]
More generally, given an extra positive integer $\degpol$,
we consider its $\degpol$-dilation:
\[
  \degpol S_{\underline \uw}
  = \big\{\, \lambda_1 \uw_1 + \cdots + \lambda_\numvar \uw_\numvar \,\, : \,\,
      \lambda_i \in \R^+, \,\lambda_1 + \cdots + \lambda_\numvar \leq \degpol \, \big\}.
\]
When $\underline \uw$ and $\degpol$ vary, we obtain a family of codes
$\LRM(\ue;\,\degpol S_{\underline \uw})$ exhibiting quite nice properties.
To start with, we mention that a famous theorem of Ehrhart~\cite{Eh62} 
tells us that the number of integer points inside $\degpol S_{\underline \uw}$ 
varies quite regularly with respect to $\degpol$. More precisely, there 
exists a polynomial $P_{\underline \uw}(X)$, depending only on
$\underline \uw$ such that
$\Card\big(\degpol S_{\underline \uw} \cap \Z^\numvar\big) 
 = P_{\underline \uw}(\degpol)$ for all nonnegative integer $\degpol$. 
Besides, we know that $P_{\underline \uw}(X)$ has degree $\numvar$, that
its constant coefficient is $1$ and
that its leading coefficient is $\text{Vol}(S_{\underline \uw}) = 
\frac r{\numvar!}$. For $\degpol$ going to infinity, we then have the estimation
\begin{equation}
\label{eq:asympdim}
   \Card\big(\degpol S_{\underline \uw} \cap \Z^\numvar\big)
 = P_{\underline \uw}(\degpol) 
 = \frac{r {\cdot} \degpol^\numvar}{\numvar!} + O(\degpol^{\numvar-1}).
\end{equation}
A general upper bound on Ehrhart's polynomials is also known.
Precisely \cite[Theorem~7.a]{BeMc85} tells us that
\begin{equation}
\label{eq:bounddim}
  P_{\underline \uw}(\degpol) \leq
  \binom{\numvar+\degpol-1}{\numvar} \cdot r + \binom{\numvar+\degpol-1}{\numvar-1} = 
  \frac{(\degpol+1) \cdots (\degpol+\numvar-1) \cdot (r\degpol+\numvar)}{\numvar!}.
\end{equation}
for all nonnegative integer $\degpol$.

\begin{theorem}
\label{th:codeparameters2}
We keep the previous notation and assume that $\degpol$ is an integer
between $0$ and $q{-}2$. Then the dimension $k$ and the minimum
distance $\mindist$ of $\LRM(\ue;\,\degpol S_{\underline \uw})$ satisfy
\[
  k = \Card\big(\degpol S_{\underline \uw} \cap \Z^\numvar\big) 
    = P_{\underline \uw}(\degpol)
  \quad \text{and} \quad
  \mindist \geq r \cdot (q{-}1)^{\numvar-1} \cdot (q-1-\degpol).
\]
\end{theorem}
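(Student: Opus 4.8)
The plan is to mirror the proof of Theorem~\ref{th:codeparameters}, replacing the role of the total-degree filtration by the Ehrhart-type combinatorics of the dilated simplex $\degpol S_{\underline\uw}$. The dimension statement is almost immediate: as recalled just before the theorem, a basis of $\LL[\uX^{\pm 1};\utheta]_{\degpol S_{\underline\uw}}$ is given by the monomials $\uX^{\uu}$ with $\uu \in \degpol S_{\underline\uw} \cap \Z^\numvar$, so $\dim_{\LL}\LL[\uX^{\pm 1};\utheta]_{\degpol S_{\underline\uw}} = \Card(\degpol S_{\underline\uw}\cap\Z^\numvar) = P_{\underline\uw}(\degpol)$ by Ehrhart's theorem. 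It then remains to show that $\epsilon$ is injective on this subspace, which will follow from the minimum-distance bound exactly as in the proof of Theorem~\ref{th:codeparameters} (any $f$ in the kernel of $\epsilon$ would be a codeword of weight $0$, forcing $\degpol \geq q-1$, contrary to hypothesis).

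**The minimum distance.** The heart of the argument is to bound $\sum_{\gamma\in H}\dim_{\F}\ker\epsilon_{\tilde\gamma}(f)$ for a nonzero $f$ with $\Supp(f)\subset \degpol S_{\underline\uw}$. First I would reduce, via Theorem~\ref{th:dimker}, to bounding $\sum_{\gamma\in H}\ord_\gamma(\Nrd(f))$. The key new input is a control on the support of the reduced norm. Choosing the distinguished basis $(\uw_1,\ldots,\uw_{\numvar-1},r\uw_\numvar)$ of $L$ adapted to $\underline\uw$ (here I would need $\underline\uw = (\uw_1,\ldots,\uw_\numvar)$ with a matching basis of $\Z^\numvar$; this uses the structure theorem as in Lemma~\ref{lem:cocycle}), one works in the subring $\CC_2 = \LL[\uX^L]$: the matrix $\mathrm{Mat}(\mu_f)$ in the basis $(1,\uX^{\uw_\numvar},\ldots,\uX^{(r-1)\uw_\numvar})$ has entries whose supports, when re-expressed in the $\uX^{r\uw_\numvar}$-variable, are controlled by the simplex, so that $\Nrd(f) = \det\mathrm{Mat}(\mu_f)$ has support contained in $r\degpol S_{\underline\uw}$ (the dilation factor $r$ coming from the size of the determinant, just as the degree bound $r\degpol$ arose in Lemma~\ref{lem:nrddegree}). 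Combined with the transfer map $\iota: R\times H' \to H$ and the identity~\eqref{eq:sumH} from the proof of Theorem~\ref{th:boundzeroesOre}, everything reduces to a statement about classical Laurent polynomials: if $g\in\F[\uX^{\pm1}]$ has support inside $r\degpol$ copies of a lattice simplex, then $\sum_{\ua\in(\F^\times)^\numvar}\ord_{\ua}(g) \leq r\degpol\cdot(q-1)^{\numvar-1}$.

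**The combinatorial core.** This last inequality is where the real work lies, and it is the analogue of Proposition~\ref{prop:boundzeroes} for polynomials supported on a simplex rather than merely of bounded total degree. The clean way to obtain it is to choose an $\F$-linear change of the torus coordinates realising the basis $\underline\uw$: substituting $\uX^{\uw_i}$ for new variables $Y_i$ transforms $g$ into an honest polynomial (not Laurent) in $Y_1,\ldots,Y_\numvar$ of total degree at most $r\degpol$, while the map $(\F^\times)^\numvar\to(\F^\times)^\numvar$ induced on points is a bijection (being the group automorphism of $(\F^\times)^\numvar$ attached to the $\mathrm{GL}_\numvar(\Z)$ matrix expressing $\underline\uw$ in the standard basis) and preserves orders of vanishing. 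One then applies Proposition~\ref{prop:boundzeroes} verbatim in the $Y$-coordinates and transports the bound back. The main obstacle I anticipate is bookkeeping: making sure the change of variables, the dilation factor $r$ in the reduced norm, and the transfer through $R\times H'$ all interact correctly, in particular that "support in $\degpol S_{\underline\uw}$" is genuinely sent to "total $Y$-degree $\leq\degpol$" under the coordinate change. Once this is pinned down, the final estimate
\[
  w_\srk(\epsilon(f)) = r\cdot(q{-}1)^\numvar - \sum_{\gamma\in H}\dim_{\F}\ker\epsilon_{\tilde\gamma}(f)
  \geq r\cdot(q{-}1)^{\numvar-1}\cdot(q-1-\degpol)
\]
follows exactly as in the proof of Theorem~\ref{th:codeparameters}, and injectivity of $\epsilon$ on $\LL[\uX^{\pm1};\utheta]_{\degpol S_{\underline\uw}}$ is a consequence, completing the proof.
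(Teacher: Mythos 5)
Your overall strategy is the right one and coincides with the paper's: dimension by counting lattice points in $\degpol S_{\underline\uw}$ plus injectivity of $\epsilon$, and minimum distance by bounding the support of $\Nrd(f)$ by $r\degpol S_{\underline\uw}$ and then invoking Theorem~\ref{th:dimker} together with Proposition~\ref{prop:boundzeroes} in the coordinates $Y_i = \uX^{\uw_i}$. However, the step where you match the sum over $\gamma \in H$ with a sum over classical evaluation points is justified incorrectly, and that is exactly the point that has to be made to work. First, the matrix expressing $\underline\uw$ in the standard basis of $\Z^\numvar$ is \emph{not} in $\mathrm{GL}_\numvar(\Z)$: since $\underline\uw$ is a basis of $L$ and $[\Z^\numvar : L] = r$, its determinant is $\pm r$, so the induced endomorphism of the torus $(\F^\times)^\numvar$ is in general neither injective nor surjective, and there is no order-preserving bijection of the $\uX$-torus through which to ``transport the bound back''. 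Second, the transfer machinery ($R$, $H'$, $\sigma_t$, Equation~\eqref{eq:sumH}) from the proof of Theorem~\ref{th:boundzeroesOre} does not apply here as stated: it presupposes a basis of $\Z^\numvar$ adapted to $L$, whereas the given $\underline\uw$ need not be of the form $(\uv_1,\ldots,\uv_{\numvar-1},r\uv_\numvar)$ (e.g.\ the basis $((0,2),(2,1))$ of $L$ for $\ue=(3,2)$, $r=4$ is not), and it compares $\ord_\gamma$ with orders of vanishing in the $\uX$-coordinates, in which $\Nrd(f)$ is a Laurent polynomial whose total degree is \emph{not} bounded by $r\degpol$.

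The repair is simpler than what you propose, and it is what makes this proof a ``significant simplification'' of that of Theorem~\ref{th:boundzeroesOre}. Since $\underline\uw$ is a basis of $L$, the centre $\F[\uX^L]$ is equal to the Laurent polynomial ring $\F[Y_1^{\pm1},\ldots,Y_\numvar^{\pm1}]$, and $H = \Homgrp(L,\F^\times)$ is canonically in bijection with $(\F^\times)^\numvar$ via $\gamma \mapsto (\gamma(\uw_1),\ldots,\gamma(\uw_\numvar))$; under this identification $\m_\gamma$ is the classical maximal ideal at that point and $\ord_\gamma$ is the classical order of vanishing in the $Y$-coordinates. Since $\Supp(\Nrd(f)) \subset r\degpol S_{\underline\uw}$ means precisely that $\Nrd(f)$ is an honest polynomial of total degree at most $r\degpol$ in $Y_1,\ldots,Y_\numvar$, Proposition~\ref{prop:boundzeroes} applies verbatim and yields $\sum_{\gamma\in H}\ord_\gamma\big(\Nrd(f)\big) \leq r\degpol\,(q-1)^{\numvar-1}$ with no transfer and no passage through the $\uX$-torus; the rest of your argument then goes through. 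One further small correction: the support bound on $\Nrd(f)$ is obtained by repeating the proof of Lemma~\ref{lem:nrddegree}, i.e.\ working over $\CC_1=\F[\uX^{\pm1}]$ with a basis of $\LL$ over $\F$; the family $(1,\uX^{\uw_\numvar},\ldots,\uX^{(r-1)\uw_\numvar})$ you write down is not a basis of $\LL[\uX^{\pm 1};\utheta]$ over $\CC_2$, because $\uw_\numvar\in L$ makes $\uX^{\uw_\numvar}$ central.
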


\begin{proof}
The proof follows the same pattern than that of
Theorem~\ref{th:codeparameters}, with significant simplifications.
For $i \in \{1, \ldots, \numvar\}$, write $Y_i = \uX^{\uw_i}$.
Let $f \in \LL[\uX^{\pm 1}; \utheta]_C$. Repeating the proof of
Lemma~\ref{lem:nrddegree}, we find that the reduced norm $\Nrd(f)$
has support included in $r\degpol S_{\underline\uw}$. When viewed as a
polynomial in $Y_1, \ldots, Y_\numvar$, it thus has total degree at
most $r\degpol$. Therefore, we can apply Proposition~\ref{prop:boundzeroes}
directly and get
\[
  \sum_{\gamma \in H} \ord_\gamma\big(\Nrd(f)\big) \leq
  \degpol \cdot (q{-}1)^{\numvar-1}.
\]
Using Theorem~\ref{th:dimker}, we obtain
\[
  \sum_{\gamma \in H} \dim_{\F} \ker \epsilon_{\tilde\gamma}(f) \leq
  \degpol \cdot (q{-}1)^{\numvar-1}
\]
and repeating the final argument of the proof of 
Theorem~\ref{th:codeparameters}, we conclude that the sum-rank
weight of $\epsilon(f)$ is at least $r{\cdot}(q{-}1)^{\numvar-1}{\cdot}(q-1-\degpol)$.
This gives the desired bound on the minimum distance. The formula for
the dimension follows as well.
\end{proof}

It follows from all what precedes that Equation~\eqref{eq:asympdim}
gives the asymptotic behaviour of the dimension of our codes
$\LRM(\ue;\,\degpol S_{\underline \uw})$. Comparing with the dimension
of $\LRM(\ue;\,\degpol)$, we see that we gain a factor $r$; indeed for
a fixed $\numvar$ and $\degpol$ going to infinity, we have $\binom{\numvar+\degpol}{\degpol}
\sim \frac{\degpol^\numvar}{\numvar!}$. Nonetheless, the lower bound on the minimum
distance remains the same. From this point of view, the codes
$\LRM(\ue;\,\degpol S_{\underline \uw})$ look much better than their
counterparts $\LRM(\ue;\,\degpol)$.

However, using $\LRM(\ue;\,\degpol S_{\underline \uw})$ might also have
some small disadvantages. One of them is that enumerating the points in
$\degpol S_{\underline\uw} \cap \Z^\numvar$ is not a straightforward task
(although efficient algorithms exist for this). Related to this,
Equation~\eqref{eq:asympdim} only provides asymptotic information
but is not applicable for small values of $\degpol$. In practice, working
with large values of $\degpol$ implies working over large finite fields 
as well (since $\degpol$ must be at most $q{-}2$), which could be an issue
in some situations.
Furthermore, for some applications where we are not only interesting
in optimizing the minimum distance, the codes $\LRM(\ue;\,\degpol)$ could
remain interesting as they seem to offer more diversity, in the
sense that the domain where we are picking the defining monomials
is not directly related to the lattice~$L$. Besides, after
Remark~\ref{rem:boundzeroes}, improving the estimation on the
minimum distance looks plausible in certain cases.

\subsection{Some examples}

\subsubsection{The ``almost commutative'' case}
\label{sssec:ac}

We focus on the case $\ue = (0,
\ldots, 0, 1)$ which turns out to be particularly interesting. In
this situation, we have an isomorphism
\[
  \LL[\uX^{\pm 1}; \utheta] \simeq
  \LL[X_1^{\pm 1}, \ldots, X_{\numvar-1}^{\pm 1}][X_\numvar^{\pm 1}; \Phi],
\]
so that the multivariate Ore polynomial algebra we work with is a 
univariate Ore Laurent polynomial ring over a classical Laurent polynomial
ring. Roughly speaking, the noncommutativity is entirely concentrated
on the last variable $X_\numvar$.
The lattice $L$ is easy to describe: if $(\ub_1, \ldots, \ub_\numvar)$
denotes the canonical basis of $\Z^\numvar$, $L$ is generated by the
vectors $\ub_1, \ldots, \ub_{\numvar-1}, r \ub_\numvar$. We consider the
associated family of codes 
\[
\LRM\big((0, \ldots, 0, 1);\,\degpol S_{(\ub_1, \ldots, \ub_{\numvar-1}, r\ub_\numvar)}\big),
\] 
for $\degpol$ varying in $\{1, \ldots, q{-}2\}$. 
Theorem~\ref{th:codeparameters2} provides estimations on the
parameters of these codes:
their dimension is given by the Ehrhart's polynomial
$P_{\underline w}(\degpol)$ and
their minimum distance is at least $r (q{-}1)^{\numvar-1} (q{-}1{-}\degpol)$.
In our particular case, we can be even more concrete and give a 
simple expression for the aforementioned Ehrhart's polynomial.

\begin{proposition}
\label{prop:ac:dim}
The dimension of the code 
$\LRM\big((0, \ldots, 0, 1);\,\degpol S_{(\ub_1, \ldots, \ub_{\numvar-1}, r\ub_\numvar)}\big)$
is
\[
  \binom{\numvar+\degpol}{\numvar} + (r-1) \cdot \binom{\numvar+\degpol-1}{\numvar} = 
  \frac{(\degpol+1) \cdots (\degpol+\numvar-1) \cdot (r\degpol+\numvar)}{\numvar!}
\]
\end{proposition}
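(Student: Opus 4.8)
The plan is to compute the number of lattice points in $\degpol S_{\underline\uw}$ where $\underline\uw = (\ub_1, \ldots, \ub_{\numvar-1}, r\ub_\numvar)$, which by Theorem~\ref{th:codeparameters2} equals the dimension of the code. First I would describe the simplex explicitly: a point $\lambda_1 \ub_1 + \cdots + \lambda_{\numvar-1}\ub_{\numvar-1} + \lambda_\numvar r \ub_\numvar$ lies in $\degpol S_{\underline\uw}$ precisely when $\lambda_i \geq 0$ and $\sum \lambda_i \leq \degpol$. Writing an integer point as $(u_1, \ldots, u_\numvar) \in \Z^\numvar$, the coordinates are $u_i = \lambda_i$ for $i < \numvar$ and $u_\numvar = r\lambda_\numvar$; hence the integer points of $\degpol S_{\underline\uw}$ are exactly the tuples of nonnegative integers $(u_1, \ldots, u_\numvar)$ with $u_\numvar$ divisible by $r$ and $u_1 + \cdots + u_{\numvar-1} + \tfrac{u_\numvar}{r} \leq \degpol$. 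Setting $u_\numvar = r v_\numvar$, this is the number of $(\numvar{+}1)$-tuples of nonnegative integers $(u_1, \ldots, u_{\numvar-1}, v_\numvar, s)$, with $s$ a slack variable, summing to $\degpol$ — except that the original $u_\numvar = r v_\numvar$ must be counted with the correct multiplicity. Rather I would split according to the value of $j := u_1 + \cdots + u_{\numvar-1} + v_\numvar \leq \degpol$: for each such $j$ and each choice of $v_\numvar$ with $0 \leq v_\numvar \leq j$, the count of $(u_1, \ldots, u_{\numvar-1})$ with sum $j - v_\numvar$ is $\binom{(j-v_\numvar)+\numvar-2}{\numvar-2}$.

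Summing this over $v_\numvar$ from $0$ to $j$ gives (by the hockey-stick identity) $\binom{j+\numvar-1}{\numvar-1}$ for the number of lattice points lying on the ``level $j$'' slice; then summing over $j$ from $0$ to $\degpol$ gives $\binom{\degpol+\numvar}{\numvar}$, which would be wrong. The subtlety I would need to handle carefully is that I have \emph{overcounted}: picking $v_\numvar$ then $u_1, \ldots, u_{\numvar-1}$ with total $j$ is just the same as picking $\numvar$ nonnegative integers with sum $j$, and this is the point count for the \emph{standard} dilated simplex, not ours. The correct approach is instead to count lattice points of $\degpol S_{\underline\uw}$ directly as those $(u_1, \ldots, u_\numvar) \in \N^\numvar$ with $r \mid u_\numvar$ and $u_1 + \cdots + u_{\numvar-1} + u_\numvar/r \leq \degpol$, writing $u_\numvar = r v_\numvar$: so this is $\Card\{(u_1, \ldots, u_{\numvar-1}, v_\numvar) \in \N^\numvar : u_1 + \cdots + u_{\numvar-1} + v_\numvar \leq \degpol\} = \binom{\degpol+\numvar}{\numvar}$ — but this ignores the points with $u_\numvar$ \emph{not} divisible by $r$, which are \emph{not} in our simplex. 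That confirms $\binom{\degpol+\numvar}{\numvar}$ counts only the ``coordinate-axis-aligned'' sub-simplex, and the extra $(r{-}1)\binom{\numvar+\degpol-1}{\numvar}$ term must come from somewhere else in the geometry. In fact $S_{\underline\uw}$ is $r$ times larger in volume than the standard simplex (its last edge has length $r$), so the right decomposition is: lattice points with $u_\numvar \equiv 0 \pmod r$ contribute the ``scaled-down'' simplex count, and for each residue class $u_\numvar \equiv \rho \pmod r$ with $1 \leq \rho \leq r-1$ one gets a shifted copy; carefully, for residue $\rho$ one needs $u_1 + \cdots + u_{\numvar-1} + (u_\numvar - \rho)/r + \rho/r \leq \degpol$, i.e. $u_1 + \cdots + u_{\numvar-1} + v_\numvar \leq \degpol - 1$ after absorbing the fractional part (valid since $\degpol \geq 1$), giving $\binom{\numvar+\degpol-1}{\numvar}$ points for each of the $r-1$ residues.

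Putting these together, the total is $\binom{\numvar+\degpol}{\numvar} + (r-1)\binom{\numvar+\degpol-1}{\numvar}$, which is exactly the claimed first expression. For the second (closed-form) expression, I would expand both binomials: $\binom{\numvar+\degpol}{\numvar} = \frac{(\degpol+1)\cdots(\degpol+\numvar)}{\numvar!}$ and $\binom{\numvar+\degpol-1}{\numvar} = \frac{(\degpol)(\degpol+1)\cdots(\degpol+\numvar-1)}{\numvar!}$, factor out the common product $(\degpol+1)\cdots(\degpol+\numvar-1)$, and combine the remaining linear factors: $(\degpol+\numvar) + (r-1)\degpol = r\degpol + \numvar$. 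This yields $\frac{(\degpol+1)\cdots(\degpol+\numvar-1)\cdot(r\degpol+\numvar)}{\numvar!}$ as required. As a sanity check, I would note this matches the general upper bound~\eqref{eq:bounddim} on Ehrhart polynomials with equality, which is consistent with $S_{\underline\uw}$ being a particularly simple (corner) simplex.

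The main obstacle I anticipate is getting the residue-class bookkeeping exactly right — specifically verifying that for each nonzero residue $\rho$ the constraint genuinely reduces the allowed sum by exactly $1$ (and not by $0$), which relies on $1 \leq \rho \leq r-1$ forcing $\lceil (\text{something} + \rho/r)\rceil$ to round up, and on $\degpol \geq 1$ so the reduced simplex is nonempty. An alternative, perhaps cleaner, route would be to invoke Ehrhart reciprocity or to directly verify that the polynomial $\binom{\numvar+\degpol}{\numvar} + (r-1)\binom{\numvar+\degpol-1}{\numvar}$ agrees with $P_{\underline\uw}(\degpol)$ at $\numvar+1$ values of $\degpol$ (both are polynomials of degree $\numvar$ in $\degpol$, with matching leading coefficient $r/\numvar!$ by~\eqref{eq:asympdim} and matching constant term $1$), reducing the problem to a finite check; but the direct lattice-point count sketched above is the most transparent and I would present that.
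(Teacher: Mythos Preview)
Your final argument---splitting integer points $(u_1,\ldots,u_\numvar)$ by the residue of $u_\numvar$ modulo $r$, obtaining $\binom{\numvar+\degpol}{\numvar}$ for residue $0$ and $\binom{\numvar+\degpol-1}{\numvar}$ for each nonzero residue---is correct and is exactly the paper's proof. Your derivation of the closed form by factoring out $(\degpol+1)\cdots(\degpol+\numvar-1)$ is a nice addition that the paper omits.

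However, the exposition leading up to this is genuinely confused and contains an error you should fix before writing it up. You claim early on that the integer points of $\degpol S_{\underline\uw}$ are the $(u_1,\ldots,u_\numvar)\in\N^\numvar$ with $r\mid u_\numvar$ and $u_1+\cdots+u_{\numvar-1}+u_\numvar/r\leq\degpol$. This is wrong: in the definition of the simplex the barycentric coordinates $\lambda_i$ are \emph{real}, not integers, so $u_\numvar=r\lambda_\numvar$ can be any nonnegative integer. The correct description (which the paper states directly as its Equation~\eqref{eq:ac:Sw}) is simply all $(u_1,\ldots,u_\numvar)\in\N^\numvar$ with $u_1+\cdots+u_{\numvar-1}+u_\numvar/r\leq\degpol$, no divisibility constraint. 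Once you start from this, the residue-class decomposition is immediate: write $u_\numvar=rv_\numvar+\rho$ with $0\leq\rho<r$, so the constraint becomes $u_1+\cdots+u_{\numvar-1}+v_\numvar\leq\degpol-\rho/r$, i.e.\ $\leq\degpol$ for $\rho=0$ and $\leq\degpol-1$ for $\rho\geq1$. The two paragraphs of back-and-forth (getting $\binom{\degpol+\numvar}{\numvar}$, calling it wrong, recomputing it, then reversing yourself on whether the non-divisible points lie in the simplex) all stem from this initial misreading and should be deleted.
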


\begin{proof}
We need to find the number of integer points inside the simplex
$\degpol S_{(\ub_1, \ldots, \ub_{\numvar-1}, r\ub_\numvar)}$, \ie to count the
integral nonnegative solutions $(x_1, \ldots, x_\numvar)$ of 
\begin{equation}
\label{eq:ac:Sw}
x_1 + \cdots + x_{\numvar-1} + \frac {x_\numvar} r \leq \degpol.
\end{equation}
We count separately the solutions $(x_1, \ldots, x_\numvar)$ with 
$x_\numvar \equiv u \pmod r$ for $u$ varying in $\{0, \ldots, r-1\}$.
Writing $x_\numvar = y_\numvar r + u$, Equation~\eqref{eq:ac:Sw} reduces to
$x_1 + \cdots + x_{\numvar-1} + y_\numvar \leq \degpol$ for $u = 0$ and
$x_1 + \cdots + x_{\numvar-1} + y_\numvar \leq \degpol{-}1$ for $u > 0$. 
The formula in the statement of the proposition follows.
\end{proof}

We observe that the dimension provided by Proposition~\ref{prop:ac:dim}
meets the upper bound~\eqref{eq:bounddim}; hence ``almost commutative''
linearized Reed--Muller codes are optimal regarding dimension.

The codes
$\LRM((0, \ldots, 0, 1);\,\degpol S_{(\ub_1, \ldots, \ub_{\numvar-1}, r\ub_\numvar)})$
exhibit actually another quite interesting feature. Indeed, given 
that the
variables $X_1, \ldots, X_{\numvar-1}$ are ``commutative'', we are not
obliged to inverse them and can evaluate them at $0$. Doing this,
we obtain an extended code
\[
\widetilde{\LRM}
\big((0, \ldots, 0, 1);\,\degpol S_{(\ub_1, \ldots, \ub_{\numvar-1}, r\ub_\numvar)}\big)
\]
with the following parameters:
\begin{itemize}
\item its length is $rq^{\numvar-1}{\cdot}(q{-}1)$,
\item its dimension is 
$(\degpol+1) \cdots (\degpol+\numvar-1){\cdot}(r\degpol+\numvar)/\numvar!$, and
\item its minimum distance is at least $rq^{\numvar-1}{\cdot}(q{-}1{-}\degpol)$.
\end{itemize}

\subsubsection{A concrete example}

We take $\numvar = 2$, $r = 4$ and $\ue = (3,2)$. By definition, the
lattice $L$ is the set of pairs $(x,y) \in \Z^2$ such that 
$3x + 2y \equiv 0 \pmod 4$, \ie $x \equiv 2y \pmod 4$. It is
represented on Figure~\ref{fig:lattice}.
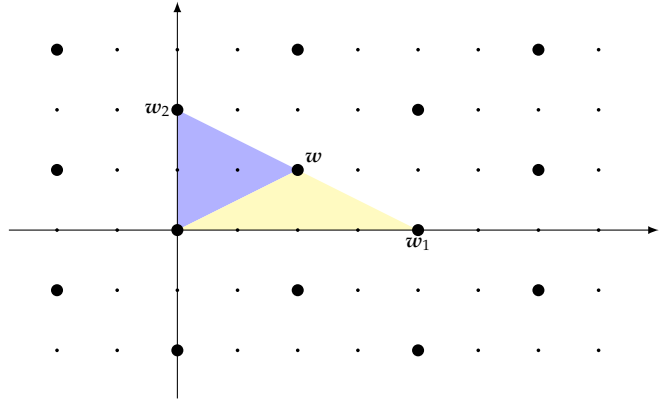
\begin{figure}
\centering
\begin{tikzpicture}[scale=0.8]
\fill[blue!30] (0,0)--(0,2)--(2,1);
\fill[yellow!30] (0,0)--(4,0)--(2,1);
\foreach \x in {-2,...,7} {
  \foreach \y in {-2,...,3} {
    \fill[black] (\x, \y) circle (0.3mm);
}}
\foreach \y in {-2,0,2} {
  \fill[black] (0, \y) circle (1mm);
  \fill[black] (4, \y) circle (1mm);
  \fill[black] (-2, \y+1) circle (1mm);
  \fill[black] (2, \y+1) circle (1mm);
  \fill[black] (6, \y+1) circle (1mm);
}
\node[below,scale=0.7] at (4,0) { $\uw_1$ };
\node[left,scale=0.7] at (0,2) { $\uw_2$ };
\node[above right,scale=0.7] at (2,1) { $\uw$ };
\draw[-latex] (-2.8,0)--(8,0);
\draw[-latex] (0,-2.8)--(0,3.8);
\end{tikzpicture}
\caption{The lattice $L$ for $\ue = (3, 2)$ and $r = 4$}
\label{fig:lattice}
\end{figure}
It contains the points $\uw_1 = (4,0)$, $\uw_2 = (0,2)$ and
$\uw = (2,1)$ and the families $(\uw_1, \uw)$ and $(\uw_2,
\uw)$ are two bases of $L$.

One can then form the corresponding codes
$\mathcal C_1(\degpol) := \LRM(\ue;\,\degpol S_{(\uw_1, \uw)})$ and
$\mathcal C_2(\degpol) := \LRM(\ue;\,\degpol S_{(\uw_2, \uw)})$ for
$\degpol < q{-}1$.
We infer from Theorem~\ref{th:codeparameters2} that they have minimum distance 
at least $4 \cdot (q{-}1) \cdot (q{-}1{-}\degpol)$. Moreover it is easy, in
our case, to compute exactly the polynomials $P_{(\uw_1, \uw)}$ and
$P_{(\uw_2, \uw)}$, which will eventually give the dimension of the
codes $\mathcal C_1$ and $\mathcal C_2$ respectively.
Indeed, we know that they take the form
$P_{(\uw_i, \uw)}(x) = 2 x^2 + a_i x + 1$
where $a_i$ is the unique unknown coefficient. One can find it by
evaluating at $x = 1$: counting the integer points insides
$S_{(\uw_i, \uw)}$, we find that $P_{(\uw_1, \uw)}(x) = 6$ and
$P_{(\uw_2, \uw)}(x) = 5$ from what we finally derive:
\[
P_{(\uw_1, \uw)}(x) = 2 x^2 + 3 x + 1
\quad \text{and} \quad
P_{(\uw_2, \uw)}(x) = 2 x^2 + 2 x + 1.
\]
The dimension of $\mathcal C_1(\degpol)$ (resp. of $\mathcal C_2(\degpol)$) is
then exactly $2 \degpol^2 + 3\degpol + 1$ (resp. $2 \degpol^2 + 2\degpol + 1$) for any~$\degpol$.
We observe that the former is greater than the latter, and that
both of them are larger than $\binom{\degpol+2} 2 = \frac{\degpol(\degpol+1)}2$ by
a factor of at least $r = 4$.

\begin{remark}
The Ehrhart's polynomial $P_{(\uw_1, \uw)}(x)$
factors as $(x+1)(2x+1)$ and meets the upper 
bound~\eqref{eq:bounddim}. This is in fact not a surprise because
the code $\mathcal C_1(\degpol)$ is isomorphic to a ``almost commutative''
linearized Reed--Muller code through the transformation
$X_1 \mapsto X_1$, $X_2 \mapsto X_1^{-2} X_2$.
\end{remark}

Although $(\uw_1, \uw_2)$ is not a basis of $L$, it makes sense
to consider the code
$\mathcal C(\degpol) := \LRM(\ue;\,\degpol S_{(\uw_1, \uw_2)})$. Using a
variation on 
Remark~\ref{rem:boundzeroes}, one can prove that its minimum 
distance is at least $4(q-1)(q-1 - 2\degpol)$. For $\degpol < \frac{q-1}2$, 
Ehrhart's result then implies that its dimension is $4\degpol^2 + 4\degpol + 1 
= (2\degpol+1)^2$.

\section{Embeddings into LAG codes}\label{sec:LAGcodes}

A nice feature of classical Reed--Muller codes is that they can
be embedded in large Reed--Solomon codes~\cite{PW04}, a property which
notably allows for efficient decoding.

In this section, we highlight a similar feature for the codes 
$\LRM(\ue;\,\degpol S_{\underline \uw})$ introduced in 
Subsection~\ref{ssec:improvedparams}. The main difference is that
the latter codes will not embed into linearized Reed--Solomon codes
but in some linearized Algebraic Geometry (LAG) codes, which were 
recently introduced by the same authors in~\cite{BC24}.
Unfortunately, no efficient decoding algorithm for LAG codes have
been designed so far. However, it looks feasible to extend standard
methods for decoding AG codes to the linearized setting; we hope to
come back on this question soon.

\subsection{Quick review on LAG codes}
\label{ssec:LAG}

We briefly review the theory of LAG codes as developed in~\cite{BC24}.
Since we will be using them in a very special case, we only focus
on this particular setting (which actually avoids talking 
about algebraic curves). On the contrary, for our purpose, we will
need to use LAG codes defined over extensions of $\F$.
That is why, for better consistency, we prefer as of now
considering a positive integer $\extlag$ and working over~$\Fn$.

We pick in addition a second positive integer $r$ and form the 
finite field $\LLn$, which is an extension of $\Fn$ of degree $r$.
Let $\Phi_\extlag : \LLn \to \LLn$, $x \mapsto x^{q^\extlag}$ be the relative
Frobenius of $\LLn/\Fn$, and let $\theta = \Phi_\extlag^e$ for some
integer $e$, coprime with $r$.
We also consider a new variable $Y$ and denote by $\LLn(Y)$ the
field of rational functions in $Y$. The morphism $\theta$ extends
naturally to an automorphism $\LLn(Y) \to \LLn(Y)$ by acting on 
the coefficients and letting $Y$ unchanged; in a slight abuse of
notation, we continue to call $\theta$ this extended morphism.

In order to define our LAG codes, we need extra data. First of
all, we consider a polynomial $P(Y) \in \Fn[Y]$ and, following
\cite{BC24}, we form the quotient
\begin{equation}
\label{eq:DP}
D_P = \LLn(Y)[T; \theta] / (T^r - P(Y)).
\end{equation}

\begin{lemma}
\label{lem:divisionalgebra}
We assume that the gcd of the orders of vanishing of $P(Y)$ 
at all points $y \in \Fn$ is coprime with $r$. Then $D_P$ is
a division algebra.
\end{lemma}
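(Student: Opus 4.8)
The plan is to realise $D_P$ as a cyclic algebra over $\Fn(Y)$ and to show it is non-split by a purely local computation at the places of $\Fn(Y)$ coming from the points $y\in\Fn$. Set $F=\Fn(Y)$ and $L=\LLn(Y)$. Since $\gcd(e,r)=1$, the automorphism $\theta=\Phi_\extlag^e$ generates $\Gal(\LLn/\Fn)$, and extending it to $L$ by fixing $Y$ makes $L/F$ a cyclic Galois extension of degree $r$ with group $\langle\theta\rangle$ and fixed field $F$. As $\theta^r=\mathrm{id}$, the element $T^r$ is central in $\LLn(Y)[T;\theta]$, and $P(Y)\in F^\times$ (necessarily nonzero, the hypothesis being otherwise vacuous) is $\theta$-invariant; hence $T^r-P(Y)$ is central and $D_P$ is the cyclic $F$-algebra $(L/F,\theta,P(Y))$, a central simple $F$-algebra of degree $r$.

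For any central simple algebra the period divides the index and the index divides the degree, so it suffices to prove that $D_P$ has period $r$, since this forces the index to be $r$ and thus $D_P$ to be a division algebra. By the standard theory of cyclic algebras, $(L/F,\theta,a)$ is split precisely when $a\in N_{L/F}(L^\times)$, and $(L/F,\theta,a)^{\otimes k}\sim(L/F,\theta,a^k)$; therefore the period of $D_P$ equals the order of the class of $P(Y)$ in $F^\times/N_{L/F}(L^\times)$. This order divides $r$, and it equals $r$ if and only if $P(Y)^{r/\ell}\notin N_{L/F}(L^\times)$ for every prime $\ell\mid r$. I would establish this non-norm statement locally: fix $y\in\Fn$ and let $v=v_y$ be the corresponding valuation of $F$, with completion $F_v=\Fn((Y-y))$. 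Because $\LLn/\Fn$ is an extension of finite fields, $L\otimes_F F_v=\LLn((Y-y))$ is a field, namely the unramified extension $L_w$ of $F_v$ of degree $r$, with a single place $w$ of $L$ above $v$. The norm map is compatible with completion, so a global norm from $L$ is in particular a norm from $L_w$; and for an unramified extension of degree $r$ of a local field the norm group is $\{x:r\mid v(x)\}$. Hence if $P(Y)^{r/\ell}\in N_{L/F}(L^\times)$ then $r\mid\frac r\ell\,\ord_y(P)$, i.e.\ $\ell\mid\ord_y(P)$, and this must hold at every $y\in\Fn$.

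To conclude, observe that by hypothesis $\gcd\{\ord_y(P):y\in\Fn\}$ is prime to $r$, so no prime $\ell\mid r$ divides all of the $\ord_y(P)$; thus for each prime $\ell\mid r$ there is some $y\in\Fn$ with $\ell\nmid\ord_y(P)$, and the previous paragraph gives $P(Y)^{r/\ell}\notin N_{L/F}(L^\times)$. Therefore $D_P$ has period $r$, hence index $r$, hence is a division algebra. The only step that is not formal bookkeeping is the local computation — identifying $L\otimes_F F_v$ with the degree-$r$ unramified extension of $F_v$, recalling that its norm subgroup consists of the elements whose valuation is divisible by $r$, and using compatibility of $N_{L/F}$ with completion to pass from a global to a local norm; for the general facts on cyclic algebras and Brauer classes one can refer to~\cite{Reiner75}.
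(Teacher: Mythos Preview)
Your argument is correct and runs parallel to the paper's: both localise at the places of $\Fn(Y)$ coming from points $y\in\Fn$ and exploit the orders $\ord_y(P)$. The paper packages the local step via Hasse invariants---writing $D_P\simeq M_k(\Delta)$ with $\dim_{\Fn(Y)}\Delta=s^2$ by Artin--Wedderburn, noting that $\inv_{\mathfrak p_y}(D_P)=\ord_y(P)/r$ must lie in $s^{-1}\Z/\Z$, and then using B\'ezout with the gcd hypothesis to force $s=r$---whereas you compute the period as the order of $P(Y)$ in $F^\times/N_{L/F}(L^\times)$ and rule out each prime divisor $\ell\mid r$ by the explicit description of the norm group at an unramified local place. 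These two computations are equivalent through local class field theory (the invariant of a cyclic algebra at an unramified place is exactly $v(a)/r\bmod\Z$), so the key local input is the same; your route is a bit more self-contained once one knows that the norm group of an unramified degree-$r$ extension is $\{x:r\mid v(x)\}$, while the paper's formulation is terser if the Hasse-invariant machinery from~\cite{Reiner75} is taken as given.
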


\begin{proof}
We recall from~\cite[\S 31]{Reiner75} that, to any central simple 
algebra $C$ over $\F(Y)$, one can associate a family of local invariants
$\inv_{\mathfrak p}(C) \in \Q/\Z$ indexed by the places $\mathfrak 
p$ of $F(Y)$. They satisfy in addition the following two properties:
\begin{enumerate}[(i)]
\item the invariants of $C$ are the same than the invariants of
$M_\extlag(C)$ for all~$\extlag > 0$ (see \cite[\S 28]{Reiner75})
\item if $C$ has dimension $s^2$ over $F(Y)$, the invariants of
$C$ are all in $s^{-1} \Z/\Z$ (see \cite[Theorem 29.22]{Reiner75}).
\end{enumerate}
Besides, the invariants of $D_P$ are easy to write down; indeed,
it follows from~\cite[Equation~(31.7)]{Reiner75} that
\[
  \inv_{\mathfrak p}(D_P) = \frac{v_{\mathfrak p}\big(P(Y)\big)} r
  \in \Q / \Z
\]
where $v_{\mathfrak p}$ is the normalized valuation associated to
the place $\mathfrak p$. In particular, when $\mathfrak p = 
\mathfrak p_y$ is the place corresponding to a point $y \in \Fn$, 
the invariant $\inv_{\mathfrak p_y}(D_P)$ is $-\ord_y(P(Y))/r 
\text{ mod } \Z$.

We now invoke the Artin--Wedderburn theorem~\cite[Theorem 7.4]{Reiner75} 
which tells us 
that $D_P$ has to be isomorphic to a matrix algebra over a division 
algebra $\Delta$ over $\F(Y)$. Write $\dim_{\F(Y)} \Delta = s^2$. 
Using the properties recalled earlier, we find that 
\[
  -\frac{\ord_y(P(Y))} r \equiv
  \inv_{\mathfrak p_y}(D_P) = 
  \inv_{\mathfrak p_y}(\Delta) \in s^{-1} \Z / \Z
\]
for all $y \in \Fn$. In other words $s{\cdot}\ord_y(P(Y)) \in
r \Z$ for all $y \in \Fn$. Since the gcd of $\ord_y(P(Y))$ (for
$y$ running over $\Fn$) is
coprime with $r$, Bézout's theorem shows that $s$ must lie in 
$r \Z$ as well. Thus $s = r$, which further implies by comparing
dimensions that $D_P = \Delta$ and finally that $D_P$ is itself a 
division algebra.
\end{proof}

From now on, we assume that the hypothesis of
Lemma~\ref{lem:divisionalgebra} is fulfilled.
Another important ingredient we need is the notion of Riemann--Roch 
space inside $D_P$. For our purpose, we will only need
them in a particular case, so we restrict ourselves to this one (see \cite[Subsection 2.2]{BC24} for the general definition).

\begin{definition}
\label{def:RR}
To each nonnegative integer $v$, we attach
the \emph{Riemann--Roch space} $\Lambda_P(v)$ defined as the 
$\LLn$-vector subspace of $D_P$ consisting of Ore polynomials
of the form 
$\sum_{i=0}^{r-1} \frac{u_i(Y)}{v_i(Y)} T^i$
where, for all $i$, the polynomials $u_i(Y), v_i(Y) \in \LLn(Y)$
are subject to the following conditions:
\begin{itemize}
\item $\gcd(u_i(Y), v_i(Y)) = 1$,
\item $r \cdot\big(\!\deg u_i(Y) - \deg v_i(Y)\big) + i \cdot \deg P(Y) \leq v$, 
\item $v_i(Y)^r$ divides $P(Y)^i$.
\end{itemize}
\end{definition}

\begin{remark}
We note that $\Lambda_P(v)$ contains the following simpler
space
\[
  \tilde \Lambda_P(v) =
  \left\{\, 
    \sum_{i=0}^{r-1} u_i(Y) T^i \, : \, u_i(Y) \in \LLn[Y], \,
    r{\cdot}\!\deg u_i(Y) + i{\cdot}\!\deg P(Y) \leq v
  \,\right\},
\]
which is in fact the one we will work with afterwards.
\end{remark}

We also define $\Lambda_P = \Lambda_P(\infty)$ as the union of 
the $\Lambda_P(v)$ when $v$ varies; it is a subalgebra of $D_P$.
We consider elements $y_1, \ldots, y_s \in \Fn$ such that
$P(y_i) \neq 0$ for all $i$. By \cite[Lemmas~1.1~\&~3.2]{BC24}, 
the latter assumption implies the existence of isomorphisms
\begin{equation}
\label{eq:evalisomY}
  \eta_i : \Lambda_P / (Y{-}y_i) \Lambda_P
  \stackrel\sim\longrightarrow
  \End_{\Fn}(\LLn).
\end{equation}
We combine them into a unique multievaluation map
\[
  \eta = (\eta_1, \ldots, \eta_s) : 
  \Lambda_P \longrightarrow \End_{\Fn}(\LLn)^s.
\]

\begin{definition}
The \emph{linearized Algebraic Geometry} code attached to the
previous data is
\[
  \LAG\big(P(Y);\, v;\, y_1, \ldots, y_s\big) =
  \eta\big(\Lambda_P(v)\big).
\]
\end{definition}

By definition, the code $\LAG\big(P(Y);\, v;\, y_1, \ldots, y_s\big)$
sits in $\End_{\Fn}(\LLn)^s$. The latter is a vector space over 
$\LLn$ of dimension $sr$, the length of the code.
We notice moreover that the ambient space $\End_{\Fn}(\LLn)^s$
is naturally equipped with the sum-rank metric; hence it makes
sense to talk about the minimum sum-rank distance of the code
$\LAG\big(P(Y);\, v;\, y_1, \ldots, y_s\big)$.
It follows from \cite[Theorem~3.5]{BC24} that this minimum 
distance is at least
\begin{equation}
\label{eq:LAG:designed}
  \mindist^\star\big(\LAG\big(P(Y);\, v;\, y_1, \ldots, y_s\big)\big) := sr - v.
\end{equation}
In what follows, this lower bound will be called the 
\emph{designed minimum distance} of $\LAG\big(P(Y);\, v;\, y_1, 
\ldots, y_s\big)$.

\subsection{Relating the centre to a univariate rational function field}

We now come back to the setting of Section~\ref{sec:thecode}: we
consider the multivariate Ore algebra $\LL[\uX^{\pm 1}; \utheta]$ where
$\uX = (X_1, \ldots, X_\numvar)$ and $\utheta = (\theta_1, \ldots,
\theta_\numvar)$ with $\theta_i = \Phi^{e_i}$ and $\Phi : \LL \to 
\LL$ is the Frobenius $x \mapsto x^q$.
We assume as usual that $\gcd(e_1, \ldots, e_\numvar, r) = 1$ and
write $\ue = (e_1, \ldots, e_\numvar)$. We recall that the centre
of $\LL[\uX^{\pm 1}; \utheta]$ is $\F[\uX^L]$ where $L$ is the
lattice
\[
  L=\big\{\,\uu=(u_1,\dots,u_\numvar)\in\Z^\numvar \,|\, \ue{\cdot}\uu\in r\Z\,\big\}.
\]
From now, we fix a $\Z$-basis $\underline \uw = (\uw_1, \ldots, \uw_\numvar)$ of $L$.
This choice gives rise to an isomorphism between $\F[\uX^L]$ 
and the multivariate Laurent polynomial ring $\F[Z_1^{\pm 1}, 
\ldots, Z_\numvar^{\pm 1}]$ where the variable $Z_i$ corresponds to
$\uX^{\uw_i}$. To shorten notation, we use again bold symbols
for tuples and set $\uZ = (Z_1, \ldots, Z_\numvar)$ 
and $\F[\uZ^{\pm 1}] = \F[Z_1^{\pm 1}, \ldots, Z_\numvar^{\pm 1}]$.

The first step in our construction is to relate $\F[\uX^L] 
\simeq \F[\uZ^{\pm 1}]$ to the univariate rational function 
field $\Fn(Y)$ for any $\extlag \geq \numvar$.
In order to do so, we choose a basis $(b_1, \ldots, b_\extlag)$ of
$\Fn$ over $\F$. For each $i \in \{1, \ldots, \extlag\}$, we define
the $\F$-linear form
\[
  \beta_i : \Fn \to \F, \quad y \mapsto \text{Tr}_{\Fn/\F}(b_i y)
\]
where $\text{Tr}_{\Fn/\F}$ is the trace map. It is a 
well-known fact (see \emph{e.g.} \cite[Theorem VI.5.2]{Lang3}) 
that the $\beta_i$'s 
form a basis of $\Hom_{\F}(\Fn,\F)$. Hence the map
$\beta = (\beta_1, \ldots, \beta_\extlag) : \Fn \to \F^\extlag$
is a $\F$-linear isomorphism. Let $E$ be the inverse
image by $\beta$ of $\F^\numvar \times \{0\}^{\extlag-\numvar} \subset \F^\extlag$,
\ie $E = \bigcap_{j > \numvar} \ker \beta_j$. By what precedes, it is 
a $\F$-vector space of dimension $\numvar$; more precisely,
$\beta_{\leq \numvar} := (\beta_1, \ldots, \beta_\numvar)$ induces an 
isomorphism between $E$ and $\F^\numvar$.
The following lemma asserts that $\beta_{\leq \numvar}$ is a
polynomial function of controlled degree.

\begin{lemma}
\label{lem:Bi}
There exist polynomials $B_1(Y), \ldots, B_\numvar(Y) \in \Fn[Y]$ of
degree at most $q^{\numvar-1}$ such that $\beta_i(y) = B_i(y)$
for all $y \in E$ and all $i \in \{1, \ldots, \numvar\}$.
\end{lemma}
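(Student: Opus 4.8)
The plan is to interpolate each $\beta_i$ on $E$ by a $q$-linearized (i.e. additive) polynomial, taking advantage of the fact that $\beta_i$ is $\F$-linear and $\F = \mathbb{F}_q$. Recall that $E$ is an $\F$-vector subspace of $\Fn$ of dimension $\numvar$, so that $\Card(E) = q^\numvar$. I would introduce the $\Fn$-vector space
\[
  V = \Big\{\, \textstyle\sum_{k=0}^{\numvar-1} a_k\, Y^{q^k} \ :\ a_0, \ldots, a_{\numvar-1} \in \Fn \,\Big\}
\]
of $q$-linearized polynomials of $q$-degree less than $\numvar$; it has dimension $\numvar$ over $\Fn$, and each of its elements has ordinary degree at most $q^{\numvar-1}$. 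Since $a^q = a$ for $a \in \F$, the map $y \mapsto y^{q^k}$ is $\F$-linear on $\Fn$, so restricting to $E$ yields an $\Fn$-linear map $\mathrm{res} : V \to \Hom_\F(E, \Fn)$, $B(Y) \mapsto (y \mapsto B(y))$.

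I would then verify two things. First, that $\mathrm{res}$ is injective: if $B \in V$ vanishes on all of $E$, then $B$ is a polynomial of degree at most $q^{\numvar-1} < q^\numvar = \Card(E)$ vanishing at every point of $E$, hence $B = 0$. Second, that source and target have the same $\Fn$-dimension: $\dim_{\Fn} V = \numvar$, while $\Hom_\F(E, \Fn) \simeq \Hom_\F(E, \F) \otimes_\F \Fn$ has $\Fn$-dimension $\dim_\F \Hom_\F(E,\F) = \dim_\F E = \numvar$ (equivalently, its $\F$-dimension is $\numvar\extlag$, so its $\Fn$-dimension is $\numvar$). Consequently $\mathrm{res}$ is an isomorphism. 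Finally, for each $i \in \{1, \ldots, \numvar\}$ the restriction $\beta_i|_E$ lies in $\Hom_\F(E, \Fn)$ (it even takes values in $\F$), so it has a preimage $B_i(Y) \in V$; by construction $B_i(Y) \in \Fn[Y]$, it has degree at most $q^{\numvar-1}$, and $B_i(y) = \beta_i(y)$ for all $y \in E$, which is precisely the assertion.

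There is no genuine obstacle here; the only subtle point is the matching of the degree bound. It works out exactly because the space of $q$-linearized polynomials of $q$-degree $< \numvar$ has $\Fn$-dimension $\numvar$, which equals $\dim_{\Fn}\Hom_\F(E,\Fn)$, while simultaneously its elements have ordinary degree at most $q^{\numvar-1} < q^\numvar = \Card(E)$, which is exactly what makes the restriction map injective and hence bijective.
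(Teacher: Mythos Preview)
Your argument is correct. You interpolate each $\beta_i|_E$ by a $q$-linearized polynomial of $q$-degree at most $\numvar-1$, proving that the restriction map $V \to \Hom_\F(E,\Fn)$ is an isomorphism via a root count for injectivity and a dimension count for surjectivity. All steps are sound; in particular the $\Fn$-linearity of the restriction map and the computation $\dim_{\Fn}\Hom_\F(E,\Fn)=\numvar$ are fine.

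The paper reaches the same conclusion but packages it in the language of the Ore ring $\Fn[U;\Phi]$. It writes each $\beta_i$ as $\epsilon(T_i)$ for an explicit Ore polynomial $T_i$, identifies the left ideal of Ore polynomials whose associated endomorphism vanishes on $E$ as the principal ideal generated by $T_{>\numvar}=\rgcd(T_{\numvar+1},\ldots,T_\extlag)$, and then uses Euclidean division by $T_{>\numvar}$ (after showing $\deg T_{>\numvar}=\numvar$ by a dimension comparison) to produce a representative $\tilde B_i$ of degree at most $\numvar-1$; finally $B_i=\tilde B_i(\Phi)$. Under the dictionary $U\leftrightarrow Y^{q}$-power, the paper's quotient $\Fn[U;\Phi]/\Fn[U;\Phi]\cdot T_{>\numvar}$ is exactly your space $V$, and Euclidean division is playing the role of your injectivity-by-root-counting. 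So the two proofs are really the same isomorphism, but yours is more elementary (no Ore Euclidean structure, no explicit $T_i$ or gcd), while the paper's formulation ties in naturally with the Ore-polynomial viewpoint used throughout the article and incidentally identifies the annihilator of $E$ explicitly.
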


\begin{proof}
We consider the ring of univariate Ore polynomials $\Fn[U; \Phi]$
where $\Phi : x \mapsto x^q$ is the Frobenius and $U$ is again a
new variable. We recall from~\cite{Ore33} that it is left and right
Euclidean. In particular, it is left and right principal and it
admits left and right gcd and lcm.
We consider the standard evaluation morphism
\[
  \epsilon :
  \Fn[U; \Phi] \longrightarrow
  \End_{\F}(\Fn), \quad f \mapsto f(\Phi).
\]
Each $\beta_i$ defines an element in $\End_{\F}(\Fn)$ and we
check that $\beta_i = \epsilon(T_i)$ with
\begin{align*}
  T_i 
& = \big(1 + U + U^2 + \cdots + U^{\extlag-1}\big) \cdot b_i \\
& = b_i + b_i^q U + b_i^{q^2} U^2 + \cdots + b_i^{q^{\extlag-1}} U^{\extlag-1}.
\end{align*}
Let $\mathcal I$ be the left ideal of $\End_{\F}(\Fn)$ consisting 
of endomorphisms vanishing on $E$. Coming back to the definition of 
$E$, we infer that $\epsilon^{-1}(\mathcal I)$ is the principal
left ideal generated by
$T_{>\numvar} = \rgcd(T_{\numvar+1}, \ldots, T_\extlag)$,
where the notation $\rgcd$ refers to the right gcd. Hence, 
$\epsilon$ induces a $\Fn$-linear isomorphism
\[
  \bar\epsilon :
  \Fn[U; \Phi] \,/\, \Fn[U; \Phi]{\cdot} T_{>\numvar}
  \stackrel\sim\longrightarrow
  \Hom_{\F}(E, \Fn).
\]
By comparing dimensions, we derive that the degree of $T_{>\numvar}$ 
is equal to~$\numvar$. 
Let $i \in \{1, \ldots, \numvar\}$.
The restriction of $\beta_i$ to $E$, namely ${\beta_i}_{|E}$, 
can be seen as an element of $\Hom_{\F}(E, \Fn)$. Let 
$\tilde B_i$ by the unique representative of 
$\bar\epsilon^{-1}\big({\beta_i}_{|E}\big)$ of degree at most 
$\numvar{-}1$. By definition, we have 
$\tilde B_i(\Phi)(y) = \beta_i(y)$
for all $y \in E$. 
Finally, given that $\Phi$ itself is a polynomial function of
degree $q$, we find that $\tilde B_i(\Phi)$ is a polynomial
function of degree $q^{\deg \tilde B_i} \leq q^{\numvar-1}$. This
concludes the proof by setting $B_i = \tilde B_i(\Phi)$.
\end{proof}

\begin{remark}
\label{rem:Bi}
The polynomial $B_i(Y)$ vanishes on the space 
$E_i := E \cap \ker \beta_i$, which has cardinality $q^{\numvar-1}$.
Therefore, there must exist a nonzero constant $c_i \in \Fn^\times$
such that
\[
B_i(Y) = c_i \cdot \prod_{y \in E_i} (Y-y).
\]
In particular, we notice that $B_i(Y)$ is separable and has exactly
degree $q^{\numvar-1}$.
\end{remark}

We use the polynomials $B_1(Y), \ldots, B_\numvar(Y)$ of 
Lemma~\ref{lem:Bi} to build the following morphism of $\F$-algebras
\[\begin{array}{rcl}
  \zeta : \quad 
    \F[\uX^L] \simeq \F[\uZ^{\pm 1}] 
  & \longrightarrow 
  & \Fn\left[Y, \frac 1{B(Y)}\right] 
    \subset \Fn(Y) \smallskip \\
  Z_i & \mapsto & B_i(Y)
\end{array}\]
where we have set $B(Y) = B_1(Y) \cdots B_\numvar(Y)$ for simplicity.
We now briefly describe the action of $\zeta$ on evaluation
points. 
To start with, we recall from Subsection~\ref{subsec:evalOre} that an evaluation point for $\F[\uX^L]$ is
given by a group homomorphism $\gamma : L \to \F^\times$. Transferring
it \emph{via} the identification $\F[\uX^L] \simeq \F[\uZ^{\pm 1}]$,
it simply corresponds to the multievaluation point
$\big(\gamma(\uw_1), \ldots, \gamma(\uw_\numvar)\big)$. From this, it
is routine to check that the following diagram commutes
\begin{equation}
\label{diag:diagevalpoints}
\raisebox{5ex}{
\xymatrix @C=8ex {
  \F[\uX^L] \ar[r]^-{\zeta} \ar[d]_-{\hspace{12ex}\epsilon_\gamma} 
    & \Fn\left[Y, \frac 1{B(Y)}\right]
      \ar[d]^-{Y \mapsto \beta_{\leq \numvar}^{-1}
      (\gamma(\uw_1), \ldots, \gamma(\uw_\numvar))} \\
  \F \ar[r] & \Fn
}}
\end{equation}
where the bottom arrow is the canonical inclusion. In other words,
we have shown that $\gamma$ corresponds to the evaluation point
$\beta_{\leq \numvar}^{-1} \big(\gamma(\uw_1), \ldots, \gamma(\uw_\numvar)\big)$.

\subsection{Extension to the Ore algebra}
\label{ssec:LAG:extOre}

So far, we have constructed a morphism $\zeta : \F[\uX^L]
\to \Fn(Y)$ which, in some sense, relates the multivariate case
to the univariate one. The next step in the construction is to 
extend $\zeta$ to the Ore algebra $\LL[\uX^{\pm 1}; \utheta]$.
For this, we recall that the map $\uu \mapsto \ue{\cdot}\uu$
induces a group isomorphism $\Z^\numvar/L \to \Z/r\Z$. We choose a
vector $\uv \in \Z^\numvar$ such that $\ue{\cdot}\uv \equiv 1 \pmod r$.
By what precedes $\Z^\numvar$ is generated as a group by $L$ and $\uv$.
Therefore $\LL[\uX^{\pm 1}; \utheta]$ is generated as an algebra
by its centre $\F[\uX^L]$ and the monomial $\uX^{\uv}$. Besides,
by our choice of $\uv$, the commutation relation
$\uX^{\uv} \cdot a = \Phi(a) \cdot \uX^{\uv}$ holds for all $a 
\in \LL$. 
It follows from this observation that we have a surjective
morphism
\[
\begin{array}{rcl}
  \F[\uX^L] \otimes_{\F} \LL[T;\Phi] 
    & \longrightarrow & \LL[\uX^{\pm 1}; \utheta] \smallskip \\
  T & \mapsto & \uX^{\uv}
\end{array}.
\]
The
kernel of this map obviously contains the ideal generated by
$T^r{-}\uX^{r \uv}$ (note that $r \uv \in L$); by comparing
dimensions, we conclude that the reverse inclusion also holds
true. Consequently, we get an isomorphism of $\F$-algebras
\[
  \alpha : \,
  \F[\uX^L] \otimes_{\F} \LL[T;\Phi]
    / (T^r - \uX^{r \uv})
  \stackrel\sim\longrightarrow
  \LL[\uX^{\pm 1}; \utheta].
\]
Composing the inverse of $\alpha$ by $\zeta \otimes \text{id}$,
we obtain a second morphism
\[ \textstyle
  \iota : \,
  \LL[\uX^{\pm 1}; \utheta]
  \longrightarrow
  \Fn\left[Y, \frac 1{B(Y)}\right] \otimes_{\F} \LL[T;\Phi] / (T^r - P(Y))
\]
where, by definition, $P(Y) = \zeta(\uX^{r \uv}) \in \Fn[Y]$.

From now on, we assume that, in addition to be greater or equal
to $\numvar$, the integer $\extlag$ is chosen in such a way that $\gcd(\extlag, r)
= 1$. The extensions $\Fn$ and $\LL$ are then linearly disjoint
over $\Fq$, implying that the tensor product $\Fn \otimes_{\F} \LL$
is a field. Since the latter has cardinality $q^{rn}$, it must be isomorphic
to $\LLn$. Therefore, the codomain of $\iota$ is isomorphic to
\begin{equation} \textstyle
\label{eq:LambdaP}
  \LLn\left[Y, \frac 1{B(Y)}\right][T;\theta] / (T^r - P(Y))
\end{equation}
where $\theta$ is the automorphism acting as the identity on the
subfield $\Fn$ and as $x \mapsto x^q$ on the subfield $\LL$. From
these properties, we infer that $\theta = \Phi_\extlag^{\extlag'}$ where $\extlag'$
is a multiplicative inverse of $\extlag$ modulo $r$ and we recall that
$\Phi_\extlag$ is the relative Frobenius of $\LLn/\Fn$, \ie $\Phi_\extlag : x
\mapsto x^{q^\extlag}$. In Equation~\eqref{eq:LambdaP}, we recognize an
integral version of the algebra
\[
  D_P = \LLn(Y)[T;\theta] / (T^r - P(Y))
\]
already considered in Subsection~\ref{ssec:LAG}; precisely, the 
algebra of Equation~\eqref{eq:LambdaP} is $\Lambda_P\left[\frac 1{B(Y)}\right]$
where $\Lambda_P$ was introduced right after Definition~\ref{def:RR}.

\begin{lemma}
$D_P$ is a division algebra.
\end{lemma}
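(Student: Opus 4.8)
The plan is to derive the statement from Lemma~\ref{lem:divisionalgebra}, which is tailored to algebras of exactly the shape $D_P = \LLn(Y)[T;\theta]/(T^r-P(Y))$ with $\theta$ a power of $\Phi_\extlag$ coprime to $r$ --- a condition met here since $\theta = \Phi_\extlag^{\extlag'}$ with $\extlag'$ invertible modulo $r$. Thus it suffices to check that the gcd of the orders of vanishing of $P(Y) = \zeta(\uX^{r\uv})$ at the points $y \in \Fn$ is coprime with $r$, and the whole argument amounts to controlling the divisor of this polynomial. I would first make $P$ explicit: expanding $r\uv = m_1\uw_1 + \cdots + m_\numvar\uw_\numvar$ in the fixed $\Z$-basis $\underline\uw$ of $L$, the definition of $\zeta$ gives $P(Y) = B_1(Y)^{m_1}\cdots B_\numvar(Y)^{m_\numvar}$. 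I may moreover assume all $m_i \geq 0$: replacing $\uv$ by $\uv+\ell$ for some $\ell \in L$ with large nonnegative coordinates multiplies $P$ by $\zeta(\uX^{r\ell}) = \zeta(\uX^\ell)^r$; as $\zeta(\uX^\ell)$ lies in the centre $\Fn(Y)$ of $D_P$ (the automorphism $\theta$ fixes $\Fn$ and $Y$), the substitution $T \mapsto \zeta(\uX^\ell)\,T$ exhibits an isomorphism of the old $D_P$ with the new one, so this reduction does not affect the conclusion.

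Now that $P(Y)$ is a genuine polynomial, I would read off its divisor from Remark~\ref{rem:Bi}: each $B_i(Y)$ is separable with simple zeroes exactly the points of $E_i = E \cap \ker\beta_i$, whence $\ord_y(P) = \sum_{i\,:\,y\in E_i} m_i$ for every $y \in \Fn$. By Lemma~\ref{lem:Bi} the isomorphism $\beta_{\leq\numvar} = (\beta_1,\ldots,\beta_\numvar) : E \to \F^\numvar$ carries each $E_i$ onto the coordinate hyperplane $\{x_i = 0\}$; hence $E_i \setminus \bigcup_{j\neq i} E_j$ has $(q{-}1)^{\numvar-1} \geq 1$ elements, and for any point $y_i$ in it one gets $\ord_{y_i}(P) = m_i$. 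Therefore $\gcd_{y\in\Fn}\ord_y(P)$ divides $\gcd(m_1,\ldots,m_\numvar)$. Finally, writing $\ue\cdot\uw_i = r\ell_i$ (legitimate since $\uw_i \in L$) and pairing the identity $r\uv = \sum_i m_i\uw_i$ with $\ue$, we obtain $r(\ue\cdot\uv) = r\sum_i m_i\ell_i$, so $\sum_i m_i\ell_i = \ue\cdot\uv \equiv 1 \pmod r$; any common divisor of the $m_i$ then divides an integer $\equiv 1 \pmod r$ and is thus coprime with $r$. Combining, $\gcd_{y\in\Fn}\ord_y(P)$ is coprime with $r$, and Lemma~\ref{lem:divisionalgebra} yields the claim.

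The one point requiring care is the reduction to the case $m_i \geq 0$, which is what guarantees that $P(Y) \in \Fn[Y]$ and hence that the integers $\ord_y(P)$ are genuinely the orders of vanishing that appear in Lemma~\ref{lem:divisionalgebra}; the verification that modifying $\uv$ within its coset modulo $L$ leaves the isomorphism class of $D_P$ unchanged (equivalently: that twisting the cyclic algebra by an $r$-th power from its centre changes nothing) is the only non-formal check. All the remaining ingredients --- the explicit form of $P$, the divisor of the $B_i$'s, and the count of points on coordinate hyperplanes of $\F^\numvar$ --- are immediate from Lemma~\ref{lem:Bi}, Remark~\ref{rem:Bi} and the definition of $\zeta$.
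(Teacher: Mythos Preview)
Your proof is correct and follows essentially the same route as the paper's: write $P(Y)=\prod_i B_i(Y)^{m_i}$, show $\gcd(m_1,\dots,m_\numvar,r)=1$ by pairing the relation $r\uv=\sum_i m_i\uw_i$ with $\ue$, exhibit a point $y_i$ lying on $E_i$ but on no other $E_j$ so that $\ord_{y_i}(P)=m_i$, and invoke Lemma~\ref{lem:divisionalgebra}. The only difference is that you insert an explicit reduction to the case $m_i\ge 0$ (via the cyclic-algebra isomorphism $T\mapsto \zeta(\uX^\ell)T$), whereas the paper simply asserts $P(Y)\in\Fn[Y]$, tacitly using the freedom in the initial choice of $\uv$ to arrange this.
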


\begin{proof}
Let $\mu_1, \ldots, \mu_\numvar \in \Z$ be the coordinates of $r \uv$ 
in the basis $(\uw_1, \ldots, \uw_\numvar)$, so that we have
$r \uv = \mu_1 \uw_1 + \cdots + \mu_\numvar \uw_\numvar$. Taking the scalar
product of this equality by $\ue$, we find the relation
\[
  \uv{\cdot}\ue 
  = \mu_1 \cdot \frac{\uw_1{\cdot}\ue} r 
    + \cdots +
    \mu_\numvar \cdot \frac{\uw_\numvar{\cdot}\ue} r.
\]
Note that each quotient $(\uw_i{\cdot}\ue)/r$ is an integer,
given that $\uw_i$ lies in $L$. Besides, by our choice of $\uv$,
we know that $\uv{\cdot}\ue \equiv 1 \pmod r$. Hence, we deduce
that $\gcd(\mu_1, \ldots, \mu_\numvar, r) = 1$.

Thanks to Lemma~\ref{lem:divisionalgebra}, it suffices to find 
elements $y_1, \ldots, y_\numvar \in \Fn$ such that the order of vanishing
of $P(Y)$ at $y_i$ is $\mu_i$ for all $i$.
For this, note that $P(Y) = \zeta(\uX^{r\uv}) = 
B_1(Y)^{\mu_1} \cdots B_\numvar(Y)^{\mu_\numvar}$.
Moreover, we know from
Remark~\ref{rem:Bi} that the $B_i(y)$'s are all separable. Therefore, 
it is enough to find $y_i \in \Fn$ which is a root of $B_i(Y)$, but 
not a root of the other $B_j(Y)$'s. An element satisfying these
requirements is, for example, $y_i = \beta_{\leq \numvar}^{-1}
(1, \ldots, 1, 0, 1, \ldots, 1)$ where the $0$ is in the $i$th 
position.
\end{proof}

We now aim at comparing evaluation points in the spirit of the
diagram~\eqref{diag:diagevalpoints}. In order to do so, we first
recall that, each time we are given an element $y \in \Fn$
such that $P(y) \neq 0$, we have an evaluation map
$\eta_y \:: \Lambda_P \longrightarrow \End_{\Fn}\big(\LLn\big)$
whose kernel is the principal twosided ideal generated by
$Y{-}y$.
We note that $\eta_y$ maps $B(Y)$ to the scalar multiplication
by $B(y)$. Thus, if $y$ is chosen outside the roots of $B(Y)$, 
the morphism $\eta_y$ extends to a second homomorphism of
$\Fn$-algebras
\[ \textstyle
  \Lambda_P\left[\frac 1{B(Y)}\right] 
  \longrightarrow \End_{\Fn}\big(\LLn\big)
\]
that, in a slight abuse of notation, we continue to denote by
$\eta_y$.

On the other hand, we recall from Subsection~\ref{subsec:evalOre} 
that, whenever we are given a group homomorphism $\gamma : L \to 
\F^\times$ together with a prolongation $\tilde \gamma : \Z^\numvar
\to \LL^\times$ satisfying the axiom~\eqref{cocycle}, we can
build an evaluation map
\[
  \epsilon_{\tilde \gamma} \:: \LL[\uX^{\pm 1}; \utheta]
  \longrightarrow \End_{\F}\big(\LL\big).
\]
By Theorem~\ref{th:evalisom}, we know moreover that the kernel of 
$\epsilon_{\tilde \gamma}$ is the twosided ideal generated by
the elements $\uX^{\uu} - \gamma(\uu)$, $\uu \in L$.
In particular, it only depends on $\gamma$, and not on the 
choice of the prolongation $\tilde \gamma$.

\begin{lemma}
\label{lem:diagram}
Keeping the previous notation and setting $y = 
\beta_{\leq \numvar}^{-1} (\gamma(\uw_1), \ldots, \gamma(\uw_\numvar))$, 
the diagram
\[
\xymatrix @C=8ex {
  \LL[\uX^{\pm 1}; \utheta] \ar[r]^-{\iota} \ar[d]_-{\epsilon_{\tilde \gamma}} 
    & \Lambda_P\left[\frac 1{B(Y)}\right] \ar[d]^-{\eta_y} \\
  \End_{\F}(\LL) \ar[r] & \End_{\Fn}(\LLn)
}
\]
commutes up to conjugacy, \ie there exists a $\Fn$-linear 
automorphism $h_{\tilde \gamma} : \LLn \to \LLn$ such that 
\[
  \eta_y \big(\iota(f)\big) = 
  h_{\tilde\gamma}^{-1} 
  \circ \big( \text{\rm id}_{\Fn} \otimes \epsilon_{\tilde \gamma}(f)\big)
  \circ h_{\tilde\gamma}
\]
for all $f \in \LL[\uX^{\pm 1}; \utheta]$.
\end{lemma}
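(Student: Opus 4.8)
The plan is to compare the two evaluation maps by identifying their kernels and then invoking Skolem--Noether, exactly as in the proofs of Theorem~\ref{th:Nrd} and Theorem~\ref{th:dimker}. First I would observe that both composites in the diagram are surjective ring homomorphisms onto a matrix algebra: the left--bottom path sends $f$ to $\text{id}_{\Fn}\otimes\epsilon_{\tilde\gamma}(f)$, which by Theorem~\ref{th:evalisom} is the reduction of $\LL[\uX^{\pm1};\utheta]$ modulo $\m_\gamma\LL[\uX^{\pm1};\utheta]$ base-changed to $\Fn$, hence surjective onto $\End_{\F}(\LL)\otimes_{\F}\Fn\simeq\End_{\Fn}(\LLn)$ (using that $\Fn$ and $\LL$ are linearly disjoint over $\F$, so $\LL\otimes_{\F}\Fn\simeq\LLn$); the right path is $\eta_y\circ\iota$, which is surjective because $\eta_y$ is an isomorphism after quotienting by $(Y-y)$ and $\iota$ is surjective onto $\Lambda_P[1/B(Y)]$.

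The heart of the argument is to check that the two maps have the \emph{same} kernel, so that they differ only by an inner automorphism of the common target $\End_{\Fn}(\LLn)$. For this I would trace through the construction of $\iota$: it factors as $\alpha^{-1}$ followed by $\zeta\otimes\text{id}$, and $\iota$ carries the ideal $\m_\gamma\LL[\uX^{\pm1};\utheta]$ to the ideal of $\Lambda_P[1/B(Y)]$ generated by $Y-y$. Indeed, $\m_\gamma$ is generated by the $\uX^{\uw_i}-\gamma(\uw_i)$, i.e.\ by $Z_i-\gamma(\uw_i)$ after the identification $\F[\uX^L]\simeq\F[\uZ^{\pm1}]$, and $\zeta$ sends $Z_i-\gamma(\uw_i)$ to $B_i(Y)-\gamma(\uw_i)=B_i(Y)-\beta_i(y)$, which vanishes at $Y=y$ precisely because $\beta_i(y)=B_i(y)$ (Lemma~\ref{lem:Bi}) and $y\in E$ by the very choice $y=\beta_{\leq\numvar}^{-1}(\gamma(\uw_1),\ldots,\gamma(\uw_\numvar))$. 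This is essentially the commutativity of diagram~\eqref{diag:diagevalpoints} applied to the centre, lifted to the Ore algebra; the one point requiring a little care is that I must argue these elements \emph{generate} the kernel ideal of $\eta_y$ (equivalently, that the induced map on quotients is an isomorphism of $r^2$-dimensional $\Fn$-algebras), which again follows by comparing $\Fn$-dimensions since both quotients are $\End_{\Fn}(\LLn)$.

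Once the kernels coincide, the quotient rule yields two $\Fn$-algebra isomorphisms $\LL[\uX^{\pm1};\utheta]/\m_\gamma\LL[\uX^{\pm1};\utheta]\otimes_{\F}\Fn\stackrel\sim\to\End_{\Fn}(\LLn)$ — one through $\text{id}_{\Fn}\otimes\epsilon_{\tilde\gamma}$ and one through $\eta_y\circ\iota$. By the Skolem--Noether theorem \cite[Theorem 7.21]{Reiner75}, any two $\Fn$-algebra isomorphisms between a central simple $\Fn$-algebra and $\End_{\Fn}(\LLn)\simeq M_r(\Fn)$ differ by an inner automorphism of $M_r(\Fn)$; hence there exists $h_{\tilde\gamma}\in\text{GL}_r(\Fn)=\text{Aut}_{\Fn}(\LLn)$ realizing the conjugacy stated in the lemma, uniformly in $f$. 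I expect the main obstacle to be the bookkeeping of the several identifications involved — $\F[\uX^L]\simeq\F[\uZ^{\pm1}]$, the factorization of $\iota$ through $\alpha$, and the linear-disjointness isomorphism $\LL\otimes_{\F}\Fn\simeq\LLn$ — and in particular verifying cleanly that $\iota(\m_\gamma\LL[\uX^{\pm1};\utheta])$ generates exactly $(Y-y)\Lambda_P[1/B(Y)]$; everything after that is a direct citation of Skolem--Noether, as in Theorems~\ref{th:Nrd} and~\ref{th:dimker}.
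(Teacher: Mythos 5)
Your proposal is correct and follows essentially the same route as the paper's proof: both vertical maps induce $\Fn$-algebra isomorphisms from the relevant quotients (by $\m_\gamma$ on the left, by $(Y-y)$ on the right) onto $\End_{\Fn}(\LLn)$, the commutativity of diagram~\eqref{diag:diagevalpoints} matches the two ideals, and Skolem--Noether supplies the conjugating automorphism. One minor caveat: $\iota$ is \emph{not} surjective onto $\Lambda_P\left[\frac{1}{B(Y)}\right]$ (on the centre its image is only the subalgebra generated by the $B_i(Y)^{\pm 1}$), but this does not affect the argument since, as you also note, the surjectivity of $\eta_y\circ\iota$ modulo $\m_\gamma$ follows from injectivity of the induced map together with the equality of $\Fn$-dimensions.
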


\begin{proof}
It follows from Theorem~\ref{th:evalisom} that,
after scalar extension to $\Fn$, $\epsilon_{\tilde \gamma}$ induces
an isomorphism of $\Fn$-algebras
\begin{equation}
\label{eq:evalpts:isom1}
  \text{\rm id}_{\Fn} \otimes \epsilon_{\tilde \gamma}(f) : \:
  \LLn[\uX^{\pm 1};\utheta] / \mathfrak{m}_\gamma \LLn[\uX^{\pm 1};\utheta] 
  \stackrel\sim\longrightarrow \End_{\Fn}(\LLn)
\end{equation}
where we recall that $\m_\gamma = \ker \epsilon_\gamma$.
On the other hand, we deduce from Equation~\eqref{eq:evalisomY}
that $\eta_y$ induces an isomorphism of $\Fn$-algebras
\[ \textstyle
  \eta_y : \,
  \Lambda_P\left[\frac 1{B(Y)}\right] / (Y-y) \Lambda_P\left[\frac 1{B(Y)}\right]
  \stackrel\sim\longrightarrow \End_{\Fn}(\LLn).
\]
Looking at the diagram~\eqref{diag:diagevalpoints}, we find that
the inverse image by $\iota$ of the ideal generated by $Y{-}y$ is
the ideal generated by $\m_\gamma$. As a consequence, the composite
$\eta_y \circ \iota$ induces another isomorphism
\begin{equation}
\label{eq:evalpts:isom2}
  \eta_y \circ \iota: \:
  \LLn[\uX^{\pm 1};\utheta] / \mathfrak{m}_\gamma \LLn[\uX^{\pm 1};\utheta] 
  \stackrel\sim\longrightarrow \End_{\Fn}(\LLn).
\end{equation}
We conclude by invoking the Skolem--Noether theorem, which ensures
that the isomorphisms~\eqref{eq:evalpts:isom1}
and~\eqref{eq:evalpts:isom2} have to be conjugated by an element
in $\text{GL}_{\Fn}(\LLn)$.
\end{proof}

\subsection{Comparison of codes}

After this long preparation, we are now ready to relate the
code $\LRM(\ue;\,\degpol S_{\underline \uw})$ to some well-chosen LAG
code. To start with, let us recall briefly from Subsection
\ref{ssec:construction} that the former is defined as the image 
of $\LL[\uX^{\pm 1}; \utheta]_{\degpol S_{\underline \uw}}$ under the
multievaluation morphism
\[
\begin{array}{rcl}
  \epsilon :\,
  \LL[\uX^{\pm 1}; \utheta]
& \longrightarrow 
& \prod_{\gamma \in H} \End_{\F}(\LL) 
  \smallskip \\
  f & \mapsto & \epsilon_{\tilde \gamma}(f)
\end{array}
\]
where $H = \Homgrp(L, \F^\times)$ and for each $\gamma \in H$,
we have chosen a cocycle $\tilde \gamma : \Z^\numvar \to \LL^\times$
extending $\gamma$. Recall also that, in what precedes, 
$\LL[\uX^{\pm 1}; \utheta]_{\degpol S_{\underline \uw}}$ is the
$\LL$-linear subspace of $\LL[\uX^{\pm 1}; \utheta]$ spanned
by the monomial $\uX^{\uu}$, $\uu \in \degpol S_{\underline \uw}$.

We now contemplate the commutative diagram of 
Lemma~\ref{lem:diagram}. Taking the product over all 
$\gamma \in H$, the
following diagram also commutes up to conjugacy:
\[
\xymatrix @C=8ex {
  \LL[\uX^{\pm 1}; \utheta] \ar[r]^-{\iota} \ar[d]_-{\epsilon} 
    & \Lambda_P\left[\frac 1{B(Y)}\right] \ar[d]^-{\eta = (\eta_y)_{y \in I}} \\
  \displaystyle
  \prod_{\gamma \in H} \End_{\F}(\LL)
  \ar[r] & 
  \displaystyle
  \prod_{y \in I} \End_{\Fn}(\LLn)
}
\]
where the index set $I$ is $\beta_{\leq \numvar}^{-1}\big((\F^\times)^\numvar\big)$ 
and the arrow on the bottom is induced by the correspondence $H \simeq
I$, $\gamma \mapsto y 
= \beta_{\leq \numvar}^{-1}(\gamma(\uw_1), \ldots, \gamma(\uw_\numvar))$.
As a consequence, if we can prove that $\iota$ takes 
$\LL[\uX^{\pm 1}; \utheta]_{\degpol S_{\underline \uw}}$ to some explicit
Riemann--Roch space inside $\Lambda_P\left[\frac 1{B(Y)}\right]$, we 
will infer a relation between $\LRM(\ue;\,\degpol S_{\underline \uw})$
and a suitable LAG code. This is achieved in the next lemma.

\begin{lemma}
We have
$\iota\big(\LL[\uX^{\pm 1}; \utheta]_{\degpol S_{\underline \uw}}\big)
 \subset \Lambda_P\big(q^{\numvar-1}r\degpol\big)$.
\end{lemma}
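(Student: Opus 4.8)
The plan is to track a monomial $\uX^{\uu}$ with $\uu \in \degpol S_{\underline\uw}$ through the isomorphism $\alpha^{-1}$ followed by $\zeta \otimes \mathrm{id}$, and to bound the resulting element inside $\Lambda_P\big[\tfrac 1{B(Y)}\big]$ against the defining inequalities of the Riemann--Roch space $\Lambda_P(v)$ from Definition~\ref{def:RR} (or, more conveniently, the simpler space $\tilde\Lambda_P(v)$ from the subsequent remark). Write $\uu = \lambda_1 \uw_1 + \cdots + \lambda_\numvar \uw_\numvar$ with $\lambda_i \in \R^+$ and $\lambda_1 + \cdots + \lambda_\numvar \leq \degpol$; since $\uu$ is integral and the $\uw_i$ form a $\Z$-basis of $L$, in fact $\uu \in \degpol S_{\underline\uw} \cap \Z^\numvar$ means the $\lambda_i$ can be taken in $\N$ with $\sum_i \lambda_i \leq \degpol$. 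But a general $\uu \in \Z^\numvar$ is not in $L$; it lies in some coset of $L$, so first I would decompose $\uu = \ell + j\uv$ where $\ell \in L$ and $0 \leq j < r$ is the residue of $\ue{\cdot}\uu$ modulo $r$. Under $\alpha^{-1}$, the monomial $\uX^{\uu} = \uX^{\ell}\uX^{j\uv}$ corresponds (up to the relation $T^r = \uX^{r\uv}$) to $\uX^{\ell} \otimes T^j$, so $\iota(\uX^{\uu}) = \zeta(\uX^{\ell}) \cdot T^j$, an element of the form $u(Y)\,T^j$ with $u(Y) \in \Fn\big[Y, \tfrac 1{B(Y)}\big]$.

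The heart of the argument is then a degree estimate on $\zeta(\uX^{\ell})$ as a rational function in $Y$. Recall $\zeta(Z_i) = B_i(Y)$ and each $B_i(Y)$ has degree exactly $q^{\numvar-1}$ by Remark~\ref{rem:Bi}. Writing $\ell = n_1 \uw_1 + \cdots + n_\numvar \uw_\numvar$ with $n_i \in \Z$ (the coordinates of $\ell$ in the $\underline\uw$-basis), we get $\zeta(\uX^{\ell}) = \prod_i B_i(Y)^{n_i}$, whose numerator-minus-denominator degree is $q^{\numvar-1}\big(n_1 + \cdots + n_\numvar\big)$. Similarly $P(Y) = \zeta(\uX^{r\uv}) = \prod_i B_i(Y)^{\mu_i}$ has degree $q^{\numvar-1}(\mu_1 + \cdots + \mu_\numvar)$, where the $\mu_i$ are the coordinates of $r\uv$. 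Now the key bookkeeping: the condition $\uu \in \degpol S_{\underline\uw}$ says exactly that $\sum_i$ (coordinate of $\uu$) $\leq \degpol$; and since $\uu = \ell + j\uv$, while $\uv$ itself is $\tfrac 1r (\mu_1\uw_1 + \cdots + \mu_\numvar\uw_\numvar)$, we obtain $\sum_i n_i + \tfrac j r \sum_i \mu_i \leq \degpol$, i.e. $r\sum_i n_i + j\sum_i \mu_i \leq r\degpol$. Multiplying by $q^{\numvar-1}$, this reads precisely
\[
  r\cdot\big(\deg u(Y) - \deg v(Y)\big) + j \cdot \deg P(Y) \leq q^{\numvar-1} r \degpol,
\]
which is the middle condition in Definition~\ref{def:RR} with $v = q^{\numvar-1}r\degpol$. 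The other two conditions ($\gcd(u_i,v_i)=1$ and $v_i^r \mid P^i$) hold because the denominators arising are products of powers of the $B_i(Y)$, which divide $P(Y)$, and one can normalise; alternatively, one clears denominators using $B(Y)^N$ for large $N$ and lands directly in $\tilde\Lambda_P\big(q^{\numvar-1}r\degpol\big) \subset \Lambda_P\big(q^{\numvar-1}r\degpol\big)$ after checking the same degree bound survives — but I prefer the cleaner $\Lambda_P$ version since the $v_i^r \mid P^i$ divisibility is automatic here.

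The main obstacle I anticipate is the careful handling of the coset representative $\uv$ and the rounding: $\uv \notin L$, so its "coordinates" $\mu_i/r$ in the $\underline\uw$-basis are rational, and one must be scrupulous that the integrality constraints on $\uu$ (it is an honest lattice point of $\Z^\numvar$, hence the pair $(\ell, j)$ with $\ell \in L$ integral and $0 \le j < r$ is forced) feed correctly into the inequality so that no spurious factor of $r$ or loss of a $q^{\numvar-1}$ creeps in. A secondary subtlety is making sure that the passage $\Fn \otimes_\F \LL \simeq \LLn$ and the identification of the codomain of $\iota$ with $\Lambda_P\big[\tfrac 1{B(Y)}\big]$ (as established in Subsection~\ref{ssec:LAG:extOre}) is compatible with the monomial description above — but this is bookkeeping already done in the text, so it amounts to citing it. Once the degree inequality is in place for each monomial $\uX^{\uu}$, $\uu \in \degpol S_{\underline\uw} \cap \Z^\numvar$, linearity of $\iota$ and the fact that $\Lambda_P\big(q^{\numvar-1}r\degpol\big)$ is an $\LLn$-subspace (hence closed under $\LL$-linear combinations) give the claimed inclusion.
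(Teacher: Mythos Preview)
Your proposal is correct and follows essentially the same route as the paper: reduce by linearity to monomials $\uX^{\uu}$, write $\uu = \ell + j\uv$ with $\ell \in L$ and $0 \le j < r$, compute $\iota(\uX^{\uu}) = \prod_i B_i(Y)^{n_i}\,T^j$, and then translate the simplex constraint $\sum_i\big(n_i + \tfrac{j}{r}\mu_i\big) \le \degpol$ into the Riemann--Roch degree inequality via $\deg B_i = q^{\numvar-1}$ and $\deg P = q^{\numvar-1}\sum_i \mu_i$. The paper carries out exactly this bookkeeping (with the notation $\uw,\lambda,\lambda_i$ in place of your $\ell,j,n_i$) and, like you, checks only the degree condition explicitly; your identification of the nonnegativity/divisibility verification as the residual subtlety is apt, and your treatment of it is at the same level of detail as the paper's.
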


\begin{proof}
By linearity, it is enough to prove that
$\iota(\uX^{\uu}) \in \Lambda_P\big(q^{\numvar-1}r\degpol\big)$ for all $\uu \in \degpol 
S_{\underline \uw}$. For this, we write
$\uu = \uw + \lambda \uv$
with $\uw \in L$, $\lambda \in \{0, \ldots, 
r{-}1\}$, and where $\uv$ is the special
vector we fixed at the beginning of Subsection~\ref{ssec:LAG:extOre}. 
We decompose $\uw$ and $r \uv$, which are both elements of $L$, on
the basis $(\uw_1, \ldots, \uw_\numvar)$, namely we write
$\uw = \lambda_1 \uw_1 + \cdots + \lambda_\numvar \uw_\numvar$ and
$r \uv = \mu_1 \uw_1 + \cdots + \mu_\numvar \uw_\numvar$, where the $\lambda_i$
and the $\mu_i$ are integers.
From these equalities, we derive
\[
  \uu = \left(\lambda_1 - \frac{\lambda}{r} \mu_1\right) \uw_1
      + \cdots +
        \left(\lambda_\numvar - \frac{\lambda}{r} \mu_\numvar\right) \uw_\numvar
\]
and the assumption that $\uu \in \degpol S_{\underline w}$ tells us that
\begin{equation}
\label{eq:boundval}
  r\lambda_i - \lambda \mu_i \geq 0 \quad \text{for all } i
  \qquad \text{and} \qquad
  \sum_{i=1}^\numvar r \lambda_i - \lambda \mu_i \leq r\degpol.
\end{equation}
On the other hand, it follows from the definition of $\iota$ that
\[
  \iota(\uX^{\uu}) =
  \iota(\uX^{\uw_1})^{\lambda_1} \cdots
  \iota(\uX^{\uw_\numvar})^{\lambda_\numvar} \cdot \iota(\uX^{\uv})^\lambda
  = B_1(Y)^{\lambda_1} \cdots B_\numvar(Y)^{\lambda_\numvar} \cdot T^\lambda.
\]
Hence, in order to prove that $\iota(\uX^{\uu}) \in 
\Lambda_P\big(q^{\numvar-1}r\degpol\big)$, we have to check that
\[
  - \lambda{\cdot}\deg P(Y) +  r{\cdot}\sum_{i=1}^\numvar \lambda_i \deg B_i(Y)
 \leq q^{\numvar-1} r\degpol.
\]
This follows directly from Equation~\eqref{eq:boundval}  after
remembering that $\deg B_i(Y) \leq q^{\numvar-1}$ for all $i$ (see
Lemma~\ref{lem:Bi}) and that 
$P(Y) = \iota(\uX^{r\uv}) = B_1(Y)^{\mu_1} \cdots B_\numvar(Y)^{\mu_\numvar}$,
which ensures that $\deg P(Y) = \sum_{i=1}^\numvar \mu_i \deg B_i(Y)$.
\end{proof}

Finally, we have proved the following theorem.

\begin{theorem}
\label{th:embedLAG}
With the previous notation, the code
$\Fn\otimes_{\F}\LRM\big(\ue; \degpol S_{\underline \uw}\big)$
is isomorphic to a subcode of
$\LAG\big(P(Y); q^{\numvar-1}r\degpol; I\big)$.
Moreover the isomorphism is explicit and it preserves the
sum-rank distance.
\end{theorem}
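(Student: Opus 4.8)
The plan is to read the theorem off the machinery built in this section, with no genuinely new computation. Three facts do the work: the product form of the diagram of Lemma~\ref{lem:diagram}, the inclusion $\iota\big(\LL[\uX^{\pm 1}; \utheta]_{\degpol S_{\underline \uw}}\big)\subset\Lambda_P\big(q^{\numvar-1}r\degpol\big)$ established in the lemma just above, and the elementary fact that componentwise conjugation by an invertible operator preserves the rank of each component, hence the sum-rank weight.

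First I would fix, for each $\gamma\in H$, a conjugator $h_{\tilde\gamma}$ as supplied by Lemma~\ref{lem:diagram}, and observe that $h_{\tilde\gamma}$ is automatically a scalar multiplication operator $m_{v_\gamma}$ for some $v_\gamma\in\LLn^\times$. Indeed, the two $\Fn$-algebra isomorphisms $\mathrm{id}_{\Fn}\otimes\epsilon_{\tilde\gamma}$ and $\eta_{y}\circ\iota$ being compared both send each element $c$ of the common subfield $\LLn$ to multiplication by $c$; hence their ratio, which is conjugation by $h_{\tilde\gamma}$, fixes the subalgebra $\{m_c : c\in\LLn\}$ of $\End_{\Fn}(\LLn)$, and that subalgebra (being a maximal subfield) is its own centralizer, so $h_{\tilde\gamma}$ lies in it. The point of this observation is that the componentwise conjugation $c_h\colon\psi\mapsto h^{-1}\psi h$ associated with $h=(h_{\tilde\gamma})_{\gamma\in H}$ is then $\LLn$-linear for the scalar action (post-composition) carrying the sum-rank metric on $\prod_{y}\End_{\Fn}(\LLn)$.

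Next I would take the product over $\gamma\in H$ of the diagram of Lemma~\ref{lem:diagram}, obtaining the componentwise identity
\[
  \eta\big(\iota(f)\big)\;=\;h^{-1}\cdot j\big(\epsilon(f)\big)\cdot h
  \qquad\text{for all }f\in\LL[\uX^{\pm 1}; \utheta],
\]
where $j\colon\prod_{\gamma\in H}\End_{\F}(\LL)\to\prod_{y\in I}\End_{\Fn}(\LLn)$ applies the functor $\Fn\otimes_{\F}(-)$ in each factor and relabels $H\simeq I$ via $\gamma\mapsto y=\beta_{\leq\numvar}^{-1}(\gamma(\uw_1),\dots,\gamma(\uw_\numvar))$. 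Restricting to $V=\LL[\uX^{\pm 1}; \utheta]_{\degpol S_{\underline \uw}}$ and passing to $\LLn$-spans: the span of $j(\epsilon(V))$ is exactly $\Fn\otimes_{\F}\LRM(\ue;\,\degpol S_{\underline\uw})$, and by the displayed identity $c_h$ carries it onto the span of $\eta(\iota(V))$. The lemma just above gives $\iota(V)\subset\Lambda_P(q^{\numvar-1}r\degpol)$, whence $\eta(\iota(V))\subset\eta\big(\Lambda_P(q^{\numvar-1}r\degpol)\big)=\LAG\big(P(Y);\,q^{\numvar-1}r\degpol;\,I\big)$, and the same holds after taking the $\LLn$-span because the LAG code is $\LLn$-linear. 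Thus $c_h$ identifies $\Fn\otimes_{\F}\LRM(\ue;\,\degpol S_{\underline\uw})$ with a $\LLn$-linear subcode of $\LAG\big(P(Y);\,q^{\numvar-1}r\degpol;\,I\big)$; and since $c_h$ is a componentwise conjugation followed by a permutation of the factors, it preserves the sum-rank weight, so this isomorphism preserves the sum-rank distance. Explicitness is inherited from the construction: the $B_i$ come from Lemma~\ref{lem:Bi}, $P(Y)=B_1(Y)^{\mu_1}\cdots B_\numvar(Y)^{\mu_\numvar}$, the morphisms $\zeta$, $\alpha$, $\iota$ and the evaluations $\eta_y$ are given by explicit formulas, and the scalars $v_\gamma$ solve norm equations in $\LLn$.

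The step I expect to require the most care is not a deep one but the bookkeeping around the scalar extension $\F\to\Fn$: the morphism $\iota$ is only $\F$-linear and carries $\LL$ into $\LLn$, so one must be scrupulous about when precisely to pass to $\LLn$-spans, about identifying $\Fn\otimes_{\F}\HH$ with $\prod_{y\in I}\End_{\Fn}(\LLn)$ compatibly with the relabelling $H\simeq I$, and about the two small centralizer/linearity verifications used above. None of this presents a genuine obstacle once the setup of this section is in hand, and the theorem then follows from the diagram together with the inclusion lemma.
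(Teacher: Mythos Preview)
Your proposal is correct and follows the same approach as the paper: the theorem is stated immediately after the product form of the diagram of Lemma~\ref{lem:diagram} and the inclusion lemma, with the phrase ``Finally, we have proved the following theorem,'' so the argument is precisely the assembly you describe. Your additional observation that the conjugators $h_{\tilde\gamma}$ must be scalar multiplications (via the centralizer of $\LLn$ inside $\End_{\Fn}(\LLn)$), and hence that the resulting isometry is $\LLn$-linear rather than merely $\Fn$-linear, is a genuine refinement that the paper does not make explicit; it is correct and worth keeping, since without it ``isomorphic to a subcode'' is only secured as an $\Fn$-linear isometry.
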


We emphasize that the theorem is valid for any integer $\extlag \geq \numvar$
which is coprime with~$r$; in particular, we can always choose $\extlag$
in the range $[\numvar, \numvar{+}r]$.
On a different note, it is also instructive to compare the 
designed minimum distances of $\LRM\big(\ue; \degpol S_{\underline 
\uw}\big)$ and $\LAG\big(P(Y); q^{\numvar-1}r\degpol; I\big)$. After
Theorem~\ref{th:codeparameters2} and Equation~\eqref{eq:LAG:designed},
we have the following explicit values for them:
\begin{align*}
  \mindist^\star\big(\LRM\big(\ue; \degpol S_{\underline \uw}\big)\big) & 
  = (q{-}1)^\numvar r - (q{-}1)^{\numvar-1}r\degpol, \\
  \mindist^\star\big(\LAG\big(P(Y);\, q^{\numvar-1}r\degpol;\, I\big)\big) & 
  = (q{-}1)^\numvar r - q^{\numvar-1}r\degpol.
\end{align*}
We observe that they almost coincide apart from the factor $(q{-}1)^{\numvar-1}$ 
which is replaced by $q^{\numvar-1}$ in the second case. The conclusion is 
that the embedding of $\LRM\big(\ue; \degpol S_{\underline \uw}\big)$
in $\LAG\big(P(Y); q^{\numvar-1}r\degpol; I\big)$ does not alter too much
the (designed) minimum distance. Hence, any efficient decoder
for linearized Algebraic Geometry codes will provide a barely 
less efficient decoder for linearized Reed--Muller codes of the 
type $\LRM\big(\ue; \degpol S_{\underline \uw}\big)$.

\begin{remark}
In the ``almost commutative'' case (see \S \ref{sssec:ac}), the
embedding of Theorem~\ref{th:embedLAG} extends to an embedding
(up to conjugacy)
\[
\Fn \otimes_{\F} \widetilde{\LRM}\big(\ue; \degpol S_{\underline \uw}\big)
\hookrightarrow
\LAG\big(P(Y); q^{\numvar-1}r\degpol; \tilde I\:\big)
\]
with $\tilde I = \beta_{\leq \numvar}^{-1}(\F^{\numvar-1} \times \F^\times)$. The
designed minimum distance of the involved LAG code is now
\[
  \mindist^\star\big(\LAG\big(P(Y);\, q^{\numvar-1}r\degpol;\, \tilde I\:\big)\big) 
  = q^{\numvar-1} (q{-}1) r - q^{\numvar-1}r\degpol = q^{\numvar-1}r \cdot (q{-}1{-}\degpol)
\]
which meets the designed distance of the extended linearized
Reed--Muller code. In this case, the drop on the minimum distance
has then been absorbed.
\end{remark}

\subsection*{Acknowledgements}  This work was partially funded by the grants ANR-21-CE39-0009-BARRACUDA and ANR-22-CPJ2-0047-01. The first author thanks Cícero Carvalho for suggesting to work on Reed--Muller codes in the sum-rank metric, during the International Conference on Algebraic Geometry, Coding Theory and Combinatorics in honour of Sudhir Ghorpade. 

\bibliography{biblio_lrm}
\bibliographystyle{plain}
\end{document}